\newcommand{\C}{\mathbb{C}}
\newcommand{\N}{\mathbb{N}}
\newcommand{\Hil}{\mathcal{H}}
\newcommand{\A}{\mathscr{A}}
\newcommand{\leftset}{\left\{\left.}
\newcommand{\midsetl}{\right\vert\left.}
\newcommand{\midsetr}{\right.\left\vert}
\newcommand{\rightset}{\right.\right\}}
\newcommand{\style}{\mathcal}
\newcommand{\tr}{\text{Tr}}
\newcommand{\Tr}{\tr}
\newcommand{\UCP}{\Ep}
\newcommand{\Span}{\rm{span}}
\newcommand{\alg}{\rm{alg}}
\newcommand{\B}{\mathscr{B}}
\newcommand{\Ep}{\mathcal{E}}
\newcommand{\Md}{\mathcal{M}_{d}}
\newcommand{\Me}{\mathcal{M}_{\mathcal{E}}}
\newcommand{\M}{\mathcal{M}}
\newcommand\numberthis{\addtocounter{equation}{1}\tag{\theequation}}
\newtheorem{definition}{Definition}[section]
\newtheorem{theorem}[definition]{Theorem}
\newtheorem{lemma}[definition]{Lemma}
\newtheorem{corollary}[definition]{Corollary}
\theoremstyle{definition}
\newtheorem{remark}[definition]{Remark}
\newtheorem{cor}{Corollary}[section]
\newtheorem{prop}{Proposition}[section]
\providecommand{\keywords}[1]{\textit{Key words:} #1}
\title{Spectral Properties of Tensor Products of Channels}
\author{Sam Jaques and Mizanur Rahaman}
\begin{document}
\maketitle
\begin{abstract}
We investigate spectral properties of the tensor products of two completely positive and trace preserving linear maps (also known as quantum channels) acting on matrix algebras. This leads to an important question of when an arbitrary subalgebra can split into the tensor product of two subalgebras. We show that for two unital quantum channels the multiplicative domain of their tensor product splits into the tensor product of the individual multiplicative domains. Consequently, we fully describe the fixed points and peripheral eigen operators of the tensor product of channels. Through a structure theorem of maximal unital proper *-subalgebras (MUPSA) of 
a matrix algebra we provide a non-trivial upper bound of the recently-introduced multiplicative index of a unital channel. This bound gives a criteria on when a channel cannot be factored    
into a product of two different channels. We construct examples of channels which cannot be realized as a tensor product of two channels in any way. With these techniques and results, we found some applications in quantum information theory.
\end{abstract}
\footnote{
\keywords{Quantum channel; Multiplicative domain;  Spectral property; Tensor product; Fixed points}}
\section{Introduction}
If we have a linear map acting on a matrix algebra that can be expressed as a tensor product of matrix algebras, and the map itself can be expressed as a tensor product of two other linear maps, there may be few similarities between the constituent maps and the larger linear map they produce. If we restrict ourselves to special classes of linear maps and special domains of matrix algebras, then the tensor product adds no extra complexity. Our goal in this paper is to use the multiplicative domain to characterize some of these properties for trace-preserving, completely positive maps on matrix algebras. These maps are also known as quantum channels, which we refer to as channels.\par
The multiplicative domain of a linear map  $\UCP:\M_d\rightarrow \M_d$ is the set of all matrices $x\in \M_d$ such that, for all $y\in \M_d$, $\UCP(xy)=\UCP(x)\UCP(y)$ and $\UCP(yx)=\UCP(y)\UCP(x)$. 
When the linear map is completely positive, this specific set has received much attention in  operator theory and operator algebras (\cite{choi1}, \cite{paulsen}-chapter 4, \cite{stormer}-section 2.1). In this context, it characterizes certain distinguishability measures. A completely positive linear map acts like a homomorphism on the multiplicative domain, and hence studying this domain can reveal structure and properties of the linear map.\par
In quantum information theory (\cite{johnston}, \cite{choi2}, \cite{kribs-spekkens}), the multiplicative domain contains the unitarily correctable codes and noiseless subsystems. Studying the multiplicative domain of tensor products sheds light on error correction in bipartite systems.\par
It turns out that we can capture most of the spectral properties of the tensor product of channels simply by investigating the multiplicative behavior. Note that the spectral properties of a channel acting on one copy of a quantum system have been well explored 
(\cite{wolf}, \cite{ergodic}, \cite{evans-krohn}, \cite{inverse-eigenvalue}) for various purposes, mainly in an effort to understand the dynamics of a system evolving through  quantum measurements. In quantum dynamical systems, the ergodicity of a channel \cite{ergodic} and its decoherence-free subspaces \cite{veronica} are important spectral properties . When the underlying domain is a bipartite system, the spectral properties of product channels can be hard to analyze, but we can use the multiplicative domain as a tool to understand them.\par
As in previous work on quantum error correction (e.g., \cite{johnston}\cite{kribs-spekkens}), we restrict our focus to unital channels because the multiplicative domain has less structure in non-unital channels. In particular, the multiplicative domain of a unital channel can be described using the Kraus operators. Using this, we can characterize certain channels and derive facts beyond the multiplicative structure. 
\par
The paper is organized as follows:
firstly, in Section \ref{sec:mutiplicatve_domain_of_products}, we show that the multiplicative domain of a tensor product of unital channels ``splits" nicely with the tensor product. We use this to prove that the peripheral spectra of two unital channels will precisely determine whether the fixed points of their tensor product will also split or not.
This analysis provides the necessary and sufficient condition on when the tensor product of two ergodic (or primitive) channels is again ergodic (or primitive).
Here we recapture some of the results obtained by 
\cite{Luczak}, \cite{Watanabe} in a very different way based on the analysis of multiplicative domain. 
\par
Since \cite{miza} showed that repeated applications of a finite-dimensional channel produces a chain in the lattice of unital *-subalgebras of $\M_d$, we characterize such algebras in Section \ref{sec:MUPSAS}. This provides an easy way to enumerate the lattice of unital *-subalgebras of $\M_d$, as well as providing a limit on the length of chains in the lattice that is linear in the dimension. This finding can be of independent interest because it provides a finer analysis of the structure of unital *-subalgebras in $\M_d$.  
In turn, this allows us to use the multiplicative index, introduced in \cite{miza}, to show that certain channels cannot be product channels. We give examples of channels with large multiplicative indices in Sections \ref{sec:ETB_channels} and \ref{sec:Schur_channels}, thus showing that these cannot be product channels.\par
Next, in Section \ref{Sec:strictly-contractive}, we consider channels which are strictly contractive with respect to some distinguishability measures that frequently arise in information theory. Using our results in the previous sections we prove that the tensor product of two strictly contractive channels with respect to certain distinguishability
measures is again strictly contractive provided the measures allow recovery maps. We make use of the reversibility and monotonicity properties of these measures under channels, which is a wide topic of current research (\cite{Jencova1}, \cite{Jencova2}, \cite{f-div1}, \cite{f-div2}).   
\par
As a final application, we show that unitary-correctable quantum codes (UCC) gain nothing through tensor products. 

\subsection{Background and Notation}
Throughout this paper we will use the following notation:
\begin{itemize}
	\item
	$\UCP,\Psi$ will refer to quantum channels, that is, completely positive, trace-preserving linear operators from $B(\Hil)$ to $B(\Hil)$ for some finite dimensional Hilbert space $\Hil$. In this paper we  identify $B(\Hil)$ with $\M_d$, the $d\times d$ complex matrices. It is well known that a quantum channel $\UCP:\M_d\rightarrow\M_d$ is always represented by a set of (non-unique) Kraus operators $\{a_{j}\}_{j=1}^{n}$ in $\M_d$ such that for all $x\in \M_d$, we have 
	\[\UCP(x)=\displaystyle\sum_{j=1}^{n}a_{j}xa_{j}^*,\]
where $\displaystyle\sum_{j=1}^{n}a_{j}^*a_{j}=1$. Here $1$ represents the identity matrix in $\M_d$. In the dual picture $\UCP$ is realized as a unital completely positive map and denoted by $\UCP^*$ acting on $\M_d$ again and satisfying the relation
\[\Tr(\UCP(x)y)=\Tr(x\UCP^*(y)),\]
for every $x,y\in \M_d$. This linear map $\UCP^*$ is the adjoint of $\UCP$ with respect to the Hilbert-Schmidt inner product which is defined as $\langle a,b\rangle_{HS}=\tr(ab^*)$, for all $a,b\in \M_d$.\par
An important note is that many papers work in the dual framework, where a quantum channel is necessarily unital but may not be trace-preserving. Hence, these papers refer to these maps as unital channels, or UCP maps. In our work, where all channels are trace-preserving, unitality is an extra condition that limits our results to a particular subset of quantum channels.
\item
Lowercase letters from the end of the Latin alphabets, $x,y,z$, will refer to matrices in $\M_d$. The letters $p,q$ will refer to projections in $\M_d$.
\item
Greek letters, $\varphi,\zeta$ will refer to either vectors in $\C^d$ or partitions of $\{1,\cdots,n\}$ for $n\in\N$. We will use $[n]$ to denote the set $\{1,\cdots,n\}$.
\item
Stylized letters from the beginning of the Latin alphabet, $\A,\B,\mathscr{C}$, will refer to sub-algebras of $\M_d$. For a set $S$, the algebra generated by $S$ will be denoted $\text{alg}(S)$ and the *-algebra generated by $S$ will be denoted $\text{alg}^*(S)$.
\item
For a quantum channel $\UCP$, $\M_\UCP$ denotes the multiplicative domain and also 
$\rm{Fix}_{\UCP}$ denotes the set of fixed points of $\UCP$, that is, 
\[\rm{Fix}_{\UCP}=\{a\in \M_d \ | \ \UCP(a)=a\}.\]  
\end{itemize}

There are a number of useful characterizations of the multiplicative domain we will use extensively.

\begin{theorem}[See \cite{choi1}]\label{choi}
	For a unital completely positive map $\UCP:\M_d\rightarrow \M_d$, the multiplicative domain $\M_\UCP$ is a C$^*$-subalgebra of $\M_d$ and moreover, it is equal to the following set:
	\[\M_\UCP=\{x\in \M_d \ | \ \UCP(x^*x)=\UCP(x)^*\UCP(x), \ \UCP(xx^*)=\UCP(x)\UCP(x^*)\}.\]
\end{theorem}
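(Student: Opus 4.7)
The plan is to reduce the equality of two distinct closed sets to a single statement about a Stinespring dilation, where the Kadison--Schwarz inequality becomes an equality precisely when a certain projection annihilates a certain vector. The workhorse is the Stinespring representation: write $\UCP(x) = V^* \pi(x) V$ for some $*$-homomorphism $\pi : \M_d \to B(\mathcal K)$ and some isometry $V : \C^d \to \mathcal K$, which exists because $\UCP$ is unital and completely positive. Denote the candidate set on the right-hand side by $\mathcal D$, so we must show $\mathcal D = \M_\UCP$ and that this common set is a C$^*$-subalgebra.

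For the inclusion $\M_\UCP \subseteq \mathcal D$, note that any positive linear map (in particular every UCP map) is $*$-preserving, so $\UCP(x)^* = \UCP(x^*)$. Thus if $x \in \M_\UCP$, the definition directly gives $\UCP(x^*x) = \UCP(x^*)\UCP(x) = \UCP(x)^*\UCP(x)$, and similarly for $xx^*$. For the reverse inclusion I would take $x \in \mathcal D$ and compute
\[
V^* \pi(x)^* (1 - VV^*) \pi(x) V \;=\; \UCP(x^*x) - \UCP(x)^*\UCP(x) \;=\; 0.
\]
Since $1 - VV^*$ is a positive operator, extracting its square root yields $(1-VV^*)^{1/2}\pi(x)V = 0$, hence $\pi(x)V = VV^*\pi(x)V$. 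Applying the same argument to the second defining equation of $\mathcal D$ (which is the first equation evaluated at $x^*$, after using $*$-preservation) gives $\pi(x^*)V = VV^*\pi(x^*)V$, and taking adjoints produces $V^*\pi(x) = V^*\pi(x)VV^*$. With these two identities in hand, for any $y \in \M_d$ I would compute
\[
\UCP(x)\UCP(y) \;=\; V^*\pi(x)\,VV^*\,\pi(y)V \;=\; V^*\pi(x)\pi(y)V \;=\; V^*\pi(xy)V \;=\; \UCP(xy),
\]
and symmetrically $\UCP(yx) = \UCP(y)\UCP(x)$ by substituting $\pi(x)V = VV^*\pi(x)V$ on the right. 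Hence $x \in \M_\UCP$.

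Having established the set-theoretic equality, the C$^*$-subalgebra structure follows almost formally. Linearity and the adjoint condition are immediate from the characterization via $\mathcal D$, since that set is clearly closed under scalar multiplication, addition (via bilinearity of the multiplicative defect and the polarization identity, if one wants to check sums carefully), and the involution (because the two defining equations are swapped under $x \mapsto x^*$). Closure under multiplication follows by a short chain of substitutions: if $x, y \in \M_\UCP$ and $z \in \M_d$, then $\UCP((xy)z) = \UCP(x)\UCP(yz) = \UCP(x)\UCP(y)\UCP(z) = \UCP(xy)\UCP(z)$, and similarly on the other side. Norm closedness comes from the continuity of $x \mapsto \UCP(x^*x) - \UCP(x)^*\UCP(x)$ on the finite-dimensional algebra $\M_d$.

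The main obstacle is the reverse inclusion, specifically the step that converts an equality of scalars under the dilation into the operator identity $\pi(x)V = VV^*\pi(x)V$. The subtlety is that each of the two hypotheses in $\mathcal D$ only controls one side of the product, so a naive attempt using a single equation will not yield both $\pi(x)V = VV^*\pi(x)V$ and its conjugate counterpart $V^*\pi(x) = V^*\pi(x)VV^*$ needed to interpolate the projection $VV^*$ between $\pi(x)$ and $\pi(y)$ in the computation of $\UCP(x)\UCP(y)$. Keeping track of which hypothesis supplies which identity is where one must be careful.
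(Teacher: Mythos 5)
The paper does not actually prove this theorem; it is imported verbatim from Choi's paper \cite{choi1} as background, so there is no in-paper argument to compare against. Judged on its own, your proof is correct. The Stinespring route is the standard modern one: writing $\UCP(x)=V^*\pi(x)V$ with $V$ an isometry turns the defect $\UCP(x^*x)-\UCP(x)^*\UCP(x)$ into $V^*\pi(x)^*(1-VV^*)\pi(x)V$, and the factorization through $(1-VV^*)^{1/2}$ correctly yields $\pi(x)V=VV^*\pi(x)V$; you are also right, and careful, about the bookkeeping that the condition on $x^*x$ controls products of the form $yx$ while the condition on $xx^*$ (equivalently, its adjoint identity $V^*\pi(x)=V^*\pi(x)VV^*$) controls $xy$ — both hypotheses are genuinely needed and you use each in the right place. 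Choi's original argument instead runs through the Schwarz inequality for $2$-positive maps applied to the block matrix $\bigl(\UCP(x_i^*x_j)-\UCP(x_i)^*\UCP(x_j)\bigr)_{i,j}\geq 0$ with $x_1=x$, $x_2=y$, concluding that vanishing of a diagonal entry forces the off-diagonal entries to vanish; that version needs only $2$-positivity rather than complete positivity, so it is marginally more general, while your dilation argument is cleaner and makes the geometric content (the projection $1-VV^*$ annihilating $\pi(x)V$) explicit. One small stylistic point: closure of $\M_\UCP$ under addition is immediate from the left-hand characterization (linearity of $\UCP$ in the defining identities $\UCP(xy)=\UCP(x)\UCP(y)$), so the appeal to polarization on the $\mathcal{D}$ side is unnecessary once the set equality is in hand.
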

The following theorem is also useful in describing the multiplicative domain of a unital channel.
\begin{theorem}[See \cite{kribs-spekkens},\cite{miza}]\label{kribs-spekkens}
For a unital channel $\UCP$, we have the relation 
\[\M_\UCP=\rm{Fix}_{\UCP^*\circ\UCP}.\]
Here $\UCP^*$ is the adjoint of $\UCP$
\end{theorem}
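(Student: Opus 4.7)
The plan is to prove the two inclusions separately, using the Schwarz inequality for unital completely positive maps together with the fact that, since $\UCP$ is both unital and trace-preserving, its Hilbert--Schmidt adjoint $\UCP^*$ is also a unital channel. In particular, $\UCP^*\circ\UCP$ is a unital channel as well, and both $\UCP$ and $\UCP^*$ preserve adjoints, i.e.\ $\UCP(x^*)=\UCP(x)^*$ for every $x\in\M_d$ (and similarly for $\UCP^*$).

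For the inclusion $\M_\UCP\subseteq\rm{Fix}_{\UCP^*\circ\UCP}$, suppose $x\in\M_\UCP$. Theorem \ref{choi} gives $\UCP(x^*x)=\UCP(x)^*\UCP(x)$. Taking traces and using that $\UCP$ is trace-preserving yields $\|\UCP(x)\|_{HS}^2=\|x\|_{HS}^2$. I would then expand
\[
\|x-\UCP^*\UCP(x)\|_{HS}^2 \;=\; \|x\|_{HS}^2 \;-\; 2\,\mathrm{Re}\,\langle x,\UCP^*\UCP(x)\rangle_{HS} \;+\; \|\UCP^*\UCP(x)\|_{HS}^2 .
\]
The duality $\langle \UCP^*\UCP(x),x\rangle_{HS}=\langle\UCP(x),\UCP(x)\rangle_{HS}=\|\UCP(x)\|_{HS}^2$ handles the cross term, while the Schwarz inequality for $\UCP^*$ applied to $y=\UCP(x)$ gives $\UCP^*(\UCP(x)^*\UCP(x))\ge (\UCP^*\UCP(x))^*(\UCP^*\UCP(x))$, and tracing (using that $\UCP^*$ is trace-preserving) yields $\|\UCP^*\UCP(x)\|_{HS}^2\le \|\UCP(x)\|_{HS}^2$. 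Combining these three facts gives $\|x-\UCP^*\UCP(x)\|_{HS}^2\le 0$, hence $\UCP^*\UCP(x)=x$.

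For the reverse inclusion, suppose $\UCP^*\UCP(x)=x$. Applying the Schwarz inequality to the unital CP map $\UCP^*\circ\UCP$ gives $\UCP^*\UCP(x^*x)\ge (\UCP^*\UCP(x))^*(\UCP^*\UCP(x))=x^*x$; since $\UCP^*\UCP$ is trace-preserving, the difference is a positive operator of zero trace, so $\UCP^*\UCP(x^*x)=x^*x$ as well. Now chain the two Schwarz inequalities:
\[
x^*x \;=\; \UCP^*\UCP(x^*x) \;\ge\; \UCP^*\!\bigl(\UCP(x)^*\UCP(x)\bigr) \;\ge\; \bigl(\UCP^*\UCP(x)\bigr)^*\bigl(\UCP^*\UCP(x)\bigr) \;=\; x^*x.
\]
Hence all intermediate inequalities are equalities, and in particular $\UCP^*\bigl(\UCP(x^*x)-\UCP(x)^*\UCP(x)\bigr)=0$. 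The operator $\UCP(x^*x)-\UCP(x)^*\UCP(x)$ is positive (Schwarz for $\UCP$), and applying the trace-preserving map $\UCP^*$ to it gives a positive operator whose trace equals $\Tr(\UCP(x^*x)-\UCP(x)^*\UCP(x))$. Since the result is $0$, this trace is $0$, and a positive operator with vanishing trace is $0$; thus $\UCP(x^*x)=\UCP(x)^*\UCP(x)$. An identical argument applied to $xx^*$ (using that $\rm{Fix}_{\UCP^*\UCP}$ is $*$-closed, which follows from $\UCP$ and $\UCP^*$ being Hermiticity preserving) yields $\UCP(xx^*)=\UCP(x)\UCP(x^*)$, so by Theorem \ref{choi}, $x\in\M_\UCP$.

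The main obstacle is the reverse direction: the Schwarz inequality for $\UCP$ alone does not force equality on the multiplicative-domain condition, so one must feed the fixed-point hypothesis through the composition $\UCP^*\circ\UCP$ and exploit the trace-preserving property of $\UCP^*$ to squeeze positive operators down to zero. Everything else is bookkeeping with Schwarz and Hilbert--Schmidt duality.
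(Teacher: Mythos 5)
Your proof is correct. The paper itself states this theorem as a cited background result (from the Kribs--Spekkens and Rahaman references) and gives no proof, so there is nothing internal to compare against; judged on its own, your argument is sound and self-contained. Both directions check out: the forward inclusion correctly combines $\|\UCP(x)\|_{HS}=\|x\|_{HS}$ (from tracing the multiplicativity identity) with the HS-contractivity of $\UCP^*$ (which does follow from Schwarz plus trace preservation, since $\UCP^*$ is itself unital CP and trace-preserving when $\UCP$ is a unital channel), and the reverse inclusion correctly squeezes the positive operator $\UCP(x^*x)-\UCP(x)^*\UCP(x)$ to zero using that $\UCP^*$ kills it and preserves trace. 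The only hypotheses you invoke beyond complete positivity are unitality and trace preservation of both $\UCP$ and $\UCP^*$, which hold exactly in the setting of the theorem, and the $*$-closure of $\rm{Fix}_{\UCP^*\circ\UCP}$ needed for the $xx^*$ condition is justified. This is essentially the standard argument from the cited sources.
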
 
The next theorem connects the fixed points set and the Kraus operators of a channel.
\begin{theorem}[See \cite{krbs}]\label{cummutant-fixed pnt}
Let $\UCP$ be a unital channel represented as $\UCP(x)=\displaystyle\sum_{j=1}^{n}a_{j}xa_{j}^*$. Then the fixed point set $\rm{Fix}_\UCP$ is an algebra and it equals to the commutant of the *-algebra ($\A$) generated by $\{a_1,\cdots, a_{n}\}$. That is 
\[\rm{Fix}_\UCP=\A',\]
where $\A'$ represents the commutant of the algebra $\A$. 
\end{theorem}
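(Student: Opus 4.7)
The plan is to prove two containments, then note that $\A'$ is automatically a unital $*$-algebra so $\mathrm{Fix}_\UCP$ inherits that structure. The ``easy'' containment $\A'\subseteq \mathrm{Fix}_\UCP$ is immediate from unitality: if $x$ commutes with every $a_j$ (hence also with every $a_j^*$, since $\A$ is $*$-generated), then
\[
\UCP(x)=\sum_{j=1}^{n} a_{j}x a_{j}^{*}=x\sum_{j=1}^{n}a_{j}a_{j}^{*}=x,
\]
using $\sum_j a_ja_j^*=1$. Note this step uses unitality of $\UCP$, which is crucial; trace-preservation alone would not suffice.

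The substantive direction is $\mathrm{Fix}_\UCP\subseteq \A'$. Fix $x$ with $\UCP(x)=x$. The first step is to promote this to $\UCP(x^{*}x)=x^{*}x$. Since $\UCP$ is completely positive and unital, the Kadison--Schwarz inequality gives $\UCP(x^{*}x)\ge \UCP(x)^{*}\UCP(x)=x^{*}x$, so $\UCP(x^{*}x)-x^{*}x$ is positive. Trace preservation yields $\mathrm{Tr}\bigl(\UCP(x^{*}x)-x^{*}x\bigr)=0$, and a positive operator of zero trace vanishes; thus $\UCP(x^{*}x)=x^{*}x$.

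The second step extracts the commutation relations by a single sum-of-squares identity. I would expand
\[
\sum_{j=1}^{n}\bigl(a_{j}^{*}x - x a_{j}^{*}\bigr)^{*}\bigl(a_{j}^{*}x - x a_{j}^{*}\bigr)
= x^{*}\Bigl(\sum_j a_ja_j^{*}\Bigr)x - x^{*}\UCP(x) - \UCP(x^{*})x + \UCP(x^{*}x),
\]
and then use unitality ($\sum_j a_ja_j^*=1$), the fact that $\UCP$ is $*$-preserving so $\UCP(x^{*})=x^{*}$, and the identity $\UCP(x^{*}x)=x^{*}x$ from the previous step. All four terms collapse and the right-hand side equals $0$. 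Since each summand is positive, every $a_{j}^{*}x - x a_{j}^{*}$ is zero, i.e.\ $[a_{j}^{*},x]=0$ for all $j$. Applying the identical argument to the fixed point $x^{*}$ (which is fixed because $\UCP$ preserves adjoints) gives $[a_{j}^{*},x^{*}]=0$ and hence $[a_{j},x]=0$ by taking adjoints. Thus $x$ commutes with $\{a_1,\ldots,a_n,a_1^{*},\ldots,a_n^{*}\}$, hence with the $*$-algebra $\A$ they generate, so $x\in\A'$.

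The conceptual obstacle is recognizing the right sum-of-squares to compute: one needs an expression whose positivity is manifest and whose expansion collapses under the twin hypotheses of unitality and trace preservation. Writing the quantity as $\sum_j[a_j^{*},x]^{*}[a_j^{*},x]$ (rather than, say, $\sum_j[a_j,x]^{*}[a_j,x]$, which would require $\UCP^{*}(x)=x$ and is not available) is the key choice that makes unitality, not trace preservation, do the final work. Once this identity is set up, everything else is bookkeeping, and the equality $\mathrm{Fix}_{\UCP}=\A'$ simultaneously shows that $\mathrm{Fix}_{\UCP}$ is an algebra.
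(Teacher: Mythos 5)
The paper states this theorem as background and cites \cite{krbs} for it without giving a proof, so there is no in-paper argument to compare against. Your proof is correct and is essentially the standard one from the literature: the Schwarz inequality together with trace preservation upgrades $\UCP(x)=x$ to $\UCP(x^*x)=x^*x$, and the sum-of-squares identity then collapses to $0$, forcing $[a_j^*,x]=0$ for all $j$; running the same argument on the fixed point $x^*$ and taking adjoints yields $[a_j,x]=0$, so $x\in\A'$, while the reverse inclusion is immediate from $\sum_j a_ja_j^*=1$. Your remark that one must commute $x$ past $a_j^*$ rather than $a_j$ (the latter would require fixed points of $\UCP^*$, not $\UCP$) correctly identifies where unitality, as opposed to trace preservation, enters.
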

It follows that $\M_\UCP$ is a *-closed subalgebra of $\M_d$ containing the fixed points of $\UCP$. As with all finite *-algebras, it is generated by a set of projections. For any projection $p\in \M_\UCP$, $\UCP(p)$ is a projection of the same rank, and $1-p$ is also in $\M_\UCP$. We say that $\M_\UCP$ is \textit{trivial} if $\M_\UCP=\C 1$; if $\M_\UCP$ is non-trivial, then it must contain at least two orthogonal projections.\par
For any unital channel $\UCP$ and any $k\in\N$, $\M_{\UCP^{k+1}}\subseteq \M_{\UCP^k}$ \cite{miza}, and hence there is some $N\in\N$ such that for any $n\geq N$, $\M_{\UCP^n}=\M_{\UCP^N}$. Following \cite{miza}, we denote this algebra $\M_{\UCP^\infty}$ and refer to it as the \textit{stabilized multiplicative domain} of $\UCP$.
\begin{definition}[\cite{miza}]
	The \textit{multiplicative index} of a unital quantum channel $\UCP$ is the minimum $n\in\N$ such that $\M_{\UCP^n}=\M_{\UCP^\infty}$.
\end{definition}
We denote the multiplicative index of $\UCP$ by $\kappa(\UCP)$.
Another useful result is Lemma 2.2 from \cite{miza}:
\begin{lemma}\label{lem:md_composition}
	If $\UCP_1,\UCP_2$ are two unital quantum channels, then
	\[\M_{\UCP_2\circ\UCP_1}=\{x\in \M_{\UCP_1} \ | \ \UCP_1(x)\in \M_{\UCP_2}\}.\]
\end{lemma}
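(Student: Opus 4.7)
The plan is to prove the two inclusions separately: the $\supseteq$ direction is an immediate computation from the definition of the multiplicative domain, while the $\subseteq$ direction requires the Kadison--Schwarz inequality together with the faithfulness of $\UCP_2$ on positive matrices (a consequence of trace preservation).

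For $\supseteq$, suppose $x \in \M_{\UCP_1}$ and $\UCP_1(x) \in \M_{\UCP_2}$. Then for any $y \in \M_d$,
\[(\UCP_2\circ\UCP_1)(xy) = \UCP_2\bigl(\UCP_1(x)\UCP_1(y)\bigr) = \UCP_2(\UCP_1(x))\,\UCP_2(\UCP_1(y)) = (\UCP_2\circ\UCP_1)(x)\,(\UCP_2\circ\UCP_1)(y),\]
with the first equality using $x\in\M_{\UCP_1}$ and the second using $\UCP_1(x)\in\M_{\UCP_2}$. The analogous identity for $yx$ follows identically, giving $x\in\M_{\UCP_2\circ\UCP_1}$.

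For $\subseteq$, take $x\in\M_{\UCP_2\circ\UCP_1}$. Theorem~\ref{choi} provides
\[\UCP_2(\UCP_1(x^*x)) = \UCP_2(\UCP_1(x))^*\UCP_2(\UCP_1(x)).\]
Since $\UCP_1,\UCP_2$ are UCP, hence 2-positive, applying the Kadison--Schwarz inequality twice produces the sandwich
\[\UCP_2(\UCP_1(x^*x)) \;\geq\; \UCP_2\bigl(\UCP_1(x)^*\UCP_1(x)\bigr) \;\geq\; \UCP_2(\UCP_1(x))^*\UCP_2(\UCP_1(x)),\]
and equality of the endpoints forces both inequalities to be equalities. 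The second equality is precisely one of the Choi conditions for $\UCP_1(x)\in\M_{\UCP_2}$. The first equality says $\UCP_2(a)=0$ for $a := \UCP_1(x^*x) - \UCP_1(x)^*\UCP_1(x) \geq 0$; trace preservation of $\UCP_2$ then yields $\tr(a) = \tr(\UCP_2(a)) = 0$, and a positive operator of zero trace vanishes, so $\UCP_1(x^*x) = \UCP_1(x)^*\UCP_1(x)$. Repeating the argument with $xx^*$ in place of $x^*x$ supplies the remaining Choi conditions to conclude both $x\in\M_{\UCP_1}$ and $\UCP_1(x)\in\M_{\UCP_2}$.

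The only delicate step is the faithfulness deduction $\UCP_2(a)=0 \Rightarrow a=0$ for $a\geq 0$; this is where trace preservation does work that is unavailable from UCP alone. For the unital CPTP channels considered here it comes for free, so the proof completes cleanly.
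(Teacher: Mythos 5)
Your proof is correct. The paper itself does not prove this lemma---it quotes it as Lemma 2.2 of \cite{miza}---so there is no in-paper argument to compare against; your route (the Kadison--Schwarz sandwich $\UCP_2(\UCP_1(x^*x))\geq \UCP_2(\UCP_1(x)^*\UCP_1(x))\geq \UCP_2(\UCP_1(x))^*\UCP_2(\UCP_1(x))$ forced to collapse, plus faithfulness of a trace-preserving map on positives) is the standard one, and every step is legitimate here because the channels are assumed both unital (giving Kadison--Schwarz and Theorem~\ref{choi}) and trace preserving (giving $a\geq 0,\ \tr(\UCP_2(a))=0\Rightarrow a=0$).
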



\section{The Multiplicative Domain of Product Channels}\label{sec:mutiplicatve_domain_of_products}
\subsection{Splitting problem for subalgebras in tensor product}
The splitting problem for a von Neumann subalgebra (or a C$^*$-subalgebra) of a tensor product of algebras has remained one of the most important problems in operator algebra. One of the early results that drew a lot of attention on this problem is due to L. Ge and R. Kadison:
\begin{theorem}\rm{(Ge-Kadison, 1996, \cite{Ge-Kadison})}
Let $\mathcal{M}, \mathcal{N}$ be two von Neumann algebras and assume that $\mathcal{M}$ is a factor.
If $\mathcal{A}\subseteq \mathcal{M}\bar{\otimes}\mathcal{N}$ is a subalgebra that contains $\mathcal{M}\otimes \mathbb{C}1$, then 
\[\mathcal{A}=\mathcal{M}\otimes \mathcal{B},\]
for some von Neumann subalgebra $\mathcal{B}$ of $\mathcal{N}$.
\end{theorem}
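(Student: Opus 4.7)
The plan is to guess the correct $\mathcal{B}$, verify the easy inclusion by inspection, and then work on the reverse. The natural candidate is
\[\mathcal{B}\;:=\;\{\,y\in\mathcal{N}\;:\;1\otimes y\in \mathcal{A}\,\},\]
which inherits weak-operator closure and the *-algebra operations from $\mathcal{A}$, so it is a von Neumann subalgebra of $\mathcal{N}$. Since $\mathcal{M}\otimes 1$ and $1\otimes\mathcal{B}$ both sit inside $\mathcal{A}$, the von Neumann algebra they jointly generate, namely $\mathcal{M}\bar{\otimes}\mathcal{B}$, is also contained in $\mathcal{A}$. All the content is therefore in the reverse inclusion $\mathcal{A}\subseteq \mathcal{M}\bar{\otimes}\mathcal{B}$.

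I would first handle the model case $\mathcal{M}=\mathcal{M}_d$, which is the one relevant to the rest of the paper. Fixing matrix units $\{E_{ij}\}\subseteq\mathcal{M}$, an arbitrary $a\in\mathcal{A}$ has a unique expansion $a=\sum_{i,j}E_{ij}\otimes y_{ij}$ with $y_{ij}\in\mathcal{N}$. Using that $\mathcal{M}\otimes 1\subseteq\mathcal{A}$, compute $(E_{1i}\otimes 1)\,a\,(E_{j1}\otimes 1)=E_{11}\otimes y_{ij}\in\mathcal{A}$, and then $\sum_{k}(E_{k1}\otimes 1)(E_{11}\otimes y_{ij})(E_{1k}\otimes 1)=1\otimes y_{ij}\in\mathcal{A}$. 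Hence each $y_{ij}\in\mathcal{B}$ and $a\in \mathcal{M}\otimes\mathcal{B}$, as desired.

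For a general factor $\mathcal{M}$ the combinatorics of matrix units must be replaced by a slice-map / unitary-averaging argument, which is where the factor hypothesis on $\mathcal{M}$ does the real work. The key identity is
\[(\mathcal{M}\otimes 1)'\cap(\mathcal{M}\bar{\otimes}\mathcal{N})\;=\;(\mathcal{M}\cap\mathcal{M}')\bar{\otimes}\mathcal{N}\;=\;1\otimes\mathcal{N},\]
which lets us recover the $\mathcal{N}$-part of an element via unitary averaging: for $a\in\mathcal{A}$, weak-operator limits of convex combinations $\sum_i\lambda_i(u_i\otimes 1)a(u_i^{*}\otimes 1)$ with $u_i\in U(\mathcal{M})$ remain inside $\mathcal{A}$, and by the identity above they also lie in $1\otimes\mathcal{N}$, hence in $1\otimes\mathcal{B}$. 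The Ge--Kadison innovation is to show that, by choosing the $u_i$ carefully and inserting fixed pre- and post-multipliers from $\mathcal{M}\otimes 1$, one can realize every right slice $R_\omega(a)=(\omega\otimes\mathrm{id})(a)$ (for $\omega\in\mathcal{M}_{*}$) as such a weak-operator limit. Once all right slices of $a$ are shown to lie in $\mathcal{B}$, the Fubini / slice-map characterization of the tensor product forces $a\in\mathcal{M}\bar{\otimes}\mathcal{B}$.

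The main obstacle I expect is exactly this last averaging step in the absence of a trace (type III factors) or of matrix units (non-hyperfinite $\mathrm{II}_1$ factors). The approximation relies on the fact that any normal functional on a factor can be weak-$*$ approximated in a controlled way by vector states coming from cyclic vectors, which is the technical core of Ge--Kadison. For all applications in the present paper, where $\mathcal{M}$ is a finite-dimensional matrix algebra, the elementary matrix-units argument of the second paragraph is already enough and the general machinery is not invoked.
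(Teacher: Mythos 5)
This statement is quoted by the paper from Ge--Kadison (1996) as background; the paper gives no proof of it, so there is no internal argument to compare yours against. Judged on its own terms, your proposal splits into two parts of very different status. The finite-dimensional part is a complete and correct proof: with $\mathcal{B}=\{y\in\mathcal{N}: 1\otimes y\in\mathcal{A}\}$, the computation $(E_{1i}\otimes 1)a(E_{j1}\otimes 1)=E_{11}\otimes y_{ij}$ followed by $\sum_k(E_{k1}\otimes 1)(E_{11}\otimes y_{ij})(E_{1k}\otimes 1)=1\otimes y_{ij}$ is exactly right, and it only uses that $\mathcal{A}$ is an algebra containing $\M_d\otimes 1$. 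This is also the only case the present paper ever needs (and, for what it is worth, the paper's own splitting result, Theorem 2.4, avoids Ge--Kadison entirely by passing to commutants of Kraus algebras and invoking the commutation theorem for tensor products).

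For a general factor $\mathcal{M}$, however, your argument has a genuine gap at precisely the point you flag. The identity $(\mathcal{M}\otimes 1)'\cap(\mathcal{M}\bar{\otimes}\mathcal{N})=1\otimes\mathcal{N}$ is correct (it is the commutation theorem plus factoriality), and it is true that any weak-operator limit of averages $\sum_i\lambda_i(u_i\otimes 1)a(u_i^*\otimes 1)$ that happens to commute with $\mathcal{M}\otimes 1$ lands in $1\otimes\mathcal{B}$. What is not established is that such limits exist and, more importantly, that they suffice: you need every right slice $R_\omega(a)=(\omega\otimes\mathrm{id})(a)$, $\omega\in\mathcal{M}_*$, to lie in $\mathcal{B}$, and then you need the slice-map (Fubini) characterization $\{a : R_\omega(a)\in\mathcal{B}\ \forall\omega,\ L_\varphi(a)\in\mathcal{M}\ \forall\varphi\}=\mathcal{M}\bar{\otimes}\mathcal{B}$, which is itself a nontrivial theorem (it fails for general operator-space analogues). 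The claim that slices can be realized as weak limits of conjugation-averages by unitaries of $\mathcal{M}\otimes 1$ is asserted, not proved, and it is the entire content of Ge and Kadison's contribution; a relative Dixmier-type property for the inclusion $\mathcal{M}\otimes 1\subseteq\mathcal{M}\bar{\otimes}\mathcal{N}$ does not follow from the ordinary Dixmier averaging theorem. So your write-up should be regarded as a correct proof of the matrix-algebra case together with an accurate roadmap of, but not a proof of, the general theorem.
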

There have been a lot of improvements and new research into the splitting problem. See \cite{splitting1}, \cite{splitting2},\cite{splitting3},\cite{splitting4}
for more information on this area.

Here we examine the multiplicative domain of the tensor product of two channels $\UCP_1,\UCP_2$ acting on $\mathcal{M}_{d}$ and $\mathcal{M}_{d'}$ separately. Since the multiplicative domain $\mathcal{M}_{\UCP_1\otimes\UCP_2}$ is a C$^*$-subalgebra of $\mathcal{M}_{d}\otimes\mathcal{M}_{d'}$, it is natural to ask whether this subalgebra splits into tensor product of two subalgebras. We show that for unital channels, the multiplicative domain is unchanged by the tensor product. To prove this claim we need the following lemma. For our purposes, if $\A_1$ and $\A_2$ are two algebras and $S\subseteq\A_1$ and $R\subseteq\A_2$ are two sets, possibly without any algebraic structure themselves, then $S\otimes R$ is defined as $\{s\otimes r | s\in S,r\in R\}$.
\begin{lemma}\label{lem:proj_tensor}
	Let $S\subseteq \M_d$ and $R\subseteq \M_{d'}$. If, for every $s\in S$, there is a projection $p\in \Span(S)$ such that $ps=s$, and for every $r\in R$ there is a projection $q\in \Span(R)$ such that $rq=r$, then $\alg(S)\otimes \alg(R)=\alg(S\otimes R)$. If there also exist such $p,q$ for for all $s\in S^*=\{s^*|s\in S\}$ and all $r\in R^*$, then $\alg^*(S)\otimes\alg^*(R)=\alg^*(S\otimes R)$.
\end{lemma}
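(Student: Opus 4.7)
The inclusion $\alg(S\otimes R)\subseteq\alg(S)\otimes\alg(R)$ is immediate, since $\alg(S)\otimes\alg(R)$ is an algebra containing every $s\otimes r$. My plan for the reverse inclusion is to work on a spanning set: a general element of $\alg(S)\otimes\alg(R)$ is a linear combination of elementary tensor products $(s_1\cdots s_k)\otimes(r_1\cdots r_\ell)$ with $s_i\in S$ and $r_j\in R$, so it suffices to place each such product in $\alg(S\otimes R)$. When $k=\ell$ this is immediate from $(s_1\otimes r_1)\cdots(s_k\otimes r_k)$. When the lengths differ, the idea is to pad the shorter side using the projection hypothesis.

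Concretely, if $k>\ell$ I would apply the hypothesis to obtain $q_{r_\ell}=\sum_m\alpha_m\rho_m\in\Span(R)$ with $r_\ell q_{r_\ell}=r_\ell$, giving $r_1\cdots r_\ell=\sum_m\alpha_m\,r_1\cdots r_\ell\rho_m$ and lengthening the right-hand product by one factor. Iterating $k-\ell$ times will express $r_1\cdots r_\ell$ as a linear combination of length-$k$ products from $R$, which then pair term-by-term with $s_1,\ldots,s_k$ to give an element of $\alg(S\otimes R)$. The case $k<\ell$ is symmetric, padding on the left of $s_1$ via $p_{s_1}s_1=s_1$ with $p_{s_1}\in\Span(S)$.

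For the $*$-algebra statement, my plan is first to apply the non-starred version to $S\cup S^*$ and $R\cup R^*$; the original hypotheses together with the additional conditions on $S^*$ and $R^*$ guarantee that these enlarged sets satisfy the projection property. This will give $\alg^*(S)\otimes\alg^*(R)=\alg(S\cup S^*)\otimes\alg(R\cup R^*)=\alg((S\cup S^*)\otimes(R\cup R^*))$, so it remains to identify the right-hand side with $\alg^*(S\otimes R)$. One inclusion is easy because $\alg((S\cup S^*)\otimes(R\cup R^*))$ is $*$-closed and contains $S\otimes R$; the other reduces to showing that the mixed elementary tensors $s\otimes r^*$ and $s^*\otimes r$ both lie in $\alg^*(S\otimes R)$.

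The main obstacle is precisely this last step: a product of elements of $S\otimes R$ and their adjoints inherits the same $*$-pattern on both tensor factors, so $s\otimes r^*$ cannot be reached by a straightforward product. To circumvent this I plan to use projections on both tensor sides. Writing $p'_s=\sum_l\delta_l\tau_l^*\in\Span(S^*)$ with $sp'_s=s$ (by applying the extra hypothesis to $s^*$ and taking adjoints) and $q_r=\sum_m\gamma_m\tilde\rho_m\in\Span(R)$ with $q_r r^*=r^*$ (from $rq_r=r$), one obtains
\[
s\otimes r^*=(sp'_s)\otimes(q_r r^*)=\sum_{l,m}\delta_l\gamma_m\,(s\tau_l^*)\otimes(\tilde\rho_m r^*),
\]
and the key identity will be that each summand equals $(s\otimes\tilde\rho_m)(\tau_l\otimes r)^*$, i.e.\ a product of an element of $S\otimes R$ with the adjoint of one, and hence lies in $\alg^*(S\otimes R)$. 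A symmetric argument handles $s^*\otimes r$, completing the proof.
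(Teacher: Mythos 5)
Your proof is correct, and its first half is essentially the paper's argument in different clothing: both reduce to the spanning elements $(s_1\cdots s_k)\otimes(r_1\cdots r_\ell)$ and use the projection hypothesis to convert a one-sided product into a product of elementary tensors from $S\otimes R$. The paper pads both factors simultaneously, rewriting the element as $(p\otimes r_1)\cdots(p\otimes r_\ell)(s_1\otimes q)\cdots(s_k\otimes q)$ and expanding $p$ and $q$ in $\Span(S)$ and $\Span(R)$, while you pad only the shorter factor up to the length of the longer using idempotence of the projection; these are interchangeable. Where you genuinely diverge is the $*$-algebra case. The paper keeps the same single factorization with $s_i\in S\cup S^*$ and $r_j\in R\cup R^*$ and observes that a factor such as $p\otimes\tilde r^{\,*}$ equals $(p\otimes\tilde r)^*$ because the projection is self-adjoint. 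You instead invoke the plain statement for the enlarged sets $S\cup S^*$ and $R\cup R^*$ (which is legitimate, since they inherit the projection property and $\alg(S\cup S^*)=\alg^*(S)$), and then separately prove that the mixed generators $s\otimes r^*$ and $s^*\otimes r$ lie in $\alg^*(S\otimes R)$ via the identity $(s\tau^*)\otimes(\tilde\rho\, r^*)=(s\otimes\tilde\rho)(\tau\otimes r)^*$ after inserting projections on both tensor legs. Your route is longer but isolates and makes explicit the one point a reader might stumble on, namely that naive products of elements of $S\otimes R$ and their adjoints carry the same $*$-pattern on both legs, so the mixed tensors require a separate argument; the paper's single-factorization trick absorbs this silently (tacitly using that a self-adjoint element of $\Span(S^*)$ also lies in $\Span(S)$). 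Both arguments are valid.
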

\begin{proof}
	Note that for any two sets $S$ and $R$, $S\subseteq \alg(S)$ and $R\subseteq \alg(R)$, so $S\otimes R\subseteq \alg(S)\otimes \alg(R)$, so $\alg(S\otimes R)\subseteq \alg(S)\otimes \alg(R)$.\par
	For the reverse inclusion, free products of elements of $S$ will span $\alg(S)$, and free products of elements of $R$ will span $\alg(R)$. The tensor product of spanning sets is a spanning set of the tensor product, so elements of the form $s_1\cdots s_n\otimes r_1\cdots r_m$, with $s_i\in S$ and $r_j\in R$, will span $\alg(S)\otimes\alg(R)$. We take an arbitrary element of this form, $s_1\cdots s_n\otimes r_1\cdots r_m$, and then take $p\in \Span(S)$ such that $ps_1=s_1$ and $q\in \Span(R)$ such that $r_mq=r_m$. Then:
	\begin{align*}
	s_1\cdots s_n\otimes r_1\cdots r_m=&ps_1\cdots s_n\otimes r_1\cdots r_mq\\
	=&(p\otimes r_1\cdots r_m)(s_1\cdots s_n\otimes q)\\
	=&(p\otimes r_1)(p\otimes r_2)\cdots (p\otimes r_m)(s_1\otimes q)(s_2\otimes q)\cdots (s_n\otimes q)
	\end{align*}
	Since $p\in \Span(S)$, then there are elements $\{s'_j\}_{j=1}^n$ in $S$ such that $p=\sum_{j=1}^na_js'_j$. But for any $r_i$, $s'_j\otimes r_i\in S\otimes R$, so the sum $\sum a_j(s'_j\otimes r_i)=p\otimes r_i$ is in  $\alg(S\otimes R)$. Similarly, $s_i\otimes q\in \alg(S\otimes R)$. Thus, the product above is also in $\alg(S\otimes R)$, and thus all the basis elements of $\alg(S)\otimes \alg(R)$ are in $\alg(S\otimes R)$, so $\alg(S)\otimes \alg(R)\subseteq alg(S\otimes R)$.\par
	For the *-algebras, a very similar logic holds. Free products of the form $s_1\cdots s_n\otimes r_1\cdots r_m$, $s_i\in S\cup S^*$ and $r_i\in R\cup R^*$, will span $\alg^*(S)\otimes\alg^*(R)$. Take and arbitrary element of this form and let $p$ and $q$ be projections defined as before, i.e., $ps_1=s_1$ and $r_mq=r_m$. Suppose $r_i$ is in $R^*$, so $r_i=\tilde{r}_i^*$, with $\tilde{r}_i\in R$. Projections are self-adjoint, so $p\otimes r_i=p^*\otimes \tilde{r}_i^*=(p\otimes \tilde{r}_i)^*$ is in $\alg^*(S\otimes R)$. Similarly, if $s_i=\tilde{s}_i^*$ is in $S^*$, then $s_i\otimes q=(\tilde{s}_i\otimes q)^*\in \alg^*(S\otimes R)$. Thus the same decomposition can be done as the one above:
	\[s_1\cdots s_n\otimes r_1\cdots r_m=(s_1\otimes q)\cdots (s_n\otimes q)(p\otimes r_1)\cdots (p\otimes r_m)\]
	And since all of the terms on the right-hand side are in $\alg^*(S\otimes R)$, then $\alg^*(S)\otimes\alg^*(R)=\alg^*(S\otimes R)$.
\end{proof}

\begin{theorem}\label{thm:md_tensor}
For any two unital quantum channels $\UCP_1,\UCP_2$, 
\[\M_{\UCP_1\otimes\UCP_2}=\M_{\UCP_1}\otimes \M_{\UCP_2}.\]
\end{theorem}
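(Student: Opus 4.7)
The plan is to prove the two inclusions separately.

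The inclusion $\M_{\UCP_1}\otimes\M_{\UCP_2}\subseteq\M_{\UCP_1\otimes\UCP_2}$ is direct. Since the multiplicative domain is a linear subspace, it suffices to show that $x\otimes y\in\M_{\UCP_1\otimes\UCP_2}$ whenever $x\in\M_{\UCP_1}$ and $y\in\M_{\UCP_2}$. By Theorem \ref{choi},
\[ (\UCP_1\otimes\UCP_2)\bigl((x\otimes y)^*(x\otimes y)\bigr) = \UCP_1(x^*x)\otimes\UCP_2(y^*y) = \UCP_1(x)^*\UCP_1(x)\otimes \UCP_2(y)^*\UCP_2(y),\]
which equals $(\UCP_1\otimes\UCP_2)(x\otimes y)^*(\UCP_1\otimes\UCP_2)(x\otimes y)$; the condition on $(x\otimes y)(x\otimes y)^*$ is analogous.

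For the reverse inclusion, I reduce to a commutant computation. By Theorem \ref{kribs-spekkens} together with the identity $(\UCP_1\otimes\UCP_2)^*=\UCP_1^*\otimes\UCP_2^*$,
\[\M_{\UCP_1\otimes\UCP_2}=\rm{Fix}_{(\UCP_1^*\circ\UCP_1)\otimes(\UCP_2^*\circ\UCP_2)},\qquad \M_{\UCP_i}=\rm{Fix}_{\UCP_i^*\circ\UCP_i}.\]
If $\{a_j\}$ and $\{b_l\}$ are Kraus operators for $\UCP_1$ and $\UCP_2$, a direct expansion produces Kraus operators $S=\{a_k^*a_j\}_{j,k}$ for $\UCP_1^*\circ\UCP_1$, $R=\{b_m^*b_l\}_{l,m}$ for $\UCP_2^*\circ\UCP_2$, and $S\otimes R$ for their tensor product. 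Theorem \ref{cummutant-fixed pnt} then identifies each fixed-point set with the commutant of the corresponding *-algebra of Kraus operators.

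The central step is an application of Lemma \ref{lem:proj_tensor} to $S$ and $R$. Since $\UCP_1$ is trace-preserving, $\sum_j a_j^*a_j=1$, so the identity lies in $\Span(S)$ via the diagonal terms $\{a_j^*a_j\}_j\subseteq S$; similarly $1\in\Span(R)$. Both sets are closed under the adjoint operation (taking adjoints merely swaps indices), so the projection $p=1$ trivially satisfies $ps=s$ for every $s\in S\cup S^*$, and $q=1$ does the same for $R\cup R^*$. Lemma \ref{lem:proj_tensor} then yields $\alg^*(S\otimes R)=\alg^*(S)\otimes\alg^*(R)$. Writing $\A_1=\alg^*(S)$ and $\A_2=\alg^*(R)$ and invoking the standard commutant-tensor identity for finite-dimensional unital *-subalgebras, $(\A_1\otimes\A_2)'=\A_1'\otimes\A_2'$, I conclude
\[\M_{\UCP_1\otimes\UCP_2}=(\A_1\otimes\A_2)' = \A_1'\otimes\A_2' = \M_{\UCP_1}\otimes\M_{\UCP_2}.\]

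I expect the main obstacle to lie in verifying the hypotheses of Lemma \ref{lem:proj_tensor} — one needs an honest projection in $\Span(S)$ (not merely a general element). The observation that the diagonal Kraus terms $a_j^*a_j$ already sum to the identity makes this transparent, and the remaining steps are a mechanical assembly of the quoted results.
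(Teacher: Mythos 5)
Your argument is correct and follows essentially the same route as the paper: reduce via Theorem \ref{kribs-spekkens} to the fixed points of $\UCP_i^*\circ\UCP_i$, apply Lemma \ref{lem:proj_tensor} to the Kraus sets $S$ and $R$ using $1\in\Span(S)$, and finish with the commutant--tensor-product identity and Theorem \ref{cummutant-fixed pnt}. Your first paragraph (the forward inclusion via Theorem \ref{choi}) is harmless but redundant, since the commutant computation already yields equality.
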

\begin{proof}
	 Let $\UCP_1(x)=\displaystyle\sum_{i=1}^ma_ixa_i^*$ and $\UCP_2(x)=\displaystyle\sum_{i=1}^nb_ixb_i^*$ be the Kraus decomposition of $\UCP_1$ and $\UCP_2$ respectively. Trace preservation implies $1=\displaystyle\sum_{i=1}^m a_i^*a_i=\displaystyle\sum_{j=1}^{n}b_{j}^*b_{j}$. \par
	The Kraus operators of $\UCP_1^*\circ\UCP_1$ are $\{a_i^*a_j\}$ for any $i,j$. Define $S=\{a_i^*a_j|1\leq i,j\leq m\}$. Similarly, let $R=\{b_i^*b_j|1\leq i,j\leq n\}$ be the set Kraus operators of $\UCP_2^*\circ\UCP_2$. Then the Kraus operators of $\UCP_1^*\circ\UCP_1\otimes\UCP_2^*\circ\UCP_2 (=(\UCP_1\otimes\UCP_2)^*\circ(\UCP_1\otimes\UCP_2))$ are $\{a_i^*a_j\otimes b_k^*b_l:1\leq i,j\leq m \ \text{and} \ 1\leq k,l\leq n\}$, or $S\otimes R$. Since $1\in \Span(S)$ and $1\in \Span(R)$,  we have the necessary projections to use Lemma \ref{lem:proj_tensor}.  Hence we have that
	\[\alg^*(S\otimes R)=\alg^*(S)\otimes \alg^*(R).\]
Now the finite dimensional *-algebras are von Neumann algebras and by the commutant-tensor product theorem for von Neumann algebras ( \cite{kadison-ringroseII}, Theorem 11.2.16) we have that \[\alg^*(S\otimes R)'={\alg^*(S)}'\otimes {\alg^*(R)}'.\] Then by Theorem \ref{cummutant-fixed pnt} $\alg^*(S\otimes R)'=\rm{Fix}_{(\UCP_1\otimes\UCP_2)^*\circ(\UCP_1\otimes\UCP_2)}$ and $\alg^*(S)'=\rm{Fix}_{\UCP_1^*\circ\UCP_1}$ and $\alg^*(R)'=\rm{Fix}_{\UCP_2^*\circ\UCP_2}$, thus 
	\[\rm{Fix}_{(\UCP_1\otimes\UCP_2)^*\circ(\UCP_1\otimes\UCP_2)}=\rm{Fix}_{\UCP_1^*\circ\UCP_1}\otimes \rm{Fix}_{\UCP_2^*\circ\UCP_2}.\]
Now invoking Theorem \ref{kribs-spekkens} and noting that $\M_{\UCP_1\otimes\UCP_2}=\rm{Fix}_{(\UCP_1\otimes\UCP_2)^*\circ(\UCP_1\otimes\UCP_2)}$ we immediately obtain
\[\M_{\UCP_1\otimes\UCP_2}=\M_{\UCP_1}\otimes \M_{\UCP_2}.\]
\end{proof}

Since the multiplicative domain behaves well with the tensor product, it leads to a simple form for the multiplicative index:
\begin{prop}\label{prop:kappa_tensor_bound}
	Given two unital channels $\UCP_1:\M_d\rightarrow \M_d,\UCP_2:\M_{d'}\rightarrow \M_{d'}$, then $\kappa(\UCP_1\otimes\UCP_2)=\max\{\kappa(\UCP_1),\kappa(\UCP_2)\}$ (where $\kappa$ is the multiplicative index).
\end{prop}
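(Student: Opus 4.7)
The plan is to reduce everything to Theorem \ref{thm:md_tensor} applied to iterates of the channels. Since $(\UCP_1\otimes\UCP_2)^n=\UCP_1^n\otimes\UCP_2^n$ and $\UCP_1^n,\UCP_2^n$ are again unital channels, Theorem \ref{thm:md_tensor} yields the identity
\[
\M_{(\UCP_1\otimes\UCP_2)^n}=\M_{\UCP_1^n}\otimes\M_{\UCP_2^n}
\]
for every $n$. Thus the decreasing chain $\{\M_{(\UCP_1\otimes\UCP_2)^n}\}_n$ is simply the tensor product of the two individual chains $\{\M_{\UCP_1^n}\}$ and $\{\M_{\UCP_2^n}\}$, and the problem becomes one of tracking when a tensor product of a pair of stabilising chains of subalgebras itself stabilises.

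For the upper bound, let $k=\max\{\kappa(\UCP_1),\kappa(\UCP_2)\}$. For every $n\ge k$ both $\M_{\UCP_1^n}=\M_{\UCP_1^\infty}$ and $\M_{\UCP_2^n}=\M_{\UCP_2^\infty}$, so the displayed identity gives $\M_{(\UCP_1\otimes\UCP_2)^n}=\M_{\UCP_1^\infty}\otimes\M_{\UCP_2^\infty}$ for all $n\ge k$. In particular this common value must equal $\M_{(\UCP_1\otimes\UCP_2)^\infty}$, and the chain for the tensor product has stabilised by step $k$. This forces $\kappa(\UCP_1\otimes\UCP_2)\le k$.

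For the matching lower bound, I would use a dimension count. For finite dimensional algebras $\A\subseteq\M_d$ and $\B\subseteq\M_{d'}$ one has $\dim(\A\otimes\B)=\dim(\A)\cdot\dim(\B)$, so a proper inclusion in either factor produces a proper inclusion of the tensor products. Without loss of generality take $k=\kappa(\UCP_1)\ge\kappa(\UCP_2)$ and set $n=k-1$. By minimality of $\kappa(\UCP_1)$ we have $\M_{\UCP_1^{k-1}}\supsetneq\M_{\UCP_1^\infty}$, hence $\dim(\M_{\UCP_1^{k-1}})>\dim(\M_{\UCP_1^\infty})$, while trivially $\dim(\M_{\UCP_2^{k-1}})\ge\dim(\M_{\UCP_2^\infty})$. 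Multiplying these,
\[
\dim\bigl(\M_{(\UCP_1\otimes\UCP_2)^{k-1}}\bigr)>\dim\bigl(\M_{\UCP_1^\infty}\otimes\M_{\UCP_2^\infty}\bigr)=\dim\bigl(\M_{(\UCP_1\otimes\UCP_2)^\infty}\bigr),
\]
so the chain for $\UCP_1\otimes\UCP_2$ has not yet stabilised at step $k-1$, giving $\kappa(\UCP_1\otimes\UCP_2)\ge k$.

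The only subtle point, and the step I would double check, is the strictness claim in the lower bound: one must know that the stabilised algebra $\M_{(\UCP_1\otimes\UCP_2)^\infty}$ really is $\M_{\UCP_1^\infty}\otimes\M_{\UCP_2^\infty}$ (immediate from the upper bound argument applied at any $n\ge k$) and that a proper containment of subalgebras in one tensor factor propagates to the full tensor product. The latter is the main obstacle in spirit, but it is handled cleanly by the multiplicativity of dimension for tensor products of finite dimensional $*$-subalgebras, so no delicate analysis is needed.
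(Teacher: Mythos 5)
Your proof is correct and follows essentially the same route as the paper's: apply Theorem \ref{thm:md_tensor} to $(\UCP_1\otimes\UCP_2)^n=\UCP_1^n\otimes\UCP_2^n$, conclude stabilization by $\max\{\kappa(\UCP_1),\kappa(\UCP_2)\}$ for the upper bound, and strict descent before that for the lower bound. Your dimension-count justification that a proper containment in one tensor factor propagates to the tensor product is a detail the paper leaves implicit (its displayed inclusion even has the containment symbol reversed), but it is the same argument in substance.
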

\begin{proof}
	If $k\geq \max\{\kappa(\UCP_1),\kappa(\UCP_2)\}(=:\kappa_{\max})$, then:
	\[\M_{(\UCP_1\otimes\UCP_2)^k}=\M_{\UCP_1^k\otimes\UCP_2^k}=\M_{\UCP_1^k}\otimes \M_{\UCP_2^k}=\M_{\UCP_1^\infty}\otimes \M_{\UCP_2^\infty}.\]
	That is, the multiplicative domain is constant after $\kappa_{\max}$, so $\kappa(\UCP_1\otimes\UCP_2)\leq\kappa_{\max}$. Then suppose $k<\kappa_{\max}$ (and, without loss of generality, suppose $\kappa(\UCP_2)=\kappa_{\max}$). By a similar logic:
	\[\M_{(\UCP_1\otimes\UCP_2)^k}=\M_{\UCP_1^k}\otimes \M_{\UCP_2^k}\subsetneq \M_{\UCP_1^{k+1}}\otimes \M_{\UCP_2^{k+1}}=\M_{(\UCP_1\otimes\UCP_2)^{k+1}}.\]
	Since the multiplicative domain is still strictly decreasing with $k$, then $\kappa(\UCP_1\otimes\UCP_2)>k$ and the result follows.
\end{proof}
The above proposition implies the following corollary:
\begin{corollary}\label{splitting stabilising algebra}
For unital channels $\UCP_1,\UCP_2$ we have 
\[\M_{{(\UCP_1\otimes\UCP_2)}^{\infty}}=\M_{\UCP_1^\infty}\otimes\M_{\UCP_2^\infty}.\]
\end{corollary}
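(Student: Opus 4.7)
The plan is to chain together three ingredients: the definition of the stabilized multiplicative domain, Proposition \ref{prop:kappa_tensor_bound}, and Theorem \ref{thm:md_tensor}. The key observation is that $\M_{\UCP^\infty} = \M_{\UCP^n}$ for every $n \geq \kappa(\UCP)$, so to evaluate the stabilized domain of $\UCP_1 \otimes \UCP_2$ it suffices to pick any sufficiently large power and apply the splitting theorem.

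First I would set $\kappa_{\max} := \max\{\kappa(\UCP_1),\kappa(\UCP_2)\}$ and invoke Proposition \ref{prop:kappa_tensor_bound} to identify $\kappa(\UCP_1 \otimes \UCP_2)$ with $\kappa_{\max}$, so that by definition $\M_{(\UCP_1\otimes\UCP_2)^\infty} = \M_{(\UCP_1\otimes\UCP_2)^{\kappa_{\max}}}$. Next I would use the elementary fact that tensor products commute with iteration, $(\UCP_1\otimes\UCP_2)^{\kappa_{\max}} = \UCP_1^{\kappa_{\max}} \otimes \UCP_2^{\kappa_{\max}}$, since each is unital and CP. Applying Theorem \ref{thm:md_tensor} to the unital channels $\UCP_1^{\kappa_{\max}}$ and $\UCP_2^{\kappa_{\max}}$ then yields
\[
\M_{(\UCP_1\otimes\UCP_2)^{\kappa_{\max}}} = \M_{\UCP_1^{\kappa_{\max}}} \otimes \M_{\UCP_2^{\kappa_{\max}}}.
\]

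Finally, because $\kappa_{\max} \geq \kappa(\UCP_i)$ for $i=1,2$, the stabilization property of the multiplicative index forces $\M_{\UCP_i^{\kappa_{\max}}} = \M_{\UCP_i^\infty}$ for each $i$. Substituting gives the claimed identity
\[
\M_{(\UCP_1 \otimes \UCP_2)^\infty} = \M_{\UCP_1^\infty} \otimes \M_{\UCP_2^\infty}.
\]
There is no real obstacle here beyond correctly tracking the definitions; the proof is essentially a bookkeeping consequence of the proposition combined with Theorem \ref{thm:md_tensor}. The only subtle point worth verifying explicitly is that the tensor product and the powers commute, which holds because $(\UCP_1 \otimes \UCP_2)(\Psi_1 \otimes \Psi_2) = (\UCP_1 \Psi_1) \otimes (\UCP_2 \Psi_2)$ on elementary tensors and extends by linearity and continuity to all of $\M_d \otimes \M_{d'}$.
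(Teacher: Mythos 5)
Your proof is correct and follows essentially the same route as the paper: the corollary is stated there as an immediate consequence of Proposition \ref{prop:kappa_tensor_bound}, whose proof already contains the chain $\M_{(\UCP_1\otimes\UCP_2)^k}=\M_{\UCP_1^k}\otimes\M_{\UCP_2^k}=\M_{\UCP_1^\infty}\otimes\M_{\UCP_2^\infty}$ for $k\geq\kappa_{\max}$. Your only addition is to spell out the bookkeeping (commuting powers with tensor products and invoking stabilization), which is exactly what the paper leaves implicit.
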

\subsection{Fixed Points of Product Channels}\label{sec:fixed_points_of_products}
For a unital channel $\UCP_1\otimes\UCP_2$, the fixed point set $\rm{Fix}_{\UCP_1\otimes\UCP_2}$ is a subalgebra of $\M_d\otimes\mathcal{M}_{d'}$ and unlike the multiplicative domain case, this subalgebra does not split nicely. However, using Theorem \ref{thm:md_tensor}, we can provide an exact description
of this algebra and characterize when this subalgebra splits and recapture the result of \cite{Luczak}. Our results are specific cases of \cite{Watanabe} and \cite{Luczak}, but through a vastly different approach. The spectrum of the tensor product of two channels is known to be the set product of the two spectra, but this theorem characterizes the eigen operators as only the obvious choices. In what follows $\mathbb{T}$ represents the unit circle in the complex plane. Note that (see \cite{wolf}) for any quantum channel $\UCP$, all the eigenvalues lie in the closed unit disc of the complex plane. We define the spectrum ($\rm{Spec}_{\UCP}$) of $\UCP$ as follows
\[\rm{Spec}_{\UCP}=\{\lambda\in \mathbb{C} \ | \ (\lambda 1-\UCP) \ \text{is \ not \ \  invertible \ on} \ \M_d \},\] 
where $1$ is the identity operator on $\Md$. The set $\rm{Spec}_{\UCP}\cap \mathbb{T}$ is called the \emph{peripheral eigenvalues} and the corresponding eigenoperators are called \emph{peripheral eigenvectors}.
\begin{theorem}\label{thm:fixed_point_products}
Let $\UCP_1:\M_d\rightarrow \M_d,\UCP_2:\M_{d'}\rightarrow \M_{d'}$ be two unital quantum channels. Then for any $\lambda\in \mathbb{T}$:
\[\leftset z \in \M_d\otimes \M_{d'} \midsetr \UCP_1\otimes\UCP_2(z)=\lambda z\rightset = \Span\leftset x_i\otimes y_i \midsetr \UCP_1(x_i)=\mu_1 x_i,\UCP_2(y_i)=\frac{\lambda}{\mu_1} y_i\rightset.\]
\end{theorem}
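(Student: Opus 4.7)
The inclusion ``$\supseteq$'' is immediate from linearity: $(\UCP_1\otimes\UCP_2)(x_i\otimes y_i) = \mu_1 x_i\otimes \frac{\lambda}{\mu_1}y_i = \lambda(x_i\otimes y_i)$. For the nontrivial inclusion ``$\subseteq$'', the plan is: (1) show that every peripheral eigenvector of a unital channel lies in its stabilized multiplicative domain; (2) use Corollary \ref{splitting stabilising algebra} to transport the $\lambda$-eigenvector $z$ into $\M_{\UCP_1^\infty}\otimes \M_{\UCP_2^\infty}$; and (3) show $\UCP_i$ acts as a *-automorphism of $\M_{\UCP_i^\infty}$ so that we can simultaneously diagonalize and read off the claimed product structure.

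For step (1), suppose $\UCP$ is a unital channel and $\UCP(z)=\lambda z$ with $|\lambda|=1$. Kadison--Schwarz for the unital 2-positive map $\UCP$ gives $\UCP(z^*z)\geq \UCP(z)^*\UCP(z) = z^*z$, and trace preservation makes the positive semidefinite difference $\UCP(z^*z)-z^*z$ have trace $0$, hence $\UCP(z^*z)=z^*z=\UCP(z)^*\UCP(z)$. The same argument applied to $zz^*$ and Theorem \ref{choi} yield $z\in \M_\UCP$. Applying this base case to $\UCP^n$ in place of $\UCP$ (still unital CPTP, with $\UCP^n(z)=\lambda^n z$ also peripheral) gives $z\in \M_{\UCP^n}$ for every $n$, so $z\in \M_{\UCP^\infty}$. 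For a $\lambda$-eigenvector $z$ of $\UCP_1\otimes\UCP_2$, this places $z$ in $\M_{(\UCP_1\otimes\UCP_2)^\infty}$, which equals $\M_{\UCP_1^\infty}\otimes \M_{\UCP_2^\infty}$ by Corollary \ref{splitting stabilising algebra}.

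For step (3), I verify $\UCP_i$ restricts to a *-automorphism of $\M_{\UCP_i^\infty}$. The restriction is a *-homomorphism because $\M_{\UCP_i^\infty}\subseteq \M_{\UCP_i}$. Applying Lemma \ref{lem:md_composition} to any $x\in \M_{\UCP^\infty}\subseteq \M_{\UCP^{n+1}}$ yields $\UCP(x)\in \M_{\UCP^n}$ for every $n$, so $\M_{\UCP^\infty}$ is $\UCP$-invariant; trace preservation provides injectivity ($\UCP(a)=0\Rightarrow \UCP(a^*a)=0 \Rightarrow \tr(a^*a)=0$). In finite dimensions this yields a *-automorphism, which is unitary on the Hilbert--Schmidt inner product space and therefore diagonalizes with eigenvalues on $\mathbb{T}$. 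Writing the eigenspace decompositions $\M_{\UCP_1^\infty}=\bigoplus_\mu V_\mu$ and $\M_{\UCP_2^\infty}=\bigoplus_\nu W_\nu$, we get $\M_{\UCP_1^\infty}\otimes \M_{\UCP_2^\infty}=\bigoplus_{\mu,\nu}V_\mu\otimes W_\nu$, with $\UCP_1\otimes\UCP_2$ acting on $V_\mu\otimes W_\nu$ as the scalar $\mu\nu$. The $\lambda$-eigenspace of $\UCP_1\otimes\UCP_2$ is therefore $\bigoplus_\mu V_\mu\otimes W_{\lambda/\mu}$, which after expanding in bases is exactly the claimed span of elementary tensors.

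The main obstacle is step (1): verifying that peripheral eigenvectors persist into the stabilized multiplicative domain. Once that inclusion and the *-automorphism property of $\UCP_i|_{\M_{\UCP_i^\infty}}$ are in hand, the remainder of the argument is routine simultaneous-diagonalization bookkeeping on finite-dimensional C*-algebras, leveraging the splitting already established in Corollary \ref{splitting stabilising algebra}.
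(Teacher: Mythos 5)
Your proof is correct, and its skeleton is the same as the paper's: both arguments hinge on placing the $\lambda$-eigenvector $z$ inside $\M_{(\UCP_1\otimes\UCP_2)^\infty}=\M_{\UCP_1^\infty}\otimes\M_{\UCP_2^\infty}$ and then reading off the product eigenstructure. The difference is in how the ingredients are supplied. The paper cites Theorem 2.5 of \cite{miza} as a black box twice: once to assert that peripheral eigenvectors lie in the stabilized multiplicative domain, and once to assert that the stabilized multiplicative domain is spanned by peripheral eigenvectors (which lets the authors expand each $x_i$ in eigenvectors of $\UCP_1$ and finish with a linear-independence argument forcing $\UCP_2(y_i)=\frac{\lambda}{\mu_i}y_i$). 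You instead prove the first inclusion from scratch (Schwarz inequality plus trace preservation forces $\UCP(z^*z)=z^*z=\UCP(z)^*\UCP(z)$, so Theorem \ref{choi} applies, and iterating over $\UCP^n$ lands $z$ in $\M_{\UCP^\infty}$), and you replace the second citation by showing the restriction $\UCP_i|_{\M_{\UCP_i^\infty}}$ is an injective, invariant, trace-preserving *-homomorphism, hence a Hilbert--Schmidt unitary, so the $\lambda$-eigenspace of the tensor product decomposes as $\bigoplus_\mu V_\mu\otimes W_{\lambda/\mu}$. All the individual verifications (invariance via Lemma \ref{lem:md_composition}, injectivity via multiplicativity on $\M_{\UCP_i}$ and trace preservation, HS-unitarity) check out. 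What your version buys is self-containedness at the cost of length; what the paper's version buys is brevity, at the cost of leaning on the external result. Either is acceptable.
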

\begin{proof}
Let $\lambda\in \mathbb{T}$ . For the left inclusion, suppose there are two numbers $\mu_1,\mu_2$ such that $\UCP_1(x)=\mu_1 x$ and $\UCP_2(y)=\mu_2 y$ for matrices $x,y$ and $\lambda=\mu_1\mu_2$. Then
\[\UCP_1\otimes\UCP_2(x\otimes y)=(\mu_1 x)\otimes (\mu_2 y)=\lambda (x\otimes y).\]
For the right inclusion, let $z$ be a matrix such that $\UCP_1\otimes \UCP_2(z)=\lambda z$. By Theorem 2.5 from \cite{miza}, we know that the peripheral eigenvectors of a channel are precisely the stabilized multiplicative domain. Thus:
\[z\in \M_{(\UCP_1\otimes\UCP_2)^\infty}=\M_{\UCP_1^\infty}\otimes \M_{\UCP_2^\infty}.\]
We can then represent $z$ as $z=\sum_{i=1}^mx_i\otimes y_i$, where $x_i\in \M_{\UCP_1^\infty}$ and $y_i\in \M_{\UCP_2^\infty}$. By the same theorem, we know that $x_i$ is a linear combination of peripheral eigenvectors of $\UCP_1$. Thus we can further decompose $z$ as
\[z=\sum_{i=1}^{m'}x_i'\otimes y_i\]
where the $\{x_i'\}$ are linearly independent and $\UCP_1(x_i')=\mu_ix_i'$ with $\mu_i\in\mathbb{T}$. This gives us:
\[\UCP_1\otimes\UCP_2(z)=\sum_{i=1}^{m'}\mu_i x_i'\otimes \UCP_2(y_i).\]
But by choice of $z$, $\UCP_1\otimes \UCP_2(z)=\lambda z = \lambda \sum_{i=1}^{m'}x_i'\otimes y_i$. By the linear independence of $\{x_i'\}$, we have that $\lambda y_i=\mu_i\UCP_2(y_i)$, i.e., $\UCP_2(y_i)=\frac{\lambda}{\mu_i}y_i$. This holds for all $i$, giving the required inclusion.
\end{proof}
Using the above theorem we obtain the following corollary which first appeared in \cite{Luczak}, Corollary 13 in a more general context. However our method of obtaining this result is significantly different from \cite{Luczak}.
\begin{corollary}\label{cor:fixed_point_split}
For two unital channels $\UCP_1$ and $\UCP_2$ with spectra $\rm{Spec}_{\UCP_1}$ and $\rm{Spec}_{\UCP_2}$ respectively, the fixed point algebra splits if and only if the intersection of the peripheral spectra is trivial. That is, 
\[\rm{Fix}_{\UCP_1\otimes\UCP_2}=\rm{Fix}_{\UCP_1}\otimes
\rm{Fix}_{\UCP_2},\]
if and only if $\rm{Spec}_{\UCP_1}\cap \rm{Spec}_{\UCP_2}\cap\mathbb{T}
=\{1\}$.
\end{corollary}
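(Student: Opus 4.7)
The plan is to reduce everything to Theorem~\ref{thm:fixed_point_products} specialised to $\lambda = 1$, which describes $\rm{Fix}_{\UCP_1\otimes\UCP_2}$ as the span of pure tensors $x\otimes y$ with $\UCP_1(x)=\mu x$ and $\UCP_2(y)=\mu^{-1} y$ for some $\mu\in\mathbb{T}$. The only ``glue'' needed beyond that theorem is the classical observation that the peripheral spectrum of a unital quantum channel is closed under complex conjugation (equivalently, under inversion on $\mathbb{T}$): because $\UCP$ is $\ast$-preserving, if $\UCP(x)=\mu x$ with $\mu\in\mathbb{T}$, then $\UCP(x^{*})=\bar\mu\,x^{*}=\mu^{-1}x^{*}$, so $\mu^{-1}\in\rm{Spec}_{\UCP}\cap\mathbb{T}$ whenever $\mu$ is.

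For the ``$\Leftarrow$'' direction, assume $\rm{Spec}_{\UCP_1}\cap\rm{Spec}_{\UCP_2}\cap\mathbb{T}=\{1\}$. By Theorem~\ref{thm:fixed_point_products}, every element of $\rm{Fix}_{\UCP_1\otimes\UCP_2}$ is a linear combination of tensors $x\otimes y$ with $\UCP_1(x)=\mu x$, $\UCP_2(y)=\mu^{-1}y$, both nonzero, for some $\mu\in\mathbb{T}$. Such a $\mu$ lies in $\rm{Spec}_{\UCP_1}\cap\mathbb{T}$, and $\mu^{-1}\in\rm{Spec}_{\UCP_2}\cap\mathbb{T}$; invoking the conjugation symmetry on $\UCP_2$ gives $\mu\in\rm{Spec}_{\UCP_2}\cap\mathbb{T}$ as well. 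The hypothesis then forces $\mu=1$, so each spanning tensor lies in $\rm{Fix}_{\UCP_1}\otimes\rm{Fix}_{\UCP_2}$, establishing the nontrivial inclusion (the reverse inclusion being immediate).

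For the ``$\Rightarrow$'' direction I would argue the contrapositive. Suppose there exists $\mu\in\rm{Spec}_{\UCP_1}\cap\rm{Spec}_{\UCP_2}\cap\mathbb{T}$ with $\mu\neq 1$. Pick nonzero eigenvectors $x\in\M_d$ and $y\in\M_{d'}$ with $\UCP_1(x)=\mu x$ and $\UCP_2(y)=\mu y$, and consider $z:=x\otimes y^{*}$. A direct computation, using $\UCP_2(y^{*})=\UCP_2(y)^{*}=\bar\mu\,y^{*}=\mu^{-1}y^{*}$, gives
\[
(\UCP_1\otimes\UCP_2)(x\otimes y^{*})=\mu x\otimes \mu^{-1}y^{*}=x\otimes y^{*},
\]
so $z\in\rm{Fix}_{\UCP_1\otimes\UCP_2}$. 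To see that $z\notin\rm{Fix}_{\UCP_1}\otimes\rm{Fix}_{\UCP_2}$, extend a basis $\{e_1,\dots,e_r\}$ of $\rm{Fix}_{\UCP_1}$ to a basis of $\M_d$ and expand $x$ in it; membership of $z$ in $\rm{Fix}_{\UCP_1}\otimes\rm{Fix}_{\UCP_2}$ would, since $y^{*}\neq 0$, force all basis coefficients of $x$ outside $\rm{Fix}_{\UCP_1}$ to vanish, making $x$ a fixed point of $\UCP_1$ and contradicting $\mu\neq 1$.

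The only real subtlety is the conjugation symmetry of the peripheral spectrum, which is what lets an assumption about $\mu^{-1}\in\rm{Spec}_{\UCP_2}$ be converted into $\mu\in\rm{Spec}_{\UCP_2}$ so that $\mu$ actually falls into the intersection hypothesised to be trivial; everything else is a direct application of Theorem~\ref{thm:fixed_point_products} and a linear-independence argument. I do not expect any genuine obstacle here.
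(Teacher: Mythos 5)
Your proof is correct and follows essentially the same route as the paper: both specialise Theorem~\ref{thm:fixed_point_products} to $\lambda=1$ and use the fact that the peripheral spectrum of a channel is closed under complex conjugation to convert $\mu^{-1}\in\mathrm{Spec}_{\UCP_2}$ into $\mu\in\mathrm{Spec}_{\UCP_2}$. You simply make explicit two details the paper leaves implicit --- the verification of conjugation symmetry via $\ast$-preservation and the witness element $x\otimes y^{*}$ with the linear-independence argument showing it lies outside $\mathrm{Fix}_{\UCP_1}\otimes\mathrm{Fix}_{\UCP_2}$ --- which is a welcome tightening rather than a different approach.
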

\begin{proof}
The fixed points are the special case of peripheral eigen-operators where $\lambda=1$. Using Theorem \ref{thm:fixed_point_products}, we have that the fixed points are given by
\[\rm{Fix}_{\UCP_1\otimes\UCP_2}=\text{span}\leftset x_i\otimes y_i\midsetr \UCP_1(x_i)=\mu x_i,\UCP_2(y_i)=\overline{\mu}y_i,\vert\mu\vert=1\rightset.\]
This set will equal $\rm{Fix}_{\UCP_1}\otimes\rm{Fix}_{\UCP_2}$ if and only if there is no $\mu\in\rm{Spec}_{\UCP_1}\cap\mathbb{T}$ with $\overline{\mu}\in\rm{Spec}_{\UCP_2}\cap\mathbb{T}$. Since the spectrum of a quantum channel is closed under conjugation, this means $\mu$ would need to be in both spectra. Thus, the spectrum will split if and only if the intersection of the spectra is trivial.
\end{proof}
Theorem \ref{thm:fixed_point_products} is particularly helpful to analyze the ergodicity or irreducibility of tensor product of quantum channels. We provide the definition of such channels below:
\begin{definition}
A channel $\UCP:\M_d\rightarrow \M_d$ is called irreducible if there is no non-trivial projection $p\in \M_d$ such that $\UCP(p)\leq \lambda p$, for $\lambda>0$.
\end{definition}
\begin{definition}
An irreducible channel $\UCP$ is called primitive if the set of peripheral eigenvalues contains only 1, that is if $\rm{Spec}_\UCP\cap \mathbb{T}=\{1\}$.
\end{definition}
We note down some properties of irreducible positive linear maps:
\begin{theorem}{\rm{(see \cite{evans-krohn})}}\label{thm:evans-krohn}
Let $\UCP$ be a positive linear map on $\Md$ and let $r$ be its spectral radius. Then
\begin{enumerate}
\item There is a non zero positive element $x \in \M_d$ such that $\UCP(x)=rx$
\item If $\UCP$ is irreducible and if a positive $y \in {\Md}$ is an eigenvector of $\UCP$ corresponding to some
eigenvalue $s$ of $\UCP$, then $s = r$ and $y$ is a positive scalar multiple of $x$.
\item If $\UCP$ is unital, irreducible and satisfies the Schwarz inequality for positive linear maps then
\begin{itemize}
\item $r=1$ and $\rm{Fix}_{\UCP}=\mathbb{C}1$.
\item Every peripheral eigenvalue $\lambda \in \rm{Spec_{\UCP}\cap \mathbb{T}} $ is simple and the corresponding eigenspace is spanned by a unitary $u_{\lambda}$ which satisfies $\UCP(u_{\lambda}x)=\lambda u_{\lambda}\UCP(x)$, for all $x\in \Md$.
\item The set $\Gamma=\rm{Spec_{\UCP}\cap\mathbb{T}}$ is a cyclic subgroup of the group $\mathbb{T}$ and the corresponding eigenvectors form a cyclic group which is isomorphic to $\Gamma$ under the isomorphism $\lambda \rightarrow u_{\lambda}$. 
\end{itemize}
\end{enumerate}
\end{theorem}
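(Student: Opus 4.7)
The plan is to prove the three parts sequentially, mirroring classical Perron--Frobenius arguments adapted to the positive cone of $\Md$. For part (1), I would apply a fixed-point argument: since $\UCP$ is positive and the state space is compact and convex, the map $\rho \mapsto \UCP(\rho)/\Tr(\UCP(\rho))$ (restricted to the nonempty convex subset on which the denominator does not vanish, which is available once $r > 0$) is a continuous self-map whose Brouwer fixed point $\rho_0$ satisfies $\UCP(\rho_0) = s\rho_0$ for some $s > 0$. Identifying $s$ with the spectral radius $r$ uses the standard fact that for positive maps on finite-dimensional cones, the largest eigenvalue with a positive eigenvector equals the spectral radius; alternatively, one can appeal directly to the finite-dimensional Krein--Rutman theorem.

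For part (2), the key preliminary step is to promote $x$ to a strictly positive (invertible) eigenvector. Letting $p$ be the support projection of $x$, the sandwich $\varepsilon p \leq x \leq \|x\|p$ on $\mathrm{range}(p)$ combined with $\UCP(x) = rx$ gives $\UCP(p) \leq \lambda p$ for some $\lambda > 0$, so irreducibility forces $p = 1$. Given another positive $y$ with $\UCP(y) = sy$, set $\alpha = \max\{t > 0 : y - tx \geq 0\}$, which is finite by invertibility of $x$. Applying $\UCP$ to $y - \alpha x \geq 0$ yields $\UCP(y - \alpha x) = s\bigl(y - (\alpha r/s)x\bigr) \geq 0$, so $y - (\alpha r/s) x \geq 0$; maximality of $\alpha$ forces $r \leq s$, and swapping the roles of $x$ and $y$ gives $s \leq r$, so $s = r$. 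Then $y - \alpha x$ is itself a positive $r$-eigenvector with nontrivial kernel, contradicting the first step of part (2) unless $y = \alpha x$.

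For part (3), unitality makes $1$ a positive fixed point, so part (2) gives $r = 1$, and any self-adjoint fixed point $a$ with $\|a\| \leq 1$ yields $1 \pm a \geq 0$ positive and fixed, each a scalar multiple of $1$, forcing $a \in \mathbb{R}1$ and $\rm{Fix}_\UCP = \mathbb{C} 1$. The peripheral analysis hinges on producing a \emph{faithful invariant state}: the adjoint $\UCP^*$ is positive, trace-preserving (by unitality), and irreducible (since $\UCP(p) \leq \lambda p$ is equivalent, via pairing with $1-p$, to $\UCP^*(1-p) \leq \mu(1-p)$), so parts (1)--(2) applied to $\UCP^*$ yield an invertible $\rho$ with $\UCP^*(\rho) = \rho$. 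Now if $\UCP(u) = \lambda u$ with $|\lambda|=1$, Schwarz gives $\UCP(u^*u) \geq u^*u$, and
\[
\Tr\bigl(\rho\, \UCP(u^*u)\bigr) = \Tr\bigl(\UCP^*(\rho)\, u^*u\bigr) = \Tr(\rho u^*u),
\]
so faithfulness of $\rho$ forces $\UCP(u^*u) = u^*u$. Hence $u^*u$ is a fixed point and therefore scalar; likewise $uu^*$ is scalar, so $u$ is proportional to a unitary. Equality in Schwarz places $u$ in the multiplicative domain $\M_\UCP$, from which $\UCP(u_\lambda x) = \UCP(u_\lambda)\UCP(x) = \lambda u_\lambda \UCP(x)$ is immediate. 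For the group structure: if $u_\lambda, u_\mu$ are peripheral, then $u_\lambda u_\mu \in \M_\UCP$ and $\UCP(u_\lambda u_\mu) = \lambda\mu\, u_\lambda u_\mu$, so $\Gamma$ is multiplicatively closed, and finiteness of the peripheral spectrum makes it a finite, hence cyclic, subgroup of $\mathbb{T}$.

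I expect the hardest step to be the peripheral analysis in part (3): producing the faithful invariant state and exploiting it to promote the Schwarz inequality to equality. Without this device it is unclear how to force $u \in \M_\UCP$, which is what makes $u^*u$ a fixed point (hence scalar) and what makes $\Gamma$ closed under multiplication. The delicate technical ingredient enabling all of this is the self-duality of irreducibility for positive maps, which lets the Perron eigenvector for $\UCP^*$ play the role of an invariant faithful state; once this is in hand the rest of part (3) flows directly from Schwarz and the uniqueness established in part (2).
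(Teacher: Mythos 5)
This theorem is imported into the paper as background, attributed to Evans and H{\o}egh-Krohn; the paper contains no proof of it, so there is no internal argument to compare yours against. What you have written is, in outline, the standard noncommutative Perron--Frobenius proof from that literature: Krein--Rutman for (1), the support-projection trick and the extremal-$\alpha$ comparison for (2), and for (3) the faithful invariant state coming from irreducibility of $\UCP^*$, used to saturate the Schwarz inequality and push peripheral eigenvectors into the multiplicative domain. You correctly identify that last device as the crux, and your duality argument for why irreducibility passes to $\UCP^*$ is right.

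Two steps are thinner than they should be. First, in part (1) a Brouwer fixed point of the normalized map only yields a positive eigenvector for \emph{some} eigenvalue $s>0$, and $s$ need not be the spectral radius: for $\UCP(x)=x_{11}E_{11}+2x_{22}E_{22}$ on $\M_2$ the state $E_{11}$ is such a fixed point with $s=1$ while $r=2$. So you should rest part (1) on Krein--Rutman outright, not on the normalization trick; the ``largest eigenvalue with a positive eigenvector'' fact does not rescue an arbitrary Brouwer fixed point. Relatedly, in part (2) your $\alpha=\max\{t>0: y-tx\geq 0\}$ exists (and is positive) only if $y$ itself is invertible, which does not follow from invertibility of $x$; you must first run your support-projection argument on $y$ as well. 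Second, the statement asserts that every peripheral eigenvalue is \emph{simple}, and your proof never addresses this. It does follow from your own machinery --- if $\UCP(v)=\lambda v$ then $u_\lambda^*v$ lies in $\M_\UCP$ and satisfies $\UCP(u_\lambda^*v)=\bar{\lambda}\lambda\,u_\lambda^*v=u_\lambda^*v$, hence is a scalar, so $v\in\C u_\lambda$ --- but the step has to be stated. With those repairs the argument is complete and matches the cited source's method.
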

Often irreducible channels are called \emph{ergodic channels}. Ergodic/irreducible positive maps have been a great topic of interest (see \cite{evans-krohn}, \cite{farenick}, \cite{ergodic}). The study of such maps enriched the analysis of non-commutative Perron-Frobenius theory. Although ergodicity of a quantum dynamical system (discrete or continuous) has received much attention, the same analysis in the tensor product framework has been talked about less except \cite{Watanabe} and \cite{Luczak}. Here we present necessary and sufficient conditions for a channel to be irreducible and primitive in the tensor product system. By the aid of Theorem \ref{thm:fixed_point_products} we recapture Theorem 5.3 in \cite{Watanabe}.
\begin{theorem}
Let $\UCP_1$ be an irreducible unital quantum channel with $n$ peripheral eigenvalues $\Gamma_n$. Then:
\begin{enumerate}
\item
The product $\UCP_1\otimes\UCP_1$ is irreducible if and only if $\UCP_1$ is also primitive, in which case $\UCP_1\otimes\UCP_1$ is also primitive.
\item
For any primitive unital channel $\UCP_2$, $\UCP_1\otimes\UCP_2$ is irreducible.
\item
If $\UCP_2$ is irreducible with $m$ peripheral eigenvalues $\Gamma_m$, then $\UCP_1\otimes \UCP_2$ is irreducible if and only if $gcd(n,m)=1$.
\end{enumerate}
\end{theorem}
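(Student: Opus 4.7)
The plan is to reduce each claim to a statement about fixed-point algebras and then apply Corollary \ref{cor:fixed_point_split}. The enabling fact I will use is that for a unital channel $\UCP$, irreducibility is equivalent to $\rm{Fix}_\UCP=\mathbb{C}1$. The forward implication is exactly Theorem \ref{thm:evans-krohn}(3). For the converse, if a non-trivial projection $p$ satisfies $\UCP(p)\le\lambda p$ with $\lambda>0$, then $(1-p)\UCP(p)(1-p)\le 0$ by direct computation and $\ge 0$ by positivity, forcing $(1-p)\UCP^k(p)(1-p)=0$ for all $k$. The Cesaro means $\tfrac{1}{N}\sum_{k=0}^{N-1}\UCP^k(p)$ have trace $\tr(p)$ by trace preservation and are bounded, so they admit a subsequential limit $q\ge 0$ which is a fixed point of $\UCP$, is non-zero, and has support contained in the range of $p$; hence $q\notin\mathbb{C}1$, contradicting triviality of $\rm{Fix}_\UCP$.

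With this equivalence in hand, statement (2) is almost immediate. Primitivity of $\UCP_2$ gives $\rm{Spec}_{\UCP_2}\cap\mathbb{T}=\{1\}$, so the intersection of peripheral spectra is trivial and Corollary \ref{cor:fixed_point_split} splits the fixed points as $\rm{Fix}_{\UCP_1}\otimes\rm{Fix}_{\UCP_2}=\mathbb{C}1\otimes\mathbb{C}1=\mathbb{C}1$, whence $\UCP_1\otimes\UCP_2$ is irreducible. For statement (3), Theorem \ref{thm:evans-krohn}(3) tells us that each peripheral spectrum is a finite cyclic subgroup of $\mathbb{T}$, hence $\Gamma_n$ and $\Gamma_m$ are the groups of $n$-th and $m$-th roots of unity respectively. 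A standard fact about cyclic subgroups of $\mathbb{T}$ is $\Gamma_n\cap\Gamma_m=\Gamma_{\gcd(n,m)}$, which equals $\{1\}$ exactly when $\gcd(n,m)=1$. Applying Corollary \ref{cor:fixed_point_split} together with the irreducibility equivalence proves the ``iff'' in (3).

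Statement (1) then follows as the diagonal case of (3): taking $\UCP_2=\UCP_1$ and $m=n$ gives $\gcd(n,n)=n=1$ exactly when $\UCP_1$ is primitive. For the secondary claim that $\UCP_1\otimes\UCP_1$ is itself primitive whenever $\UCP_1$ is, I invoke Theorem \ref{thm:fixed_point_products}: any peripheral eigenvalue $\lambda\in\mathbb{T}$ of $\UCP_1\otimes\UCP_1$ must factor as $\mu_1\mu_2$ with $\mu_1,\mu_2\in\rm{Spec}_{\UCP_1}\cap\mathbb{T}=\{1\}$, so $\lambda=1$.

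The main technical obstacle is the converse direction of the irreducibility/fixed-point equivalence, since Theorem \ref{thm:evans-krohn} only states the forward direction. The Cesaro-averaging construction must be set up carefully, using trace preservation to guarantee that the limiting fixed point is non-zero and using the iterated support-containment $(1-p)\UCP^k(p)(1-p)=0$ to keep its support strictly inside the range of $p$. Once this is settled, the rest of the proof is a short assembly of Corollary \ref{cor:fixed_point_split}, Theorem \ref{thm:fixed_point_products}, and elementary facts about cyclic subgroups of $\mathbb{T}$.
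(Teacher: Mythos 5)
Your proof is correct and follows essentially the same route as the paper: both arguments reduce irreducibility of the tensor product to triviality of its fixed-point algebra and then assemble Corollary \ref{cor:fixed_point_split}, Theorem \ref{thm:fixed_point_products}, and the cyclic-group structure of the peripheral spectrum from Theorem \ref{thm:evans-krohn}. The one place you go beyond the paper is the explicit Ces\`aro-mean proof of the converse implication ``$\rm{Fix}_{\UCP}=\C 1$ implies $\UCP$ irreducible''; the paper uses this equivalence tacitly, so your added lemma makes rigorous a step the published proof leaves implicit rather than changing the overall strategy.
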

\begin{proof}
(1) For $\UCP_1\otimes\UCP_1$ to be irreducible, its fixed points would need to be $\C (1\otimes 1)$, meaning the fixed points would have to split. By Corollary \ref{cor:fixed_point_split}, this would occur if and and only if the peripheral spectrum of $\UCP_1$ is trivial, meaning $\UCP_1$ is primitive. Since the spectrum of a quantum channel is contained in the unit disc, in this case the peripheral spectrum of $\UCP_1\otimes\UCP_1$ will still be trivial and thus it will be primitive. \par
(2) Since $\UCP_2$ is primitive, its only eigenvalue is 1 with eigenvector $1$. Thus the fixed points of $\UCP_1\otimes\UCP_2$ will split, and since both fixed point algebras are trivial, the product will also be trivial.

For item (3), if $gcd(n,m)=1$, then the two cyclic groups $\Gamma_n, \Gamma_m$ intersect trivially and hence by Corollary \ref{cor:fixed_point_split} we get  $\rm{Fix}_{\UCP_1\otimes\UCP_2}=\rm{Fix}_{\UCP_1}\otimes\rm{Fix}
_{\UCP_2}=\mathbb{C}1\otimes\mathbb{C}1=\mathbb{C}(1\otimes 1)$. 
\newline
Conversely, if $\UCP_1\otimes\UCP_2$ is irreducible, then $\rm{Fix}_{\UCP_1\otimes\UCP_2}=\mathbb{C}1$. From Theorem \ref{thm:evans-krohn} we know that the peripheral spectrum of $\UCP_1\otimes\UCP_2$ is a cyclic subgroup of some order $N$.
Since $\rm{Fix}_{\UCP_1}=\mathbb{C}1=\rm{Fix_{\UCP_2}}$, it is evident that this can only happen 
if the fixed point algebra splits. By Corollary \ref{cor:fixed_point_split} again we conclude that $\Gamma_n\cap\Gamma_m=\{1\}$; that is, $gcd(n,m)=1$.  
\end{proof}
Theorem \ref{thm:fixed_point_products} gives structure to the eigenspaces of these eigenvalues. For some intuition on this, a channel acts like an automorphism on its stabilized multiplicative domain, so in some sense it is ``normal'' on this subalgebra. The eigenspaces of the tensor product of two normal matrices will simply be the products of the original eigenspaces, and here something similar holds for the ``normal part'' of the channel.    


\section{Restrictions on the Multiplicative Index}\label{sec:restrictions_on_kappa}
\subsection{Maximal Unital Proper *-Subalgebras (MUPSAs)}\label{sec:MUPSAS}
Proposition \ref{prop:kappa_tensor_bound} restricts which channels can be product channels, since the multiplicative index must be the same as the multiplicative index of one of the channels in the product. Our goal is thus to restrict the possible values of the multiplicative index. An obvious bound is the dimension of the matrix algebra, $d^2$, but in fact we can do much better by looking at chains of maximal unital proper *-subalgebras, defined in the obvious way as follows:
\begin{definition}
	An algebra $\A$ is a maximal unital proper *-subalgebra (for convenience, a ``MUPSA") of a C*-algebra $\B$ if $\A$ is unital proper *-subalgebra of $\B$ (meaning $\A\neq \B$, $1\in\A$, and $\A^*=\A$) such that if $\tilde{\A}$ is another unital proper *-algebra with $\A\subseteq \tilde{\A}\subseteq\B$, then either $\tilde{\A}=\B$ or $\tilde{\A}=\A$.
\end{definition}
While there are many possible forms of a subalgebra of $\M_d$, restricting to MUPSAs allows us to precisely characterize their structure, up to isomorphism.
We use the Wedderburn decomposition extensively. For a matrix algebra $\A$, one can always decompose it as
\[\A\cong \bigoplus_{r=1}^m \M_{n_r}\otimes 1_{k_r}.\]
This is the Wedderburn decomposition.
\begin{lemma}\label{lem:MdMUPSA}
	If $\A$ is a MUPSA of $\M_d$, then (up to isomorphism) $\A=\M_{d-r}\oplus \M_r$, where $1\leq r\leq d-1$.
\end{lemma}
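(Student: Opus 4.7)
The plan is to apply the Wedderburn decomposition
\[
\A \;\cong\; \bigoplus_{r=1}^m \M_{n_r}\otimes 1_{k_r}, \qquad \sum_{r=1}^m n_r k_r = d,
\]
and show that the MUPSA hypothesis forces $m=2$ and $k_1=k_2=1$. The idea throughout is: whenever the Wedderburn data has any ``slack'' (either a nontrivial multiplicity $k_r$, or more than two factors), I exhibit a unital *-subalgebra $\tilde\A$ with $\A \subsetneq \tilde\A \subsetneq \M_d$, contradicting maximality.

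First, I would rule out nontrivial multiplicities. Suppose some $k_s \geq 2$, and let $\mathscr{D} \subseteq \M_{k_s}$ denote the diagonal matrices. Set
\[
\tilde\A \;:=\; (\M_{n_s}\otimes \mathscr{D}) \;\oplus\; \bigoplus_{r\neq s}\M_{n_r}\otimes 1_{k_r}.
\]
Since $\C 1_{k_s} \subsetneq \mathscr{D}$, we have $\A \subsetneq \tilde\A$. As an abstract *-algebra, $\tilde\A \cong \M_{n_s}^{\oplus k_s} \oplus \bigoplus_{r\neq s}\M_{n_r}$ has $k_s + (m-1) \geq 2$ simple Wedderburn summands, hence $\dim Z(\tilde\A) \geq 2$. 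Since $Z(\M_d)=\C 1$ is one-dimensional, this forces $\tilde\A \neq \M_d$, contradicting maximality. Hence every $k_r = 1$.

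Next, I would control the number of summands. With all $k_r=1$, $\A \cong \bigoplus_r \M_{n_r}$ is embedded as block-diagonal matrices with $\sum n_r = d$. The case $m=1$ gives $\A = \M_d$, violating properness. If $m\geq 3$, fuse the first two blocks by taking
\[
\tilde\A \;:=\; \M_{n_1+n_2}\;\oplus\;\bigoplus_{r\geq 3}\M_{n_r},
\]
where $\M_{n_1+n_2}$ acts on the combined subspace $\C^{n_1}\oplus\C^{n_2}$. Then $\A\subsetneq \tilde\A$ because $\M_{n_1}\oplus\M_{n_2} \subsetneq \M_{n_1+n_2}$, and $\tilde\A$ has $m-1\geq 2$ Wedderburn summands, so by the same center argument $\tilde\A\neq \M_d$, again contradicting maximality. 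Therefore $m=2$, and $\A \cong \M_{n_1}\oplus \M_{n_2}$ with $n_1,n_2\geq 1$ and $n_1+n_2 = d$. Setting $r:=n_2$ gives $\A \cong \M_{d-r}\oplus \M_r$ with $1\leq r\leq d-1$.

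The main obstacle is the properness check on each $\tilde\A$: strictness $\A\subsetneq \tilde\A$ is essentially automatic from the construction, but we must also verify $\tilde\A \neq \M_d$. The cleanest invariant for this is the dimension of the center, equivalently the number of Wedderburn summands, since $Z(\M_d)$ is one-dimensional. Both enlargements above are designed so that $\tilde\A$ has at least two Wedderburn summands, which handles the properness check uniformly and is the only place where nonroutine care is needed.
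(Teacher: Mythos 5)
Your proof is correct and follows essentially the same strategy as the paper's: take the Wedderburn decomposition and contradict maximality by exhibiting an intermediate unital *-subalgebra whenever $m\neq 2$ or some $k_r>1$ (your $\M_{n_s}\otimes\mathscr{D}$ is the same intermediate algebra as the paper's split into $k_s$ direct summands, just described differently). The only additions are cosmetic — you eliminate multiplicities before counting summands rather than after, and you make the check $\tilde\A\neq\M_d$ explicit via the dimension of the center, which the paper leaves as an assertion.
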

\begin{proof}
	Let $\A$ be a *-subalgebra of $\M_d$. Then let 
	\[\A=\bigoplus_{r=1}^m\M_{n_r}\otimes 1_{k_r}\]
	be the Wedderburn decomposition of $\A$. If $m\geq 3$, then the following subalgebra
	\[\tilde{\A}=\M_{n_1}\otimes 1_{k_1}\oplus \M_{\sum_{r=2}^mn_rk_r}\]
	will strictly contain $\A$, but be strictly contained in $\M_d$, contradicting the maximality of $\A$. If $m=1$, then $\A=\M_{d/p}\otimes 1_p$ (for some number $p$ dividing $d$). Then $\A\subsetneq \M_{d/p}\oplus \M_{d(p-1)/p}$, contradicting maximality of $\A$. Thus $m=2$, and $\A=\M_{n_1}\otimes 1_{k_1}\oplus \M_{n_2}\otimes 1_{k_2}$. If $k_1>1$, then $\A$ is a proper subalgebra of 
	\[\underbrace{\M_{n_1}\oplus\cdots\oplus \M_{n_1}}_\text{$p$ times} \oplus \M_{n_2}\otimes 1_{k_2}\]
	which in turn is a proper subalgebra of $\M_d$, again contradicting maximality. The same argument applies to $k_2$, and thus
	\[\A=\M_{n_1}\oplus \M_{n_2}.\]
	Since $\A$ is unital, $n_1+n_2=d$, so we can write $n_1=d-r$ and $n_2=r$, for $0\leq r\leq d$. If $r=0$ or $r=d$, then $\A=\M_d$, so $1\leq r\leq d-1$.
\end{proof}
\begin{theorem}\label{thm:any_MUPSA}
	Let $\B$ be a unital *-subalgebra of $\M_d$ with Wedderburn decomposition $\B=\bigoplus_{r=1}^m\M_{n_r}\otimes 1_{k_r}$. If $\A$ is a MUPSA of $\B$, then, up to unitary equivalence, $\A$ has one of the following forms:
	\begin{enumerate}
		\item
		\[\A=\left(\M_{n_j-s}\otimes 1_{k_j}\right)\oplus\left(\M_{s}\otimes 1_{k_j}\right)\oplus\bigoplus_{r=1,r\neq j}^m\M_{n_r}\otimes 1_{k_r}\]
		for some $1\leq j\leq m$ and some $s$ such that $1\leq s\leq n_j-1$, or
		\item
		\[\A=\left(\M_{n_j}\otimes 1_{k_j+k_i}\right)\oplus\bigoplus_{r=1,r\neq i,j}^m\M_{n_r}\otimes 1_{k_r}\]
		for some $i,j$ such that $1\leq i,j\leq m$ and $n_j=n_i$.
	\end{enumerate}
\end{theorem}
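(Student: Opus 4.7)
The plan is to classify a MUPSA $\A$ of $\B$ according to whether the center $Z(\B)=\Span\{z_1,\ldots,z_m\}$ is contained in $\A$, where $z_r$ denotes the minimal central projection of $\B$ with $z_r\B=\M_{n_r}\otimes 1_{k_r}$. The two cases will produce forms (1) and (2), respectively.

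First suppose $Z(\B)\subseteq\A$. Then $\A=\bigoplus_r z_r\A z_r$ decomposes as a direct sum of unital *-subalgebras of the blocks $z_r\B$. If two summands were simultaneously proper in their respective blocks, one could enlarge one of them to the full block while leaving the other proper, producing a strict intermediate subalgebra and violating maximality. Hence exactly one summand, say the $j$-th, is proper and must itself be a MUPSA of $z_j\B=\M_{n_j}\otimes 1_{k_j}$. Since every *-subalgebra of $\M_{n_j}\otimes 1_{k_j}$ has the form $\mathcal{C}\otimes 1_{k_j}$ for some *-subalgebra $\mathcal{C}\subseteq\M_{n_j}$, Lemma \ref{lem:MdMUPSA} applied to $\M_{n_j}$ forces $\mathcal{C}\cong\M_{n_j-s}\oplus\M_s$ for some $1\leq s\leq n_j-1$, which is exactly form (1).

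Now suppose some $z_j\notin\A$. Because $z_j$ commutes with $\A$, the *-algebra generated by $\A$ and $z_j$ equals $z_j\A\oplus(1-z_j)\A$, strictly contains $\A$, and hence equals $\B$ by maximality. This forces $z_j\A=z_j\B$, i.e., the natural *-homomorphism $\A\to z_j\B$ is surjective. Because $z_j\B$ is simple, this surjection factors through projection onto a single Wedderburn block of $\A=\bigoplus_s\M_{n'_s}\otimes 1_{k'_s}$ followed by an isomorphism, so each such $z_j\notin\A$ selects a unique block index $s(j)$ with $n'_{s(j)}=n_j$ and embedding multiplicity one into $z_j\B$. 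Letting $\pi$ denote the partition of $\{1,\ldots,m\}$ corresponding to $Z(\B)\cap\A$, this observation applies to every $r$ lying in any non-singleton block of $\pi$.

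The heart of the argument is then to show: (i) $\pi$ has at most one non-singleton block, since otherwise refining $\pi$ by adjoining a single central projection $z_r$ yields a strict intermediate subalgebra; (ii) the non-singleton block has size exactly two, since peeling off one index from a cluster of size $\geq 3$ still leaves a merger intact and yields a strict intermediate subalgebra; (iii) for each singleton block $\{r\}$ of $\pi$, $z_r\A=z_r\B$, since otherwise one could enlarge the exceptional singleton block while keeping the non-singleton merger intact; and (iv) the two indices $i,j$ in the unique non-singleton block are assigned the \emph{same} Wedderburn block of $\A$, forcing $n_i=n_j$. Summing multiplicity contributions then gives the merged block of $\A$ multiplicity $k_i+k_j$, and combined with (iii) this is exactly form (2). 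The main obstacle is the case-by-case bookkeeping in the second case: tracking the Wedderburn multiplicity data along the inclusion $\A\subseteq\B$ and explicitly constructing the intermediate subalgebras that rule out every configuration other than the two listed.
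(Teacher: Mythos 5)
Your first case ($Z(\B)\subseteq\A$) is complete and correct, and it is a genuinely different route from the paper's: the paper strips off the tensor factors, invokes Bratteli's embedding theorem (Proposition \ref{prop:bratteli}) to obtain the multiplicity matrix $p_{rs}$, and runs the maximality analysis on that data in Lemma \ref{lem:bratteli_MUPSA}. Your second case, however, contains a genuine gap: the four claims (i)--(iv) that you yourself label ``the heart of the argument'' are asserted with one-clause hints rather than proved, and the hints do not go through as stated. The recurring difficulty is \emph{properness}. To rule out a configuration you adjoin a central projection $z_r$ with $z_r\notin\A$ and claim that $z_r\A\oplus(1-z_r)\A$ is still a proper subalgebra of $\B$ because some other ``merger is left intact.'' But when $z_r\notin\A$ the compression $(1-z_r)\A$ is \emph{not} contained in $\A$, and compressions can unlock central projections that $\A$ itself lacks: for $\A=\{(x,x,y)\}\subseteq\M_2\oplus\M_2\oplus\M_2$ one has $(1-z_1)\A=(1-z_1)\B\ni z_2$ even though $z_2\notin\A$. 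So to show that a merger between $r'$ and $r''$ survives compression by $1-z_r$ you must already know that a single Wedderburn block of $\A$ surjects onto both $z_{r'}\B$ and $z_{r''}\B$ and onto nothing else --- which is essentially your claim (iv), itself unproved. Claims (i), (ii) and (iv) are therefore interdependent, and what is missing is exactly the bookkeeping of how the blocks of $\A$ distribute over those of $\B$, i.e.\ the content the paper imports wholesale from Bratteli.

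The gap is repairable within your framework. From your correct observation that each surjection $\rho_r\colon\A\to z_r\B$, $a\mapsto z_ra$, has simple image and hence kills all Wedderburn blocks of $\A$ except one, say $s(r)$, one should: show every block of $\A$ equals $s(r)$ for some $r$ (a block killed by all $\rho_r$ is zero); dispose of non-surjective compressions as in your (iii), where the argument is genuinely easy because $z_r\in\A$ forces $(1-z_r)\A\subseteq\A$; and then argue on the fibers of the map $r\mapsto s(r)$, since $\A$ is the ``diagonal'' algebra determined by that partition of $[m]$ and maximality forces exactly one fiber of size two (with matching $n_i=n_j$ for that fiber). Until this is written out, the second case --- which is where form (2) and half of the theorem live --- is a plan rather than a proof.
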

Before the proof, we recall a result of Bratteli's from \cite{bratteli2} that will be very useful. 
\begin{prop}\label{prop:bratteli}
Let $\A\cong\oplus_{r=1}^\ell \M_{a_r}$ and $\B\cong\oplus_{r=1}^{m}\M_{n_r}$ as algebraic isomorphisms, with $\A\subseteq\B$. Then there exist integers $p_{rs}\in\N\cup\{0\}$ for $r\in\{1,\cdots,m\}$ and $s\in\{1,\cdots,\ell\}$ such that we can identify $\A$ with
\[\bigoplus_{r=1}^{m}\left(\bigoplus_{s=1}^\ell \M_{a_s}\otimes 1_{p_{rs}}\right),\]
with the convention that, for any two matrix algebras $\M_{n_1}$ and $\M_{n_2}$, $\M_{n_1}\oplus (\M_{n_2}\otimes 1_0)=\M_{n_1}$.
\end{prop}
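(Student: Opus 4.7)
The plan is to exploit the center of $\B$ together with the classification of finite-dimensional *-representations of multi-matrix algebras. First I would fix the minimal central projections $e_1, \ldots, e_m \in \B$ associated with the given Wedderburn decomposition, so that $1_\B = \sum_{r=1}^m e_r$ and $e_r \B e_r \cong \M_{n_r}$. Since the $e_r$ are central in $\B$ and $\A \subseteq \B$, they commute with every element of $\A$. Define $\pi_r : \A \to e_r \B e_r \cong \M_{n_r}$ by $\pi_r(a) = e_r a$; each $\pi_r$ is then a *-homomorphism, and the map $a \mapsto \bigoplus_r \pi_r(a)$ is an injective *-homomorphism realizing the inclusion $\A \hookrightarrow \B$.

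Next I would invoke the standard fact that any finite-dimensional *-representation of $\A \cong \bigoplus_{s=1}^\ell \M_{a_s}$ decomposes, up to unitary equivalence, as a direct sum of the standard irreducible representations $\rho_s : \A \to \M_{a_s}$ (projection onto the $s$-th summand) with nonnegative integer multiplicities. Applying this to each $\pi_r : \A \to \M_{n_r}$, I obtain nonnegative integers $p_{r1}, \ldots, p_{r\ell}$ and a unitary $u_r \in \M_{n_r}$ such that
\[
u_r \pi_r(\A) u_r^* = \bigoplus_{s=1}^\ell \M_{a_s} \otimes 1_{p_{rs}} \subseteq \M_{n_r},
\]
where summands with $p_{rs} = 0$ are omitted per the stated convention. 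Because each $\pi_r$ takes values in $e_r \B e_r$ and the $e_r$ are mutually orthogonal, the block-diagonal unitary $u = \sum_r u_r \in \B$ conjugates $\A$ onto the subalgebra
\[
\bigoplus_{r=1}^m \left( \bigoplus_{s=1}^\ell \M_{a_s} \otimes 1_{p_{rs}} \right)
\]
displayed in the statement, which is the required identification.

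The main obstacle is verifying that such a simultaneous unitary equivalence is achievable inside $\B$, rather than only after passing to a larger ambient matrix algebra. This is ensured by the centrality of the projections $e_r$: the unitaries $u_r$ live independently in the mutually orthogonal blocks $e_r \B e_r$ and assemble into a single unitary $u \in \B$ without interfering with one another. The structural input underlying the whole argument is the semisimplicity of finite-dimensional C*-algebras, which guarantees that every *-representation is completely reducible and that the multiplicity decomposition into copies of the $\rho_s$ is canonical; this is what makes the integers $p_{rs}$ well defined and independent of the choices of isomorphisms $\A \cong \bigoplus_s \M_{a_s}$ and $e_r \B e_r \cong \M_{n_r}$.
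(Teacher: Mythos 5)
Your argument is correct. Note that the paper does not actually prove this proposition --- it is imported from Bratteli's work (\cite{bratteli2}) and stated informally as background --- so there is no in-paper proof to compare against; what you have written is the standard derivation. Your two ingredients are exactly the right ones: cutting the inclusion $\A\subseteq\B$ by the minimal central projections $e_r$ of $\B$ to get *-homomorphisms $\pi_r:\A\to e_r\B e_r\cong\M_{n_r}$ with $a=\sum_r e_r a$ (hence joint injectivity), and then invoking complete reducibility of finite-dimensional *-representations of $\bigoplus_s\M_{a_s}$ to extract the multiplicities $p_{rs}$. One small point worth making explicit: the representations $\pi_r$ need not be unital, so in general $\sum_s a_s p_{rs}\leq n_r$ and the image of $\pi_r$ may sit in a proper corner of $\M_{n_r}$ with a null complement; this does not affect the identification of $\A$ as an abstract algebra with $\bigoplus_r\bigoplus_s\M_{a_s}\otimes 1_{p_{rs}}$ (the zero corner contributes nothing), and in the paper's actual use case $\A$ is a unital subalgebra of $\B$, which forces $\sum_s a_s p_{rs}=n_r$ for every $r$. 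Your closing remark that the $p_{rs}$ are canonical is true up to permuting isomorphic simple summands, which is all that is needed.
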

This is an informal statement of the proposition, but it says that every block $\M_{a_r}$ in $\A$ is embedded into zero or more blocks of $\B$. Note that the equivalences ignore the tensor factors in the usual Wedderburn decomposition, since these affect only the norms, not the algebraic structure. Hence, to prove Theorem \ref{thm:any_MUPSA} we will first use Bratteli's result for the algebraic structure, then recover the norms.
\begin{lemma}\label{lem:bratteli_MUPSA}
Let $\A$ and $\B$ be matrix algebras such that $\A$ is a MUPSA of $\B$, with
\[\B=\bigoplus_{r=1}^m\M_{n_r},\, \A\cong\bigoplus_{r=1}^\ell\M_{a_r},\]
and the embedding of ${\A}$ into ${\B}$ has the form
\[\bigoplus_{r=1}^{m}\left(\bigoplus_{s=1}^\ell \M_{a_s}\otimes 1_{p_{rs}}\right).\]
Then, up to a permutation of the blocks of $\A$, either:
\begin{enumerate}
\item
The number of blocks in $\A$ is $m+1$, and there is an index $j\in[m]$, such that: 
\begin{itemize}
\item
For all $r\neq j$, $\M_{a_r}=\M_{n_r}$ and $p_{rs}=\delta_{rs}$. 
\item
There is some $t$ with $1\leq t\leq n_j-1$ such that $\M_{a_j}=\M_t$, $\M_{a_{m+1}}=\M_{n_j-t}$, $p_{js}=\delta_{js}+\delta_{(m+1)s}$.
\end{itemize}
\item
The number of blocks in $\A$ is $m-1$, and there are indices $i,j\in[m]$, $j<i$, such that:
\begin{itemize}
\item
For all $r< i$, $\M_{a_r}=\M_{n_r}$ and $p_{rs}=\delta_{rs}$. 
\item
For all $r> i$, $\M_{a_{r-1}}=\M_{n_{r}}$ and $p_{rs}=\delta_{(r-1)s}$.
\item
$\M_{n_i}=\M_{n_j}$ and $p_{is}=\delta_{sj}$. 
\end{itemize}
\end{enumerate}
\end{lemma}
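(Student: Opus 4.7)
The plan is to study, for each block index $r \in [m]$, the image $\A_r$ of $\A$ under the projection $\pi_r : \B \to \M_{n_r}$, and to use maximality to severely restrict both each $\A_r$ individually and the way the $\A_r$ are glued to form $\A$. The Bratteli data tell us that $\A_r = \bigoplus_{s=1}^\ell \M_{a_s} \otimes 1_{p_{rs}}$ as a unital subalgebra of $\M_{n_r}$. The crucial observation is that $\A \subseteq \bigoplus_{r=1}^m \A_r \subseteq \B$, and the middle algebra is a unital *-subalgebra of $\B$ containing $\A$. Maximality of $\A$ in $\B$ therefore forces a dichotomy: either $\bigoplus_r \A_r = \A$ (the ``diagonal'' case, which will produce case~(1)) or $\bigoplus_r \A_r = \B$ (the ``merging'' case, which will produce case~(2)). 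The two cases are mutually exclusive since otherwise $\A = \B$.

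In the diagonal case, $\A = \bigoplus_r \A_r$, so each $p_{rs}$ is concentrated at one $r$ and the blocks of $\A$ partition into groups, one per block of $\B$. If two (or more) indices $r$ had $\A_r \subsetneq \M_{n_r}$, we could enlarge only one of them to $\M_{n_r}$ and obtain a unital *-subalgebra strictly between $\A$ and $\B$, contradicting maximality. Hence $\A_r = \M_{n_r}$ for every $r$ except one index $j$, and for that index $\A_j$ must itself be a MUPSA of $\M_{n_j}$. Lemma~\ref{lem:MdMUPSA} then gives $\A_j \cong \M_t \oplus \M_{n_j - t}$ for some $1 \leq t \leq n_j - 1$. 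Reading off the Bratteli multiplicities yields exactly the combinatorics of case~(1).

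In the merging case, $\A_r = \M_{n_r}$ for every $r$. Analysing when $\bigoplus_s \M_{a_s} \otimes 1_{p_{rs}}$ can equal a full matrix algebra $\M_{n_r}$ (its center is $\C$, so only one $s$ can contribute, and the tensor multiplicity must be~$1$) forces exactly one index $s(r)$ with $p_{r,s(r)} = 1$, $a_{s(r)} = n_r$, and all other $p_{rs} = 0$. This defines a surjection $s : [m] \to [\ell]$; the subalgebra $\A$ is then recovered as $\{x \in \B : \pi_r(x) = \pi_{r'}(x) \text{ whenever } s(r) = s(r')\}$ under the identifications $n_r = a_{s(r)} = n_{r'}$. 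If $s$ were a bijection then $\A = \B$, while if some fibre $s^{-1}(t)$ had size $\geq 3$, say containing $r_1, r_2, r_3$, we could produce a strictly intermediate subalgebra by imposing the identification $\pi_{r_1} = \pi_{r_2}$ only (leaving $\pi_{r_3}$ independent); this is a unital *-subalgebra strictly between $\A$ and $\B$, contradicting maximality. Hence $\ell = m - 1$ and exactly one fibre has size $2$, giving two indices $j < i$ with $n_i = n_j$ that are identified; after relabelling, this is case~(2). The main obstacle is the merging case: one must argue carefully that the ``partial merge'' really lives inside $\B$ and is a proper unital *-subalgebra, which is where the equality $n_{r_k} = a_t$ and the fact that $s$ encodes the full embedding data are both essential.
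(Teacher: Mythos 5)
Your proposal is correct and follows essentially the same route as the paper's proof: project $\A$ onto each block of $\B$ to get $\A\subseteq\bigoplus_r\A_r\subseteq\B$, use maximality to force either exactly one proper block (which must be a MUPSA of $\M_{n_j}$, handled by Lemma \ref{lem:MdMUPSA}) or all blocks full with a single two-element fibre in the block-assignment map. The only cosmetic difference is that you rule out fibres of size $\geq 3$ by an explicit partial-merge argument, whereas the paper constructs the merged algebra $\tilde{\A}$ directly and lets maximality force $\A=\tilde{\A}$.
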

This lemma states that, with one or two exceptions, every block of $\A$ maps surjectively into a block of $\B$. For the remaining block(s), either there are two blocks of $\A$ that map into one block of $\B$, or there is one block of $\A$ that maps to two blocks of $\B$.\par
Note that we assume $\B$ is equal to the structure without tensor products, but we can only assume $\A$ is isomorphic to  such a structure. The decomposition of $\A$ given in the statement of Lemma \ref{lem:bratteli_MUPSA} ignores the dimension, and the embedding into $\B$ may not be isometric. Indeed, if case 2 holds, then one block of $\A$ will contain a tensor product with $1_2$.
\begin{proof}
For all $r\in[m]$, define $\A_{n_r}$ as the $r$th block of the embedding of $\A$, i.e.:
\begin{equation}
\A_{n_r}=\bigoplus_{s=1}^\ell \M_{i_s}\otimes 1_{p_{rs}}\subseteq \M_{n_r}.\label{eq:mupsa1}
\end{equation}
With this notation, we have that
\[\A_\B\subseteq\bigoplus_{r=1}^m \A_{n_r}\subseteq {\B}\]
where ${\A}_{\B}$ is the image of ${\A}$ of the embedding into ${\B}$. Note that $\A_{\B}$ must also be a MUPSA of $\B$.\par
For each $r$, $\A_{n_r}$ may be a proper subalgebra of $\M_{n_r}$ or not. Suppose there is some $j$ where it is a proper subalgbera.  Then we can take the subalgebra $\tilde{\A}$ defined by
\[\tilde{\A}=\A_{n_{j}}\oplus\bigoplus_{r\neq j}\M_{n_r}\]
and this will be a proper subalgebra of ${\B}$ and it will contain ${\A}_{\B}$. Since $\A_{\B}$ is also a MUPSA, $\tilde{\A}=\A_{\B}$. Thus, $\A$ must have the form of $\tilde{\A}$, so $\A$ can have at most one $j$ such that $\A_{n_{j}}$ is a proper subalgebra of $\M_{n_{j}}$.\par
In this case, we can argue that $\A_{n_{j}}$ must itself be a MUPSA of $\M_{n_{j}}$, or ${\A}$ would not be maximal - we could take a MUPSA as the $j$th block instead. By Lemma \ref{lem:MdMUPSA}, $\A_{n_{j}}$ must have the form $\M_{n_j-t}\oplus \M_t$ for some $t$ with $1\leq t\leq n_j-1$. This proves Part (1).\par
The other possible situation is where $\A_{n_r}=\M_{n_r}$ for all $r$. This means that in the notation of Equation \ref{eq:mupsa1}, there can only be one block of $\A$ in each block of $\B$, so for each $r$, there is a unique $s_0(r)$ such that $p_{rs_0(r)}=1$, and $p_{rs}=0$ for all $s\neq s_0(r)$. This means that the embedding of $\A$ into $\B$ looks like
\[\bigoplus_{s=1}^\ell \M_{a_s}\mapsto\bigoplus_{r=1}^{m}\left(\M_{a_{s_0(r)}}\right).\]
The direct sum on the left is not all of $\A$, it is only isomorphic to $\A$. A block on the left might appear twice in the embedding if there is some $i\neq j$ such that $s_0(i)=s_0(j)$. This is how, even though each block is surjectively covered by the embedding, $\A$ can still be a proper subalgebra of $\B$, since $\B$ has more freedom between blocks.\par
If $\ell=m$, then each block of $\A$ embeds surjectively into each block of $\B$, implying the contradictory statement that $\A=\B$. Thus $\ell<m$. This means there must be some $i$ and $j$ such that $s_0(i)=s_0(j)$. That is, some block of $\A$ maps to two blocks in $\B$. We define an algebra $\tilde{\A}$ with $\A\subseteq\tilde{\A}\subseteq\B$ such that 
\[\tilde{\A}=\bigoplus_{s\neq i,s\leq m}\M_{n_s}\mapsto \bigoplus_{r=1}^m\left((\M_{n_r}\otimes 1_{1-\delta_{ri}})\oplus (\M_{n_j}\otimes 1_{\delta_{ri}})\right).\]
That is, $\tilde{\A}$ is just all of the blocks of $\B$ except the $i$th block; to embed it into $\B$, we use the identity on all blocks, and send a copy of the $j$th block to the $i$th block of $\B$. Since we required that each block $\A_{n_j}=\M_{n_j}$, then $\M_{n_j}=\M_{n_i}$. Clearly, $\tilde{\A}$ is a proper subalgebra of $\B$, and by this construction, $\tilde{\A}$ must contain $\A$. Hence $\tilde{\A}=\A$.\par
Thus a MUPSA must have the form of $\tilde{\A}$ for some blocks $i$ and $j$, hence $\ell=m-1$ and in all other blocks, $\A$ and $\B$ are equal. This proves Part (2).
\end{proof}
\begin{proof}[Proof of Theorem \ref{thm:any_MUPSA}]
Given 
\[\B=\bigoplus_{r=1}^m\M_{n_r}\otimes 1_{k_r}\]
we can define a new algebra $\tilde{\B}$ as
\[\tilde{\B}=\bigoplus_{r=1}^m\M_{n_r}.\]
This will be *-isomorphic, but not isometric, to $\B$. The natural isomorphism $\phi:\tilde{\B}\rightarrow \B$ can be defined as
\[\phi(x_1,\cdots,x_m)=(x_1\otimes 1_{k_1},\cdots,x_m\otimes 1_{k_m}).\]
Then we can let $\tilde{\A}=\phi^{-1}(\A)\subseteq\tilde{\B}$. In fact, $\tilde{\A}$ will be a MUPSA of $\tilde{\B}$, since any subalgebra of $\tilde{\B}$ can map to a subalgebra of $\B$.\par
Then, ignoring tensor products, we can write
\[\tilde{\A}\cong\bigoplus_{r=1}^\ell\M_{a_r}\]
and apply Lemma \ref{lem:bratteli_MUPSA} and consider the two cases. \par
In the first case, $\ell=m+1$, and the decomposition of $\tilde{\A}$ looks like
\[\tilde{\A}\mapsto \M_{t}\oplus\M_{n_j-t}\oplus\bigoplus_{r\neq j;r\leq m}\M_{n_r}.\]
By dimension counting, this must actually equal $\tilde{\A}$, so
\[\tilde{\A}= \M_{t}\oplus\M_{n_j-t}\oplus\bigoplus_{r\neq j;r\leq m}\M_{n_r}.\]
Then we can write
\[\A=\phi(\tilde{\A})=(\M_t\otimes 1_{k_j})\oplus(\M_{n_j-t}\otimes 1_{k_j})\oplus\bigoplus_{r\neq j;r\leq m}\M_{n_r}\otimes 1_{k_r},\]
thus proving Part (1).\par
In the second case, $\ell=m-1$ and the embedding of $\tilde{\A}$ into $\tilde{\B}$ is
\[\tilde{\A}\mapsto (\M_{n_j}\otimes 1_2)\oplus\bigoplus_{r\neq i,j;r\leq m}\M_{n_r}.\]
Here we've used the fact that $\M_{n_j}$ maps into two blocks and replaced these two blocks with a tensor product with $1_2$. Once again, by dimension counting, this is not just an embedding, it is the actual structure of $\tilde{\A}$. Hence, we can write $\A$ as $\phi(\tilde{\A})$. To handle the $j$th block, note that any element of $\tilde{\A}$ has the same elements in the $i$ and $j$ components, so it looks like $(x_j,x_j)$. When we apply $\phi$ to these components, they become $(x_j\otimes 1_{k_i},x_j\otimes 1_{k_j})=x_j\otimes 1_{k_i+k_j}$. Thus,
\[\A=\phi(\tilde{\A})=(\M_{n_j}\otimes 1_{k_i+k_j})\oplus\bigoplus_{r\neq i,j;r\leq m}\M_{n_r}\otimes 1_{k_r}.\]
This proves part (2).
\end{proof}

This characterization of MUPSAs also characterizes the lattice of proper *-algebras of $\M_d$. For example, if $d=4$, then the MUPSAs form the lattice shown in Figure \ref{fig:lattice_diagram}. Note that in the figure, the length of the longest chain of subalgebras, including $\M_4$, is 7. The next lemma generalizes this.
\begin{figure}[H]
	\caption{The lattice of unital *-subalgebras of $\M_4$. Since $\M_1=\C$, the minimal element is $\C 1_4$.}
\begin{center}
\begin{tikzpicture}
	\node (M4) at (2,7) {$\M_4$};
	\node (M3M1) at (0,6) {$\M_3\oplus \M_1$};
	\node (M2M2) at (4,6) {$\M_2\oplus \M_2$};
	\node (M2M1M1) at (0,5) {$\M_2\oplus \M_1\oplus \M_1$};`
	\node (M1M1M1M1) at (0,4) {$\M_1\oplus \M_1\oplus \M_1\oplus \M_1$};
	\node (M1x2M1M1) at (0,3) {$\M_1\otimes 1_2\oplus \M_1\oplus \M_1$};
	\node (M2x2) at (4,4) {$\M_2\otimes 1_2$};
	\node (M1x3M1) at (0,2) {$\M_1\otimes 1_3\oplus \M_1$};
	\node (M1x2M1x2) at (4,2) {$\M_1\otimes 1_2\oplus \M_1\otimes 1_2$};
	\node (M1x4) at (2,1) {$\M_1\otimes 1_4$};
	\draw [->] (M4) edge (M3M1) edge (M2M2)
	(M3M1) edge (M2M1M1)
	(M2M2) edge (M2x2) edge (M2M1M1)
	(M2x2) edge (M1x2M1x2)
	(M1x2M1x2) edge (M1x4)
	(M2M1M1) edge (M1M1M1M1)
	(M1M1M1M1) edge (M1x2M1M1)
	(M1x2M1M1) edge (M1x3M1) edge (M1x2M1x2)
	(M1x3M1) edge (M1x4);
\end{tikzpicture}
\end{center}
\label{fig:lattice_diagram}
\end{figure}
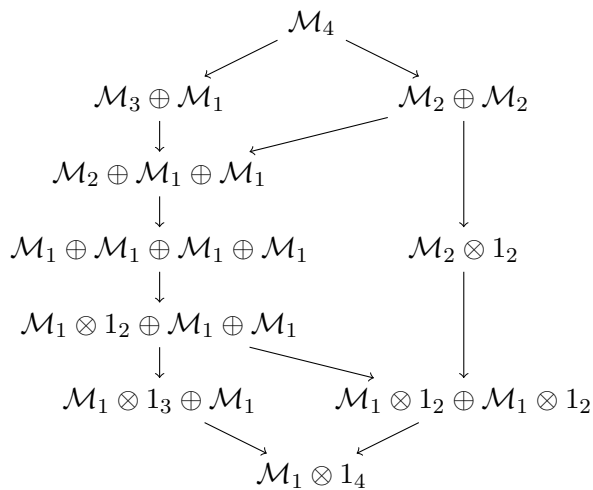
\begin{lemma}\label{lem:chain_length}
	Let $\{\A_1,\cdots,\A_n=\C 1\}$ be a descending chain of unital subalgebras of $\M_d$ and let $\A_1=\bigoplus_{r=1}^m\M_{n_r}\otimes 1_{k_r}$. Then the length of the chain is at most $\sum_{r=1}^m(2n_r-1)$.
\end{lemma}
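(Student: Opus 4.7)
The plan is to introduce an integer-valued potential function on unital $*$-subalgebras of $\M_d$ that drops by at least one along every MUPSA step. Given any subalgebra with Wedderburn decomposition $\A\cong\bigoplus_{r=1}^{m}\M_{n_r}\otimes 1_{k_r}$, I would define
\[
\Phi(\A):=\sum_{r=1}^{m}(2n_r-1).
\]
Two features of $\Phi$ are essential: it depends only on the block dimensions $n_r$ and not on the multiplicities $k_r$, and it satisfies $\Phi(\C 1)=1$. The multiplicity-independence is crucial because case (2) of Theorem \ref{thm:any_MUPSA} alters multiplicities without changing any block dimension, and any potential that sees $k_r$ would behave badly there.

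Before invoking the MUPSA classification, I would first reduce to chains in which every successive inclusion $\A_{i+1}\subsetneq\A_i$ is a MUPSA inclusion. Any strict inclusion can be refined by choosing a MUPSA of $\A_i$ that contains $\A_{i+1}$ and iterating until no further refinement is possible; since refinement only lengthens the chain, a bound for MUPSA-chains gives the same bound for arbitrary descending chains. Theorem \ref{thm:any_MUPSA} then forces each step to have one of exactly two shapes: either (1) a single block $\M_{n_j}\otimes 1_{k_j}$ splits into $\M_{n_j-s}\otimes 1_{k_j}\oplus \M_s\otimes 1_{k_j}$ with $1\le s\le n_j-1$, or (2) two blocks of equal dimension $\M_{n_j}\otimes 1_{k_j}$ and $\M_{n_i}\otimes 1_{k_i}$ merge into $\M_{n_j}\otimes 1_{k_i+k_j}$.

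A short case-by-case computation then finishes the argument. In case (1), the contribution to $\Phi$ of the affected block changes from $2n_j-1$ to $(2(n_j-s)-1)+(2s-1)=2n_j-2$, a decrease of exactly one, while in case (2), it changes from $2(2n_j-1)$ to $2n_j-1$, a decrease of $2n_j-1\ge 1$. Hence $\Phi$ strictly decreases by at least one at each of the $n-1$ steps from $\A_1$ down to $\A_n=\C 1$, and combining this with $\Phi(\C 1)=1$ yields $\Phi(\A_1)\ge n$, which is the stated bound. I do not expect any serious obstacle; the only genuine design choice is to identify a potential function that simultaneously (i) decreases under the splitting operation of case (1) and (ii) is insensitive to the multiplicity changes of case (2), and $\sum_r(2n_r-1)$ is essentially the unique linear choice that achieves a decrease of exactly one in case (1), making the bound tight in examples such as the maximal chain in $\M_4$ displayed in Figure \ref{fig:lattice_diagram}.
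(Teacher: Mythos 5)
Your proposal is correct and is essentially the paper's own argument: the paper defines the same quantity $\chi(\A)=\sum_r(2n_r-1)$, reduces to MUPSA chains in the same way, and performs the identical two-case computation from Theorem \ref{thm:any_MUPSA}, merely packaging the "drops by at least one per step" observation as an induction on $\chi$ rather than as a monovariant. No gaps.
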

\begin{proof}
	If $\A_{i+1}$ is not a MUPSA of $\A_i$, then there must be a chain of MUPSAs going from $\A_i$ to $\A_{i+1}$, and this will only increase the length of the chain. So, without loss of generality, assume that each algebra is a MUPSA of the previous one.\par
	For $\A_i=\bigoplus_{r=1}^{m_i}\M_{n_{i_r}}\otimes 1_{k_{i_r}}$, define $\chi(\A_i)=\sum_{r=1}^{m_i}(2n_{i_r}-1)$. We will use induction on $\chi(\A_1)$ to show that the length of the chain is at most $\chi(\A_1)$.
	Since $2n_r-1\geq 1$ for all $n_r$, if $\chi(\A_1)=1$, then $\A_1=\M_1\otimes 1_d=\C 1$. Then the length of the chain is just 1, which equals $\chi(\A_1)$.\par
	Suppose the hypothesis holds for all chains starting with algebras $\A_1$ such that $\chi(\A_1)<y$ for some $y$. Suppose we have a chain with $\A_1=\bigoplus_{r=1}^m\M_{n_r}\otimes 1_{k_r}$ such that $\chi(\A_1)=y$. Then the next algebra in the chain, $\A_2$, is a MUPSA of $\A_1$, and by Lemma \ref{thm:any_MUPSA}, it has two possible forms:
	\begin{enumerate}
		\item
		$\A_2=(\M_{n_j-s}\otimes 1_{k_j})\oplus ( \M_s\otimes 1_{k_j})\oplus\bigoplus_{r=1,r\neq j}^m \M_{n_r}\otimes 1_{k_r}$. In this case:
		\begin{align*}
		\chi(\A_2)=&\sum_{r=1,\neq j}(2n_r-1)+(2(n_j-s)-1)+(2s-1)\\
		=&\sum_{r=1,\neq j}(2n_r-1)+(2n_j-1)-1\\
		=&\sum_{r=1}^m(2n_r-1)-1\\
		=&\chi(\A_1)-1,
		\end{align*}
		which is less than $y$, so we can apply induction and declare that the length of the chain $\{\A_2,\cdots, \A_n\}$ is at most $\chi(\A_2)$; adding 1 when we add $\A_1$ takes the maximum length to $\chi(\A_2)+1=\chi(\A_1)$.
		\item
		$\A_2=(\M_{n_j}\otimes 1_{k_j+k_i})\oplus\bigoplus_{r=1,r\neq j,i}^m \M_{n_r}\otimes 1_{k_r}$. Then, noting that $2n_i-1\geq 1$, that $\chi(\A_2)$ is 
		\begin{align*}
		\sum_{r=1,r\neq i,j}^m (2n_r-1) + (2n_j-1) \leq& \sum_{r=1,r\neq i,j}^m (2n_r-1)+(2n_j-1)+(2n_i-1)-1\\
		=&\sum_{r=1}^m(2n_r-1)-1\\
		=&\chi(\A_1)-1
		\end{align*}
		Again, we apply induction and add the remaining algebra to show that the length of the chain is at most $\chi(\A_1)$.
	\end{enumerate}
\end{proof}
Since the multiplicative domains of powers of a unital channel give a chain of unital *-subalgebras, Lemma \ref{lem:chain_length} gives a bound on the multiplicative index:
\begin{theorem}\label{thm:kappa_bound}
	Let $\UCP:\M_d\rightarrow \M_d$ be a unital quantum channel for $d\geq 2$. Then the multiplicative index $\kappa(\UCP)\leq 2d-2$.
\end{theorem}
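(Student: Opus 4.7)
The plan is to exploit the fact that iterating $\UCP$ produces a strictly descending chain of multiplicative domains, and then invoke Lemma \ref{lem:chain_length} on that chain.

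First, I verify that the inclusions
\[\M_\UCP\supseteq \M_{\UCP^2}\supseteq\cdots\supseteq \M_{\UCP^{\kappa(\UCP)}}=\M_{\UCP^\infty}\]
are strict at each step. By Lemma \ref{lem:md_composition} (with $\UCP_1=\UCP$ and $\UCP_2=\UCP^k$), any $x\in\M_{\UCP^{k+1}}$ satisfies $x\in\M_\UCP$ and $\UCP(x)\in\M_{\UCP^k}$. Hence if $\M_{\UCP^k}=\M_{\UCP^{k+1}}$ for some $k$, then $\UCP(x)\in\M_{\UCP^{k+1}}$ as well, which places $x\in\M_{\UCP^{k+2}}$ by the same characterisation; iterating gives $\M_{\UCP^n}=\M_{\UCP^k}$ for all $n\geq k$. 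By minimality of $\kappa(\UCP)$, this forces $k\geq \kappa(\UCP)$, so the chain above consists of exactly $\kappa(\UCP)$ distinct subalgebras.

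Next I extend this chain upward to $\M_d$ and, if needed, downward to $\C 1$. If $\M_\UCP=\M_d$, then $\UCP$ is a unital *-homomorphism on all of $\M_d$, hence a *-automorphism, so $\M_{\UCP^n}=\M_d$ for every $n$ and $\kappa(\UCP)=1\leq 2d-2$ (using $d\geq 2$). Otherwise $\M_\UCP\subsetneq \M_d$, and prepending $\M_d$ yields a strictly descending chain of $\kappa(\UCP)+1$ algebras starting at $\M_d$; if $\M_{\UCP^\infty}\neq \C 1$ I further extend the chain downward through MUPSAs until it reaches $\C 1$, which only lengthens the chain. Applying Lemma \ref{lem:chain_length} with $\A_1=\M_d$ (whose Wedderburn decomposition is the single block $n_1=d$, $k_1=1$) bounds the total length by $2n_1-1=2d-1$. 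Therefore $\kappa(\UCP)+1\leq 2d-1$, yielding $\kappa(\UCP)\leq 2d-2$.

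The main technical work has already been absorbed into the MUPSA classification (Theorem \ref{thm:any_MUPSA}) and the chain bound (Lemma \ref{lem:chain_length}), so no substantive obstacle remains. The only care needed is in verifying strict descent of the multiplicative-domain chain (a short consequence of Lemma \ref{lem:md_composition}) and in the two edge cases $\M_\UCP=\M_d$ and $\M_{\UCP^\infty}\neq \C 1$, both handled cleanly by the case split above.
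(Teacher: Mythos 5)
Your proof is correct and follows essentially the same route as the paper: bound the chain $\M_\UCP\supsetneq\M_{\UCP^2}\supsetneq\cdots$ via Lemma \ref{lem:chain_length} and dispose of the case $\M_\UCP=\M_d$ separately. The only differences are cosmetic — you make the strict descent explicit via Lemma \ref{lem:md_composition} and prepend $\M_d$ to the chain rather than computing the bound $\sum_r(2n_r-1)\leq 2d-2$ for a proper subalgebra directly, both of which are fine.
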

\begin{proof}
	For a channel $\UCP$, 
	$\{\M_\UCP,\M_{\UCP^2},\cdots,\M_{\UCP^\kappa}\}$ is a descending chain of subalgebras, so $\kappa$ must be less than the maximum length of such a chain. As a chain of subalgebras this has a maximum length of $2d-1$ by Lemma \ref{lem:chain_length}, which could only be achieved if $\M_\UCP=\Md$. However, if $\M_\UCP=\Md$, then the channel must be unitary and $\kappa(\UCP)=1\leq 2(d-1)$. If $\M_\UCP\neq \Md$, then the length of the chain does not achieve the maximum and must be at most $2(d-1)$.
\end{proof}
The proof of Theorem \ref{thm:kappa_bound} uses only Lemma \ref{lem:chain_length} and the fact that $\M_\UCP$ is *-closed and unital but does not use the structure of the channel itself. It's possible that the structure of $\M_\UCP$ gives a tighter bound for the multiplicative index than $2(d-1)$. So far the the largest multiplicative index we have constructed is $d$. Sections \ref{sec:ETB_channels} and \ref{sec:Schur_channels} illustrate our examples. Note that in order to get a non-trivial value for $\kappa$, one must choose $\M_\UCP$ to be a proper subalgebra of $\M_d$ and not $\M_d$ itself.
So the value obtained for the length of the longest chain in Figure \ref{fig:lattice_diagram} is one more than the possible maximum value of $\kappa$ determined by the Theorem \ref{thm:kappa_bound}. 


\subsection{Example: Entanglement Breaking Channels}\label{sec:ETB_channels}
There are many equivalent definitions of an entanglement breaking channel (see \cite{entng-brkng}, \cite{stormer2008}), but for our purposes it is most convenient to define it in terms of Kraus operators. A channel $\UCP:\M_d\rightarrow \M_d$ is an entanglement breaking channel if it can be written as:
\[\UCP(x)=\sum_{i=1}^n\varphi_i\zeta_i^* x\zeta_i\varphi_i^*,\]
where $\zeta_i,\varphi_i\in\C^d$.
We note down a useful result below which can be deduced from $\cite{stormer2008}$:
\begin{theorem}{\rm{(St{\o}rmer}, \cite{stormer2008})}
Let $\UCP:\M_d\rightarrow\M_d$ be a unital entanglement breaking channel. Then the multiplicative domain of $\UCP$ is an abelian C$^*$-algebra.
\end{theorem}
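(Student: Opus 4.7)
The plan is to combine the two characterizations of the multiplicative domain---$\M_\UCP=\rm{Fix}_{\UCP^*\circ\UCP}$ (Theorem \ref{kribs-spekkens}) and that this fixed-point set equals the commutant of the *-algebra generated by the Kraus operators of $\UCP^*\circ\UCP$ (Theorem \ref{cummutant-fixed pnt})---and to exploit the rank-one form $a_i=\varphi_i\zeta_i^*$ of the Kraus operators of an entanglement breaking channel.

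First I would compute the image of $\UCP^*\circ\UCP$ explicitly. Starting from $\UCP(x)=\sum_j(\zeta_j^*x\zeta_j)\,\varphi_j\varphi_j^*$ and the analogous $\UCP^*(y)=\sum_i(\varphi_i^*y\varphi_i)\,\zeta_i\zeta_i^*$, a direct substitution gives
\[
(\UCP^*\circ\UCP)(x)=\sum_i\Bigl(\sum_j|\varphi_i^*\varphi_j|^2\,\zeta_j^*x\zeta_j\Bigr)\zeta_i\zeta_i^*,
\]
which visibly lies in $\Span\{\zeta_i\zeta_i^*\}$. Since every $x\in\M_\UCP$ satisfies $x=(\UCP^*\circ\UCP)(x)$, this already gives $\M_\UCP\subseteq \Span\{\zeta_i\zeta_i^*\}$.

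For the second step, after discarding any trivial Kraus summands with $\varphi_i=0$ (which contribute nothing to $\UCP$), the diagonal Kraus operators of $\UCP^*\circ\UCP$ are $a_i^*a_i=\|\varphi_i\|^2\zeta_i\zeta_i^*$, a strictly positive multiple of $\zeta_i\zeta_i^*$, so each $\zeta_i\zeta_i^*$ lies inside the *-algebra generated by $\{a_i^*a_j\}$. Theorem \ref{cummutant-fixed pnt} applied to the unital channel $\UCP^*\circ\UCP$ then forces $\M_\UCP=\rm{Fix}_{\UCP^*\circ\UCP}$ to be contained in the commutant of that *-algebra, so in particular every element of $\M_\UCP$ commutes with every $\zeta_i\zeta_i^*$.

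Putting the two steps together, for any $x,y\in\M_\UCP$ the first step gives $x=\sum_ic_i\zeta_i\zeta_i^*$, and the second step says $y$ commutes with each summand, so $xy=yx$. Hence $\M_\UCP$ is abelian. I do not anticipate a real obstacle: the rank-one form of the Kraus operators makes both the image computation and the identification of $\zeta_i\zeta_i^*$ inside the Kraus *-algebra completely transparent, and the only mild bookkeeping is dropping the trivial Kraus terms so that $\|\varphi_i\|^2>0$ when extracting $\zeta_i\zeta_i^*$ in step two.
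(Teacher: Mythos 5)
Your argument is correct, but note that the paper itself offers no proof of this statement: it is quoted as a result of St{\o}rmer, deducible from \cite{stormer2008}, so there is no in-paper proof to match. What you have produced is a valid, self-contained derivation from the paper's own toolkit. Step 1 is right: $(\UCP^*\circ\UCP)(x)=\sum_i\bigl(\sum_j|\varphi_i^*\varphi_j|^2\,\zeta_j^*x\zeta_j\bigr)\zeta_i\zeta_i^*$, so $\M_\UCP=\mathrm{Fix}_{\UCP^*\circ\UCP}$ (Theorem \ref{kribs-spekkens}) lands inside $\Span\{\zeta_i\zeta_i^*\}$; and you correctly recognize that this alone does not suffice, since the $\zeta_i$ need not be orthogonal and the span of non-commuting rank-one projections is not abelian. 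Step 2 closes the gap: $a_i^*a_i=\|\varphi_i\|^2\zeta_i\zeta_i^*$ puts each $\zeta_i\zeta_i^*$ (after discarding null Kraus terms) in the *-algebra $\A$ generated by the Kraus operators of the unital channel $\UCP^*\circ\UCP$, so Theorem \ref{cummutant-fixed pnt} gives $\M_\UCP=\A'\subseteq\{\zeta_i\zeta_i^*\}'$; combined with Step 1 this yields $\M_\UCP\subseteq\A\cap\A'$, which is abelian. This is essentially the same commutant technique the paper deploys later for Proposition \ref{prop:etb_md_struct}, but carried out without the orthonormality assumption made there; the benefit of your route is that it makes the quoted theorem elementary and internal to the paper, at the cost of relying on the specific rank-one Kraus form rather than whatever more general characterization of entanglement-breaking maps St{\o}rmer's argument uses.
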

Using the above theorem we get an upper bound for the multiplicative index of unital entanglement breaking channels:
\begin{prop}\label{prop:etb_bound}
If $\UCP$ is a unital entanglement breaking channel, then $\kappa(\UCP)\leq d$.
\end{prop}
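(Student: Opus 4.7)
The plan is to combine Størmer's theorem with the chain-length bound of Lemma \ref{lem:chain_length}. Since $\M_\UCP$ is abelian by the theorem just cited, every matrix block in its Wedderburn decomposition has size one, so the quantity $\sum_r (2n_r - 1)$ appearing in the chain-length bound collapses to the number of blocks. This number is at most $d$, which is where the sharper value $d$ comes from rather than the generic $2d-2$ of Theorem \ref{thm:kappa_bound}.

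In detail, first I would write the Wedderburn decomposition
\[\M_\UCP \;=\; \bigoplus_{r=1}^m \mathbb{C}\otimes 1_{k_r},\]
with each $k_r \geq 1$ and $\sum_{r=1}^m k_r = d$, so that $m \leq d$. Next, by the definition of $\kappa = \kappa(\UCP)$, the multiplicative domains form a strictly descending chain
\[\M_\UCP \;\supsetneq\; \M_{\UCP^2} \;\supsetneq\; \cdots \;\supsetneq\; \M_{\UCP^\kappa} \;=\; \M_{\UCP^\infty}\]
of $\kappa$ distinct unital *-subalgebras. If this chain has not already terminated at $\mathbb{C}1$, I would extend it downward by appending any descending chain of unital *-subalgebras from $\M_{\UCP^\kappa}$ to $\mathbb{C}1$; such an extension always exists since $\mathbb{C}1$ is the minimum unital *-subalgebra of $\M_d$.

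Finally, I would apply Lemma \ref{lem:chain_length} to the extended chain, whose upper bound specializes to $\sum_{r=1}^m (2\cdot 1 - 1) = m \leq d$. Since the original multiplicative-domain chain is a sub-chain of this extended one, we obtain $\kappa \leq m \leq d$, as desired. There is no real obstacle beyond recognizing how the generic lemma simplifies in the abelian case; the whole argument is essentially a clean plug-in of Størmer's structural result into the chain-length bound, with the mild bookkeeping of extending the chain down to the trivial subalgebra.
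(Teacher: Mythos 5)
Your proof is correct and follows essentially the same route as the paper: invoke St{\o}rmer's theorem to get that $\M_\UCP$ is abelian, then feed its all-$\M_1$ Wedderburn decomposition into Lemma \ref{lem:chain_length} to bound the chain of multiplicative domains by $d$. The only (harmless) difference is that you apply the lemma directly to $\M_\UCP$'s own decomposition and explicitly extend the chain down to $\C 1$, whereas the paper embeds $\M_\UCP$ in a maximal abelian subalgebra $\M_1\oplus\cdots\oplus\M_1$ first; both yield the same bound.
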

\begin{proof}
Consider the chain $\{\M_\UCP,\M_{\UCP^2},\cdots,\M_{\UCP^\kappa}\}$. Since $\M_\UCP$ is abelian, it is a subalgebra of a maximal abelian subalgebra, which has the form $\M_1\oplus\cdots\oplus \M_1$ for $\M_d$. Following Lemma \ref{lem:chain_length}, the length of the chain is at most $\sum_{i=1}^d (2(1)-1)=d$, and thus that is the maximum possible value of $\kappa(\UCP)$.
\end{proof}

The vectors $\{\zeta_i\}$ that form the Kraus operators of an entanglement breaking channel need not be linearly independent, but if the channel is unital and the vectors are linearly independent, then they are orthonormal. These are the only channels we consider, since in this case the multiplicative domain is easily calculated.
\begin{prop}\label{prop:etb_md_struct}
	Let $\UCP:\M_d\rightarrow \M_d$ be a unital quantum channel such that $\UCP(x)=\sum_{i=1}^d\varphi_i\zeta_i^*x\zeta_i\varphi_i^*$, and let $\{\varphi_i\}_{i=1}^d$ and $\{\zeta_i\}_{i=1}^d$ be orthonormal. Then $\M_\UCP=\Span\{\zeta_i\zeta_i^*|1\leq i\leq d\}$.
\end{prop}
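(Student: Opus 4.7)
The plan is to use the characterization $\M_\UCP=\rm{Fix}_{\UCP^*\circ\UCP}$ from Theorem \ref{kribs-spekkens}, together with the commutant description of fixed points in Theorem \ref{cummutant-fixed pnt}, and then exploit the fact that $\{\zeta_i\}$ is an orthonormal basis to reduce everything to a maximal abelian subalgebra (MASA) calculation.

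First, I would write the Kraus operators of $\UCP$ as $a_i=\varphi_i\zeta_i^*$ and compute the Kraus operators of $\UCP^*\circ\UCP$, which are the products $a_i^*a_j=\zeta_i\varphi_i^*\varphi_j\zeta_j^*$. Orthonormality of $\{\varphi_i\}_{i=1}^d$ collapses $\varphi_i^*\varphi_j$ to $\delta_{ij}$, so
\[a_i^*a_j=\delta_{ij}\,\zeta_i\zeta_i^*.\]
Hence $\UCP^*\circ\UCP$ is (up to the vanishing off-diagonal terms) represented by the Kraus family $\{\zeta_i\zeta_i^*\}_{i=1}^d$. A sanity check confirms trace preservation: $\sum_i\zeta_i\zeta_i^*=1$ because $\{\zeta_i\}$ is an orthonormal basis of $\C^d$.

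Next, by Theorem \ref{cummutant-fixed pnt} applied to the (unital) channel $\UCP^*\circ\UCP$,
\[\rm{Fix}_{\UCP^*\circ\UCP}=\A',\qquad \A=\alg^*\{\zeta_i\zeta_i^*\mid 1\leq i\leq d\}.\]
Since the $\zeta_i\zeta_i^*$ are self-adjoint, mutually orthogonal rank-one projections summing to the identity, the *-algebra they generate is simply their linear span, which is a maximal abelian subalgebra of $\M_d$ (the diagonal in the basis $\{\zeta_i\}$). A MASA is its own commutant, so $\A'=\A=\Span\{\zeta_i\zeta_i^*\mid 1\leq i\leq d\}$.

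Combining these identifications via Theorem \ref{kribs-spekkens},
\[\M_\UCP=\rm{Fix}_{\UCP^*\circ\UCP}=\A'=\Span\{\zeta_i\zeta_i^*\mid 1\leq i\leq d\},\]
as required. There is no real obstacle here; the only point that deserves a sentence of justification is that orthonormality of $\{\varphi_i\}$ is what kills all cross-terms $a_i^*a_j$ with $i\neq j$, and orthonormality of $\{\zeta_i\}$ is what forces $\alg^*\{\zeta_i\zeta_i^*\}$ to be a MASA rather than a smaller abelian algebra.
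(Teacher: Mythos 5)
Your proof is correct and follows essentially the same route as the paper: both compute the Kraus operators of $\UCP^*\circ\UCP$ as $a_i^*a_j=\delta_{ij}\zeta_i\zeta_i^*$, observe that their span is a maximal abelian subalgebra equal to its own commutant, and conclude via Theorems \ref{kribs-spekkens} and \ref{cummutant-fixed pnt}. No gaps.
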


\begin{proof}
	We will use Theorem \ref{kribs-spekkens}, which states that $\M_\UCP=\rm{Fix}_{\UCP^*\circ\UCP}$. First we compute the Kraus operators of $\UCP^*\circ\UCP$, which are all the products of the form $\zeta_i\varphi_i^*\varphi_j\zeta_j^*=\delta_{ij}\zeta_i\zeta_j^*$. Hence the set of Kraus operators is $\{\zeta_i\zeta_i^*\}_{i=1}^d$. This spans a maximal abelian subalgebra, and hence its span is its own commutant. Theorem \ref{cummutant-fixed pnt} states that the fixed points of a unital channel are the commutant of its Kraus operators, so these will also span the multiplicative domain.
\end{proof}

The next lemma characterizes the projections in $\M_{\UCP^k}$ for higher powers of $k$. Since the multiplicative domain is spanned by its projections, this characterizes the multiplicative domain.
\begin{prop}\label{prop:uebc_md_struc}
	If $\UCP:\M_d\rightarrow \M_d$ is a unital entanglement breaking channel given by $\UCP(x)=\sum_i\varphi_i\zeta_i^*x\zeta_i\varphi_i^*$, then for $1\leq r\leq d$ and any projection $p$, $p\in \M_{\UCP^r}$ if and only if there exists subsets $K_1,\cdots,K_r$ of $[1,\cdots, d]$ such that $\vert K_i\vert=\rm{rank}(p)$ for all $i$, $\sum_{k\in K_1}\zeta_k\zeta_k^*=p$, and $\sum_{k\in K_i}\varphi_k\varphi_k^*=\sum_{k\in K_{i+1}}\zeta_k\zeta_k^*$ for all $1\leq i\leq r-1$.
\end{prop}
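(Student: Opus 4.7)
The plan is to prove this by induction on $r$, leveraging two earlier results: Proposition \ref{prop:etb_md_struct}, which identifies $\M_\UCP = \Span\{\zeta_i\zeta_i^* : 1 \leq i \leq d\}$, and Lemma \ref{lem:md_composition}, which yields $\M_{\UCP^r} = \{x \in \M_\UCP : \UCP(x) \in \M_{\UCP^{r-1}}\}$ by taking $\UCP_1 = \UCP$ and $\UCP_2 = \UCP^{r-1}$. The base case $r=1$ is essentially Proposition \ref{prop:etb_md_struct}: the rank-one projections $\zeta_k\zeta_k^*$ are pairwise orthogonal and span a maximal abelian subalgebra, so any projection $p \in \M_\UCP$ is precisely a sum $\sum_{k \in K_1}\zeta_k\zeta_k^*$, and necessarily $|K_1| = \text{rank}(p)$.

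For the inductive step, the key computation exploits the orthonormality of $\{\zeta_i\}$. For $p = \sum_{k \in K_1}\zeta_k\zeta_k^*$,
\[
\UCP(p) \;=\; \sum_i \varphi_i\zeta_i^*\Bigl(\sum_{k \in K_1}\zeta_k\zeta_k^*\Bigr)\zeta_i\varphi_i^* \;=\; \sum_i\sum_{k\in K_1}|\zeta_i^*\zeta_k|^2\,\varphi_i\varphi_i^* \;=\; \sum_{k \in K_1}\varphi_k\varphi_k^*,
\]
which by orthonormality of $\{\varphi_k\}$ is a projection of the same rank as $p$. Now by Lemma \ref{lem:md_composition}, $p \in \M_{\UCP^r}$ iff $p \in \M_\UCP$ and $\UCP(p) \in \M_{\UCP^{r-1}}$. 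Applying the inductive hypothesis to $\UCP(p)$ produces subsets $K_2, \ldots, K_r$ of $[d]$ with $|K_i| = \text{rank}(\UCP(p)) = |K_1|$, together with the base equation $\sum_{k \in K_2}\zeta_k\zeta_k^* = \UCP(p) = \sum_{k \in K_1}\varphi_k\varphi_k^*$ and the chain conditions $\sum_{k \in K_i}\varphi_k\varphi_k^* = \sum_{k \in K_{i+1}}\zeta_k\zeta_k^*$ for $2 \leq i \leq r-1$. Concatenating $K_1$ with these gives exactly the desired data. The reverse implication runs the same computation forward: from $K_1,\ldots,K_r$, an easy induction shows $\UCP^{i-1}(p) = \sum_{k \in K_i}\zeta_k\zeta_k^* \in \M_\UCP$ for each $i$, and then repeated application of Lemma \ref{lem:md_composition} places $p$ in $\M_{\UCP^r}$.

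The main obstacle is not depth but careful bookkeeping: one must align the subset produced by the base step ($K_1$) with those produced by the inductive hypothesis applied to $\UCP(p)$ (which the hypothesis will label as $K_1', \ldots, K_{r-1}'$), and re-index via $K_{i+1} := K_i'$ so that the compatibility condition $\sum_{k \in K_1}\varphi_k\varphi_k^* = \sum_{k \in K_2}\zeta_k\zeta_k^*$ matches the ``first equation'' output of the inductive hypothesis applied to $\UCP(p)$. A secondary point to verify is that $\UCP(p) \in \M_\UCP$ is not automatic: the sum $\sum_{k \in K_1}\varphi_k\varphi_k^*$ lies in $\Span\{\zeta_j\zeta_j^*\}$ only when the two orthonormal systems $\{\varphi_k\}$ and $\{\zeta_j\}$ are compatible on the subspace $\text{range}(p)$ in the precise sense of the theorem — which is exactly the content of the assertion $\sum_{k \in K_1}\varphi_k\varphi_k^* = \sum_{k \in K_2}\zeta_k\zeta_k^*$.
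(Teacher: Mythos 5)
Your proposal is correct and follows essentially the same route as the paper's proof: the identity $\UCP\bigl(\sum_{k\in K}\zeta_k\zeta_k^*\bigr)=\sum_{k\in K}\varphi_k\varphi_k^*$ is the paper's Equation (2), the base case is Proposition \ref{prop:etb_md_struct}, and both arguments are inductions on $r$ combined with the composition structure of $\M_{\UCP^r}$. The only (cosmetic) difference is that you extend the chain at the front by applying the inductive hypothesis to $\UCP(p)$ via Lemma \ref{lem:md_composition}, whereas the paper applies it to $p$ itself at power $r-1$ and appends $K_r$ at the end.
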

\begin{proof}
	To prove one direction, suppose the sets $K_1,\cdots,K_r$ exist. It's clear that for any $K\subseteq[d]$:
\begin{equation}	\label{eq:sum-proj.}
	\UCP\left(\sum_{k\in K}\zeta_k\zeta_k^*\right)=\sum_{i=1}^n\sum_{k\in K}\varphi_i\zeta_i^*\zeta_k\zeta_k^*\zeta_i\varphi_i^*=\sum_{k\in K}\varphi_k\varphi_k^*.
	\end{equation}
	However, we know that $\sum_{k\in K_i}\varphi_k\varphi_k^*=\sum_{k\in K_{i+1}}\zeta_k\zeta_k^*$. Thus, from the Equation \ref{eq:sum-proj.}, $\UCP\left(\sum_{k\in K_i}\zeta_k\zeta_k^*\right)=\sum_{k\in K_{i+1}}\zeta_k\zeta_k^*$, which is still a projection in the multiplicative domain. Since we can repeat this process $r-1$ times, then $\UCP^{r-1}(p)$ is still in $\M_\UCP$; thus $p\in \M_{\UCP^r}$.\par
	For the converse, we will use induction to show that the sets $K_1,\cdots, K_r$ exist. For $r=1$, this reduces to $p=\sum_{k\in K}\zeta_k\zeta_k^*$, which is proven by Proposition \ref{prop:etb_md_struct}. Then suppose the sets exist for all numbers up to $r-1$ and that $p\in \M_{\UCP^r}$ is a projection. Then $p$ must also be in $\M_{\UCP^{r-1}}$, meaning there exist sets $K_1,\cdots, K_{r-1}$ such that $\sum_{k\in K_i}\varphi_k\varphi_k^*=\sum_{k\in K_{i+1}}\zeta_k\zeta_k^*$, for $i\leq r-2$. Now since $p\in \M_{\UCP^r}$, $\UCP^{r-1}(p)\in \M_\UCP$. By the logic in the proof of the previous direction, we have that $\UCP^{r-1}(p)=\sum_{k\in K_{r-1}}\varphi_k\varphi_k^*$. If this is in the multiplicative domain, it must be a  projection of the form $\sum_{k\in K}\zeta_k\zeta_k^*$ for some $K\subseteq[d]$. Thus, $\sum_{k\in K_{r-1}}\varphi_k\varphi_k^*=\sum_{k\in K}\zeta_k\zeta_k^*$; set $K_r=K$ and the conclusion holds.	
\end{proof}
This requirement, where the sums of two different sets of rank-one projections add up to the same projection, is a central part of this construction. Thus, we define it as follows:
\begin{definition}
	A set of vectors $\{\varphi_i\}_{i=1}^n$ in $\C^d$ is \textbf{non-comparable to} another set $\{\zeta_i\}_{i=1}^m$ if there are no two proper subsets $K_1\subsetneq[n-1],K_2\subsetneq[m-1]$ such that $\sum_{k\in K_1}\varphi_k\varphi_k^*=\sum_{k\in K_2}\zeta_k\zeta_k^*$.

	An operator $X:\C^d\rightarrow \C^d$ is said to be non-comparable with respect to a basis $\{\varphi_i\}_{i=1}^d$ if $\{X\varphi_i\}_{i=1}^d$ is non-comparable to $\{\varphi_i\}_{i=1}^d$.
\end{definition}

As an example, the basis:
\[\leftset \frac{1}{\sqrt{2}}\begin{pmatrix}1\\1\\0\\0\end{pmatrix},\frac{1}{\sqrt{2}}\begin{pmatrix}1\\-1\\0\\0\end{pmatrix},
\begin{pmatrix}0\\0\\1\\0\end{pmatrix},\begin{pmatrix}0\\0\\0\\1\end{pmatrix}\rightset\]
is comparable to the basis:
\[\leftset \begin{pmatrix}\tfrac{1}{2}\\\tfrac{\sqrt{3}}{2}\\0\\0\end{pmatrix},\begin{pmatrix}\tfrac{\sqrt{3}}{2}\\-\tfrac{1}{2}\\0\\0\end{pmatrix},
\frac{1}{\sqrt{2}}\begin{pmatrix}0\\0\\1\\1\end{pmatrix},\frac{1}{\sqrt{2}}\begin{pmatrix}0\\0\\-1\\1\end{pmatrix}\rightset\]
even though all of the vectors are distinct. In the notation of Proposition \ref{prop:uebc_md_struc}, $K_1=K_2=\{1,2\}$. \par
As an example of a non-comparable operator, the $d\times d$ normalized discrete Fourier transform matrix $\Delta$, where $\Delta_{ij}=\frac{1}{\sqrt{d}}\omega^{(i-1)(j-1)}$ for some primitive $d$th root of unity $\omega$, is non-comparable to the canonical basis. To see this, the projections $(e_1e_1^*, \cdots, e_de_{d}^*)$ constructed from the canonical basis $(e_1,,\cdots,e_d)$ span the diagonal matrices. However, the projections generated by $\Delta e_i e_i^*\Delta^*$ all have constant diagonals. Thus, the only linear combination of these projections that is diagonal is the identity.  \par 
Using this definition, we can construct entanglement breaking channels on $\M_d$ with any multiplicative index up to $d$.
\begin{theorem}\label{thm:etb_kappa_d}
	For any integer $r$ between $1$ and $d$, there is a unital entanglement-breaking channel $\UCP:\M_d\rightarrow \M_d$ such that $\kappa(\UCP)=r$.
\end{theorem}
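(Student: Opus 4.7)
The plan is to produce, for each $r\in\{1,\dots,d\}$, an explicit unital entanglement-breaking channel whose multiplicative index is exactly $r$.  I handle $r=1$ separately with the completely dephasing channel $\UCP(x)=\sum_{i=1}^{d} e_i e_i^* \, x \, e_i e_i^*$, which is entanglement-breaking with $\varphi_i=\zeta_i=e_i$; since $\UCP^2=\UCP$, we immediately get $\M_\UCP=\M_{\UCP^\infty}$ and $\kappa(\UCP)=1$.

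For $2\le r\le d$, the idea is to engineer a ``cycle with delayed mixing'' so that the multiplicative domain loses exactly one atom at each of the first $r-1$ steps before stabilizing.  Set $\zeta_i=e_i$ and define a second orthonormal basis by $\varphi_i=e_{i+1}$ for $1\le i\le r-2$, $\varphi_{r-1}=\tfrac{1}{\sqrt{2}}(e_1+e_r)$, $\varphi_r=\tfrac{1}{\sqrt{2}}(e_1-e_r)$, and $\varphi_i=e_i$ for $i\ge r+1$.  Orthonormality is immediate, so $\UCP(x)=\sum_i \varphi_i\zeta_i^* x\zeta_i\varphi_i^*$ is a unital EBC and by Proposition \ref{prop:etb_md_struct}, $\M_\UCP=\Span\{e_i e_i^*\}_{i=1}^{d}$, the full diagonal algebra.

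The key intermediate claim, to be established by induction on $k$ using Lemma \ref{lem:md_composition}, is that for $1\le k\le r$ the multiplicative domain $\M_{\UCP^k}$ equals the unital $*$-subalgebra $\A_k$ whose minimal projections are
\[
\{e_i e_i^* : i\in [1,r-k]\cup [r+1,d]\}\cup \{E_{[r-k+1,r]}\},\qquad E_S:=\sum_{i\in S} e_i e_i^*,
\]
so that $\A_1$ is the diagonal algebra, $\A_r$ has atoms $\{e_j e_j^*\}_{j>r}$ together with $E_{[1,r]}$, and $\A_1\supsetneq \A_2\supsetneq\cdots\supsetneq \A_r$ is a strict chain of length $r$.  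In the inductive step, $\M_{\UCP^{k+1}}=\{x\in \M_\UCP : \UCP(x)\in \M_{\UCP^k}\}$, and $\UCP(E_K)=\sum_{i\in K}\varphi_i\varphi_i^*$ is diagonal iff $K\cap\{r-1,r\}$ is either empty or all of $\{r-1,r\}$; a short combinatorial check tracking the shift $i\mapsto i+1$ against the atomic block $[r-k+1,r]$ of $\A_k$ then isolates exactly the atoms of $\A_{k+1}$.  For stabilization, $\UCP$ fixes every atom of $\A_r$: $\UCP(e_j e_j^*)=e_j e_j^*$ for $j>r$ is immediate, and
\[
\UCP(E_{[1,r]})=\sum_{i=1}^{r-2} e_{i+1}e_{i+1}^* + (\varphi_{r-1}\varphi_{r-1}^*+\varphi_r\varphi_r^*) = \sum_{i=2}^{r-1} e_i e_i^* + (e_1 e_1^*+e_r e_r^*) = E_{[1,r]}
\]
by telescoping the shifts and summing the two mixed rank-ones.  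Hence $\A_r$ is $\UCP$-invariant, so $\M_{\UCP^{r+1}}\supseteq \A_r$; combined with $\M_{\UCP^{r+1}}\subseteq \M_{\UCP^r}=\A_r$, we obtain $\M_{\UCP^\infty}=\A_r$ and therefore $\kappa(\UCP)=r$.

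The main obstacle is the combinatorial inductive step: one must carefully analyze how the mixing pair $\{r-1,r\}$ interacts with the growing block $[r-k+1,r]$ so that exactly $\A_{k+1}$ (and not a larger algebra) appears as $\M_{\UCP^{k+1}}$.  Proposition \ref{prop:uebc_md_struc} provides an alternative characterization of projections in $\M_{\UCP^k}$ in terms of chains of subsets $K_1\to K_2\to\cdots\to K_k$, which could streamline the argument by reducing the task to direct enumeration of such chains.
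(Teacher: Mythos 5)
Your construction is correct and your argument, once the inductive step is written out, does establish $\kappa(\UCP)=r$; but it is worth noting how it differs from the paper's proof. The paper also builds the channel from a ``shift plus mixing'' pair of orthonormal bases, but its shift is a full $d$-cycle (the Kraus operators are $\varphi_i\zeta_{i-1}^*$) and the mixing is an arbitrary unitary on the first $d-r+2$ coordinates that is \emph{non-comparable} to $\{\zeta_i\}$, a notion introduced precisely for this purpose; the value of $\kappa$ is then pinned down by a combinatorial argument (via Proposition \ref{prop:uebc_md_struc}) tracking how a ``hole'' in a projection $p_K=\sum_{k\in K}\zeta_k\zeta_k^*$ is transported around the cycle until it reaches the non-comparable zone. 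You instead localize the mixing to a single $2\times 2$ Fourier block on coordinates $\{1,r\}$, leave coordinates above $r$ fixed, and compute the entire chain $\M_{\UCP}\supsetneq\M_{\UCP^2}\supsetneq\cdots$ exactly: writing $x=\sum_i c_ie_ie_i^*$, diagonality of $\UCP(x)$ forces $c_{r-1}=c_r$, and membership of $\UCP(x)$ in $\A_k$ forces $c_{r-k}=\cdots=c_r$, which is exactly $\A_{k+1}$; the pointwise fixedness of $\A_r$ then gives stabilization. Your route buys an explicit description of every $\M_{\UCP^k}$ and avoids the non-comparability machinery (your mixing block is trivially non-comparable since $\tfrac{1}{2}(e_1\pm e_r)(e_1\pm e_r)^*$ is never diagonal), at the cost of generality: the paper's construction works for any non-comparable unitary and any starting basis, and its ``hole'' argument is what generalizes. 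The one place you should be explicit is the inductive step you describe as a ``short combinatorial check'': it is genuinely short (the diagonal of $\UCP(x)$ at positions $r-k+1,\dots,r-1,r$ reads $c_{r-k},\dots,c_{r-2},c_r$, so equality on the block $[r-k+1,r]$ of $\A_k$ is precisely $c_{r-k}=\cdots=c_{r-2}=c_r$), but as written it is only asserted, and it is the entire content of the upper bound $\M_{\UCP^{k+1}}\subseteq\A_{k+1}$.
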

\begin{proof}
	Let $\{\zeta_i\}_{i=1}^d$ be any orthonormal basis. For $r=1$, set $\UCP(x)=\sum_{i=1}^d\zeta_i\zeta_i^*x\zeta_i\zeta_i^*$. It's clear that the fixed points $(\rm{Fix}_\UCP)$ of this channel are simply $\Span\{\zeta_i\zeta_i^*|1\leq i\leq d\}$, which is also $\M_\UCP$. Since $\rm{Fix}_\UCP\subseteq\M_{\UCP^\infty}\subseteq\M_\UCP$ in general, we have  $\M_\UCP=\M_{\UCP^\infty}$ and $\kappa(\UCP)=1$.\par
	For $r\geq 2$, let $\{\zeta_i\}_{i=1}^d$ be any orthonormal basis of $\C^d$. Let $U$ be a unitary matrix in $\M_{d-r+2}$ that is non-comparable to $\{\zeta_i\}_{i=1}^{d-r+2}$. Define $\{\varphi_i\}$ such that $\varphi_i=U\zeta_i,i\leq d-r+2$, and $\varphi_i=\zeta_i$ otherwise. Then define
	\begin{align*}
	\UCP(x)=&\sum_{i=1}^d \varphi_i\zeta_{i-1}^*x \zeta_{i-1}\varphi_i^*\\
	=&\sum_{i=2}^{d-r+1}\varphi_i\zeta_{i-1}^*x\zeta_{i-1}\varphi_i^* + \sum_{i=d-r+2}^{d-1}\zeta_i\zeta_{i-1}^*x\zeta_{i-1}\zeta_i^* + \varphi_1\zeta_d^*x\zeta_d\varphi_1^*.
\end{align*}

	(where $\zeta_0=\zeta_d$). In the middle sum, we have replaced $\varphi_i=\zeta_i$, since that was how we constructed $\varphi_i$. From Proposition \ref{prop:etb_md_struct}, we know that the multiplicative domain of $\UCP$ is the span of $\{\zeta_i\zeta_i^*\}_{i=1}^d$. Since $\varphi_i=\zeta_i$ for $i>d-r+2$, we also have that $\Span\{\varphi_i\varphi_i^*\}_{i=d-r+3}^d$ is in the multiplicative domain. Further, since the first $d-r+2$ vectors $\varphi_i$ are non-comparable to $\zeta_i$, we have that for any set $K\subseteq [d-r+2]$ with $\vert K\vert<d-r+2$, $\sum_{k\in K}\varphi_k\varphi_k^*$ is not in the multiplicative domain.\par
	The map $\UCP$ maps the first $d-r+1$ projections in $\M_\UCP$ outside of the multiplicative domain, and then ``pushes" the remaining projections to $\zeta_d\zeta_d^*$, whose image is also outside of the multiplicative domain. \par
	To prove this, take the rank-1 projection $\zeta_{d-r+2}\zeta_{d-r+2}^*$. Note that, by orthogonality, for $d-r+2\leq j\leq d-1$, $\UCP(\zeta_j\zeta_j^*)=\zeta_{j+1}\zeta_{j+1}^*$, which is also in $\M_\UCP$. So applying $\UCP$ $m$ times gives $\UCP^m(\zeta_j\zeta_j^*)=\zeta_{j+m}\zeta_{j+m}^*$ (for $j\leq d-m$). However, suppose $j=d$. Then $\UCP(\zeta_d\zeta_d^*)=\varphi_1\varphi_1^*\notin \M_\UCP$. Thus, for any $j$ with $d-r+2\leq j$, $\UCP^{d-j}(\zeta_j\zeta_j^*)=\zeta_d\zeta_d^*\in \M_\UCP$, but $\UCP^{d-j+1}(\zeta_j\zeta_j^*)\notin \M_\UCP$. Thus, $\zeta_j\zeta_j^*$ is in $\M_{\UCP^{d-j+1}}$ but not $\M_{\UCP^{d-j+2}}$; thus, the multiplicative index must be at least $d-j+2$. Since this pattern works for $j\geq d-r+2$, we get that the multiplicative index must be at least $r$. \par
	To show that this is also an upper bound, let $K$ be an arbitrary subset of $[d]$. We need to show that if $p_K=\sum_{k\in K}\zeta_k\zeta_k^*$ is in $\M_{\UCP^r}$, then $K=[d]$ and $p_K=1$. The idea is as follows: If there is any rank-1 projection missing from $p_K$, then this creates a ``hole". If the hole is in the first $d-r+2$ projections, then the image is not in the multiplicative domain. However, the action of $\UCP$ will move the hole until it is, eventually, in the first $d-r+2$ projections.\par
	Suppose that $j\notin K$ for some $j\in [d]$ (the hole). If $j=d$ or $j<d-r+2$, then $\UCP(p_K)$ is a projection onto a subspace that is (partially) spanned by a strict subset of $\{\varphi_i\varphi_i^*\}_{1\leq i\leq d-r+2}$, which is, by construction, non-comparable to $\{\zeta_i\zeta_i\}$. Thus, $\UCP(p_K)$ cannot be written as a sum of $\zeta_j\zeta_j^*$, and thus is it is not in the multiplicative domain. Now consider that if $j\notin K$ for $p_K$ but $\UCP(p_K)=p_{K'}$ for some $K'$, then $j+1\notin K'$ (since the only way to get $\zeta_{j+1}\zeta_{j+1}^*$ would be if $\zeta_j\zeta_j^*$ was in the decomposition of $P_K$, which it is not). Thus, if $j\geq d$, then $P_K\notin \M_{\UCP^2}$. The pattern continues and thus $p_K\notin \M_{\UCP^m}$ if it is missing any $j\geq d-m+2$. Since we require $P_k\in \M_{\UCP^r}$, we must have that $j\leq d-r+2$. From before, any missing $j$ must be more than $d-r+2$. Thus, to be in $\M_{\UCP^r}$ any missing $j$ must satisfy both $j\leq d-r+2$ and $j>d-r+2$, a contradiction. Thus, there is no $j\notin K$; $K=[d]$ and $p_K=1$.
\end{proof}
As an example in $\M_3$, to get a multiplicative index of 3, we can take the canonical basis $\{e_1,e_2,e_3\}$ and the basis $\{f_1,f_2,f_3\}$ formed by applying the $2\times 2$ discrete Fourier transform matrix to $e_1$ and $e_2$. Let $\omega$ be a primitive 3rd root of unity. Using Theorem \ref{thm:etb_kappa_d}, we can constuct a channel $\UCP$ with Kraus operators of:
\[K_1=f_1e_3^*=\tfrac{1}{\sqrt{2}}\begin{pmatrix}0&0&1\\0&0&1\\0&0&0\end{pmatrix},K_2=f_2e_1^*=\tfrac{1}{\sqrt{2}}\begin{pmatrix}1&0&0\\-1&0&0\\0&0&0\end{pmatrix},K_3=f_3e_2^*=\begin{pmatrix}0&0&0\\0&0&0\\0&1&0\end{pmatrix}.\]
By Proposition \ref{prop:etb_md_struct}, the multiplicative domain of $\UCP$ is all the diagonal matrices, so we take a diagonal matrix $D=\begin{pmatrix}a&0&0\\0&b&0\\0&0&c\end{pmatrix}$. Then
\[\UCP(D)=\frac{1}{2}\begin{pmatrix}a+c&-a+c&0\\-a+c&a+c&0\\0&0&2b\end{pmatrix}.\]
For this to be in $\M_\UCP$, and thus $D\in \M_{\UCP^2}$, the off-diagonal terms must be 0, so $a=c$. If we set $D'$ to be $\begin{pmatrix}a&0&0\\0&b&0\\0&0&a\end{pmatrix}$, then
\[\UCP^2(D')=\frac{1}{2}\begin{pmatrix}a+b&-a+b&0\\-a+b&a+b&0\\0&0&2a\end{pmatrix}\]
and for $D'$ to be in $\M_{\UCP^3}$, $a$ must equal $b$ and thus $D'$ is a multiple of the identity. Since $\M_\UCP\subsetneq \M_{\UCP^2}\subsetneq \M_{\UCP^3}=\C 1$, the multiplicative index of $\UCP$ is 3.
\subsection{Example: Schur Channels}\label{sec:Schur_channels}
Given two matrices $a,b\in \M_d$, the Schur product $a\bullet b$ is defined as the matrix $(a\bullet b)_{ij}=a_{ij}b_{ij}$. A positive semidefinite matrix $b$ whose diagonals are all 1 induces a quantum channel $\mathcal{T}_b$ whose action is given by $\mathcal{T}_b(x)=b\bullet x$. Such channels are necessarily unital. We will call these channels Schur channels.
\par
We will let $b=\begin{pmatrix} J_{d-1}&0\\0&1\end{pmatrix}$, where $J_{d-1}$ is the $d-1\times d-1$ matrix of all ones and let $u$ be the permutation matrix  corresponding to $(1 2 3\cdots d)$ in the symmetric group on $[d]$. 

We can construct a channel $\UCP=\style{U}\circ\mathcal{T}_b$, and this will have multiplicative index $d-1$. Here $\mathcal{U}$ denotes the unitary channel $\mathcal{U}(\cdot)=u(\cdot)u^*$.
To see this, let $p=\tfrac{1}{2}(E_{11}+E_{22}+E_{12}+E_{21})$. This is a projection, and if $k\leq d-2$,
\[\UCP^k(p)=\tfrac{1}{2}(E_{k+1,k+1}+E_{k+2,k+2}+E_{k+1,k+2}+E_{k+2,k+1}).\]
However, $d-1$ applications of $\UCP$ to $p$ gives
\[\UCP^{d-1}(p)=\UCP(\tfrac{1}{2}(E_{d-1,d-1}+E_{dd}+E_{d-1,d}+E_{d,d-1})=\tfrac{1}{2}(E_{11}+E_{dd}),\]
which is no longer a projection. Thus $p\in \M_{\UCP^{d-2}}$ but not $\M_{\UCP^{d-1}}$, so $\kappa(\UCP)\geq d-1$. Since $\M_{\UCP}\cong \M_{d-1}\oplus \M_1$ and $\M_{\UCP^\infty}$ must contain every diagonal matrix, then we can use the same logic as Lemma \ref{lem:chain_length} to see that the chain of multiplicative domains can have length at most $d-1$. Thus, $\kappa(\UCP)=d-1$.


\subsection{Product Channels}\label{sec:product_channels}
Looking at Proposition \ref{prop:kappa_tensor_bound}, the bound on the multiplicative index is linear in the dimension, but cannot increase for product channels. This creates a gap between the possible multiplicative indices of an arbitrary channel and a product channel.
\begin{theorem}\label{thm:index_factor}
	Let $\UCP:\M_d\rightarrow \M_d$ be a unital quantum channel. If $\UCP=\otimes_{i=1}^n \UCP_i$ where $\UCP_i:\M_{d_i}\rightarrow \M_{d_i}$, then $\kappa(\UCP)\leq \max_i\{ 2(d_i-1)\}$. Specifically, if $\kappa(\UCP)\geq d-1$, then $\UCP$ cannot be factored in any way.
\end{theorem}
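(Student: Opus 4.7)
The proof proposal is as follows. The two pieces I would need are already in hand: Proposition \ref{prop:kappa_tensor_bound}, which gives $\kappa(\UCP_1\otimes\UCP_2)=\max\{\kappa(\UCP_1),\kappa(\UCP_2)\}$, and Theorem \ref{thm:kappa_bound}, which gives $\kappa(\Psi)\leq 2(d'-1)$ for any unital channel $\Psi$ on $\M_{d'}$. The theorem just combines these two facts.

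First, I would prove the multiplicative index bound by induction on $n$ using Proposition \ref{prop:kappa_tensor_bound}. The base case $n=2$ is exactly that proposition. For the inductive step, write $\otimes_{i=1}^n\UCP_i=(\otimes_{i=1}^{n-1}\UCP_i)\otimes \UCP_n$ and apply the proposition once more; the induction hypothesis gives $\kappa(\otimes_{i=1}^{n-1}\UCP_i)=\max_{i<n}\kappa(\UCP_i)$, so
\[\kappa(\UCP)=\max_{1\leq i\leq n}\kappa(\UCP_i).\]
Then applying Theorem \ref{thm:kappa_bound} to each factor gives $\kappa(\UCP_i)\leq 2(d_i-1)$, and hence $\kappa(\UCP)\leq \max_i 2(d_i-1)$.

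For the second statement, I would prove the contrapositive: if $\UCP$ admits a non-trivial factorization $\UCP=\otimes_{i=1}^n\UCP_i$ with $n\geq 2$ (and each $d_i\geq 2$, since any factor with $d_i=1$ is trivial and can be absorbed), then $\kappa(\UCP)\leq d-2$. The key arithmetical observation is that for any index $i$, $d=\prod_{j=1}^n d_j\geq 2\,d_i$ because there is at least one other factor of size at least $2$. Hence $d_i\leq d/2$, and
\[2(d_i-1)\leq 2\!\left(\tfrac{d}{2}-1\right)=d-2.\]
Taking the maximum over $i$ and combining with the bound already established yields $\kappa(\UCP)\leq d-2$. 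Therefore, $\kappa(\UCP)\geq d-1$ forces the non-existence of any non-trivial tensor factorization.

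There is no real obstacle here: the result is essentially a corollary of the two earlier theorems, and the only bit of care needed is the arithmetic observation that a non-trivial tensor factor has at most half the total dimension. The proof is two or three lines, with the only subtlety being the implicit assumption that a ``factorization'' must involve at least two non-trivial tensor factors (so the edge case $n=1$ is excluded).
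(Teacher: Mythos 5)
Your proposal is correct and follows essentially the same route as the paper: apply Proposition \ref{prop:kappa_tensor_bound} to get $\kappa(\UCP)=\max_i\kappa(\UCP_i)$, bound each factor by Theorem \ref{thm:kappa_bound}, and then observe that any non-trivial tensor factor has dimension at most $d/2$, so $\max_i 2(d_i-1)\leq d-2$. Your explicit induction on $n$ and the remark about absorbing trivial $d_i=1$ factors are just slightly more careful versions of steps the paper leaves implicit.
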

\begin{proof}
	Suppose $\UCP=\otimes_{i=1}^n\UCP_i$, where $\UCP_i:\M_{d_i}\rightarrow \M_{d_i}$. Then by Proposition \ref{prop:kappa_tensor_bound}
	\[\kappa(\UCP)=\max_i\{\kappa(\UCP_i)\}\leq \max_i\{2(d_i-1)\}.\]\par
	If $n=2$, then $d_1=d/s$ and $d_2=s$ for some $s\vert d$. Since $2\leq s\leq d/2$, then $\max_i\{2(d_i-1)\}\leq 2(d/2-1)=d-2$. Any other factorization will have even smaller components; thus, $\kappa(\UCP)\leq d-2$ for any channel that can be factored. By contrapositive, if $\kappa(\UCP)\geq d-1$, $\UCP$ cannot be factorized.
\end{proof}

Our examples in Sections \ref{sec:ETB_channels} and \ref{sec:Schur_channels} show that it is possible for a channel to be in this gap, and thus we know that neither example can be written as a product of two channels. We summarize this into the following corollary.
\begin{corollary}\label{cor-not-factorable}
For any $d>2$, the channels on $\M_d$ described in Example \ref{sec:Schur_channels} and in Theorem \ref{thm:etb_kappa_d} (with $r=d-1$ or $r=d$) cannot be factored into tensor product of channels acting on smaller subsystems of $\M_d$.
\end{corollary}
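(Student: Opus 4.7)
The proof is essentially a direct synthesis of already established results, so the plan is to unpack Theorem \ref{thm:index_factor} and apply it to each of the two constructed families.

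First, I would invoke Theorem \ref{thm:index_factor}, which asserts that if $\UCP:\M_d\to\M_d$ is a unital channel with $\kappa(\UCP)\geq d-1$, then $\UCP$ admits no non-trivial factorization as a tensor product of channels on smaller subsystems. The reason is that any non-trivial factorization forces a decomposition $d = d_1 d_2 \cdots d_n$ with each $d_i \geq 2$, and the largest $d_i$ in such a factorization is at most $d/2$, so Proposition \ref{prop:kappa_tensor_bound} combined with the bound $\kappa(\UCP_i) \leq 2(d_i-1)$ from Theorem \ref{thm:kappa_bound} yields $\kappa(\UCP) \leq 2(d/2 - 1) = d-2$.

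Second, I would verify the hypothesis $\kappa(\UCP)\geq d-1$ in each of the two examples. For the Schur channel $\UCP = \mathcal{U}\circ\mathcal{T}_b$ of Section \ref{sec:Schur_channels}, it was shown that $\kappa(\UCP) = d-1$; hence Theorem \ref{thm:index_factor} applies directly. For the entanglement-breaking channels of Theorem \ref{thm:etb_kappa_d}, taking $r=d-1$ yields $\kappa(\UCP) = d-1$ and taking $r=d$ yields $\kappa(\UCP) = d$, so in both subcases the criterion $\kappa(\UCP)\geq d-1$ is satisfied and the same conclusion follows.

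Finally, I would briefly note why the restriction $d>2$ is needed: the statement about non-factorability into \emph{smaller} subsystems is only meaningful when $d$ admits a non-trivial factorization (or at least $d\geq 3$ so that the gap between $d-1$ and $d-2$ is a genuine obstruction), and the constructions in Sections \ref{sec:ETB_channels} and \ref{sec:Schur_channels} require $d\geq 3$ to produce non-trivial multiplicative indices. There is essentially no hard step here; the real mathematical content already resides in Theorem \ref{thm:index_factor}, Theorem \ref{thm:etb_kappa_d}, and the Schur channel computation, so the proof reduces to two sentences citing these results.
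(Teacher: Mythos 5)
Your proposal is correct and follows essentially the same route as the paper: the paper derives the corollary immediately from Theorem \ref{thm:index_factor} together with the computed multiplicative indices $\kappa(\UCP)=d-1$ for the Schur channel and $\kappa(\UCP)=r\geq d-1$ for the entanglement-breaking construction. Your unpacking of the bound $\max_i\{2(d_i-1)\}\leq d-2$ for any non-trivial factorization matches the paper's own argument inside Theorem \ref{thm:index_factor}, so there is nothing to add.
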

\begin{remark}
Note that for a unitary $u\in \M_d\otimes\mathcal{M}_{d'}$ which is not of the form $u_1\otimes u_2$, where $u_1$ is a unitary in $\M_d$ and $u_2$ is a unitary in $\mathcal{M}_{d'}$ respectively, the unitary channel on $\M_d\otimes\mathcal{M}_{d'}$ defined by
\[Ad_{u}(x)=uxu^*, \ \forall x\in \M_d\otimes\mathcal{M}_{d'},\]
is a channel that cannot be factored into product channels (See Example 6.17 in \cite{math-language-qit}). The Swap operator is one such unitary, where the Swap operator ($W$) is defined on the product of two Hilbert spaces by $W(\xi\otimes\eta)=\eta\otimes \xi$. Although for unitary conjugation maps, there are ways (see \cite{Busch}) to decide if they can be factored into two unitary channels or not, for arbitrary channels, it's a hard problem to decide. As the eigenvalues of product channels are all possible product of the constituent channels, analysing  the set of eigenvalues is one such criteria. Even when the eigenvalues match up, Theorem \ref{thm:index_factor} provides a new criteria to decide if a channel can be factored into product channels and Corollary \ref{cor-not-factorable} demonstrates the use of this particular criteria. 
\end{remark}


\subsection{Separable Channels}\label{separable_channels}
While a channel may not be a product of two channels, it could be a convex combination of product channels, known as a separable channel. 
\begin{definition}
	A channel $\UCP:\M_d\otimes \M_c\rightarrow \M_d\otimes \M_c$ is called \textit{separable} if 
	\[\UCP=\sum_{i=1}^n\lambda_i\UCP_{i_1}\otimes \UCP_{i_2},\]
	where $\UCP_{i_1}:\M_d\rightarrow \M_d$ and $\UCP_{i_2}:\M_c\rightarrow \M_c$ are quantum channels, $\lambda_i\geq 0$, and $\sum_{i=1}^n\lambda_i=1$.
\end{definition}
There has been much interest in the separability of states (\cite{math-language-qit}, \cite{Watrous-book}, \cite{Bengtsson}), but less so in the separability of channels (\cite{Watrous-book} Chapter 6). Note that the term ``separable channel" 
is not universal. Some authors prefer to use separable channels just to refer to the sum of product channels; however, we use ``separable channels" to mean convex combination of product channels.   Unfortunately, our technique fails to provide answers on separability, as convex combinations can increase the multiplicative index of a channel. We will use the following lemma to examine this:
\begin{lemma}\label{lem:convex_mult}
	Let $\UCP=\lambda\UCP_1+(1-\lambda)\UCP_2$. Then 
	\[\M_{\UCP^k}=\M_{\UCP_1^k}\cap \M_{\UCP_2^k}\cap \{x\in \M_d|\UCP_1^n(x)=\UCP_2^n(x),1\leq n\leq k\}.\]
\end{lemma}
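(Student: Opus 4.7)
The plan is to prove this by induction on $k$, using Theorem \ref{choi} for the base case and Lemma \ref{lem:md_composition} for the inductive step. Throughout, assume $\lambda \in (0,1)$ (otherwise $\UCP$ is one of the $\UCP_i$ and the claim reduces trivially).

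For the base case $k=1$, I would compute the Schwarz defect of $\UCP$ directly. Writing $\lambda = \lambda^2 + \lambda(1-\lambda)$ and $(1-\lambda) = (1-\lambda)^2 + \lambda(1-\lambda)$, after expanding $\UCP(x)^*\UCP(x) = (\lambda \UCP_1(x) + (1-\lambda)\UCP_2(x))^*(\lambda \UCP_1(x) + (1-\lambda)\UCP_2(x))$, the cross terms combine to give the identity
\[
\UCP(x^*x) - \UCP(x)^*\UCP(x) = \lambda D_1(x) + (1-\lambda) D_2(x) + \lambda(1-\lambda)(\UCP_1(x) - \UCP_2(x))^*(\UCP_1(x) - \UCP_2(x)),
\]
where $D_i(x) = \UCP_i(x^*x) - \UCP_i(x)^*\UCP_i(x) \geq 0$ is the Schwarz defect of $\UCP_i$. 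Each of the three summands on the right is positive semidefinite, so the sum vanishes iff each summand does, forcing $x \in \M_{\UCP_1} \cap \M_{\UCP_2}$ and $\UCP_1(x) = \UCP_2(x)$. An identical computation with $xx^*$ and $\UCP(x)\UCP(x^*)$ yields the companion conditions that complete membership in each $\M_{\UCP_i}$ via Theorem \ref{choi}, and produces the same equality $\UCP_1(x) = \UCP_2(x)$. This establishes the $k=1$ case.

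For the inductive step, assume the formula holds for $k-1$. By Lemma \ref{lem:md_composition}, $x \in \M_{\UCP^k}$ iff $x \in \M_\UCP$ and $\UCP(x) \in \M_{\UCP^{k-1}}$. The crucial observation is that on $\M_\UCP$ one has $\UCP(x) = \UCP_1(x) = \UCP_2(x)$ (by the $k=1$ case), so $\UCP_i^n(\UCP(x)) = \UCP_i^{n+1}(x)$ for both $i$ and all $n$. Applying the inductive hypothesis to $\UCP(x)$ therefore translates the conditions $\UCP_1^n(\UCP(x)) = \UCP_2^n(\UCP(x))$ for $1 \leq n \leq k-1$ into $\UCP_1^{n+1}(x) = \UCP_2^{n+1}(x)$ for $1 \leq n \leq k-1$, which together with the base-case equality $\UCP_1(x) = \UCP_2(x)$ gives the full range $1 \leq n \leq k$. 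Similarly, $\UCP(x) = \UCP_i(x) \in \M_{\UCP_i^{k-1}}$ combined with $x \in \M_{\UCP_i}$ gives $x \in \M_{\UCP_i^k}$ by another application of Lemma \ref{lem:md_composition} (now to the chain for $\UCP_i$ itself). The converse direction runs through the same equivalences backward.

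The only real obstacle is the algebraic identity in the base case; once that is in hand, all remaining implications are bookkeeping via Lemma \ref{lem:md_composition} and the key fact that $\UCP(x) = \UCP_1(x) = \UCP_2(x)$ for $x \in \M_\UCP$, which allows the equality conditions to telescope cleanly through the induction.
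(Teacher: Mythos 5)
Your proof is correct, and it follows the same overall strategy as the paper (induction on $k$, the Choi convex-combination characterization at level one, and Lemma \ref{lem:md_composition}), but it differs in two worthwhile ways. First, the paper simply cites Choi (\cite{choi1}, Theorem 3.3) for the base case, whereas you re-derive it via the defect identity
\[
\UCP(x^*x)-\UCP(x)^*\UCP(x)=\lambda D_1(x)+(1-\lambda)D_2(x)+\lambda(1-\lambda)\bigl(\UCP_1(x)-\UCP_2(x)\bigr)^*\bigl(\UCP_1(x)-\UCP_2(x)\bigr),
\]
which checks out and makes the argument self-contained; just note that positivity of the defects $D_i$ rests on the Kadison--Schwarz inequality, i.e.\ on the standing assumption that the $\UCP_i$ are unital. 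Second, and more substantively, your inductive step decomposes $\UCP^k=\UCP^{k-1}\circ\UCP$ and pushes $x$ forward through $\UCP$, using $\UCP(x)=\UCP_1(x)=\UCP_2(x)$ on $\M_\UCP$ to apply the inductive hypothesis to $\UCP(x)$. The paper instead writes $\UCP^k=\lambda\,\UCP_1\UCP^{k-1}+(1-\lambda)\,\UCP_2\UCP^{k-1}$, applies Choi's theorem to that convex combination, and must then expand $\UCP^{k-1}$ into $2^{k-1}$ words in $\UCP_1,\UCP_2$ to show they all agree on $\M_{\UCP^{k-1}}$ before unpacking with Lemma \ref{lem:md_composition}. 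Your route avoids that multinomial expansion entirely and telescopes the equality conditions more cleanly; the trade-off is only that the paper's version reuses Choi's theorem as a black box at every level, while yours leans on the (equally valid) observation that the two constituent channels coincide with $\UCP$ on the level-one multiplicative domain.
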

\begin{proof}
	We will prove this by induction. The result is already established for $k=1$ by Choi (\cite{choi1}, Theorem 3.3) where it was proved that with $\UCP,\UCP_1,\UCP_2$ as above we have 
	\[\M_\UCP=\M_{\UCP_1}\cap\M_{\UCP_2}\cap\{x\in \M_d \ | \ \UCP_1(x)=\UCP_2(x)\}.\] 
	
	so let $k\in\N$ and assume the inductive hypothesis. Then
	\[\UCP^k=\lambda\UCP_1\UCP^{k-1}+(1-\lambda)\UCP_2\UCP^{k-1}.\]
	Again by using Choi's theorem, this makes:
	\[\M_{\UCP^k}=\M_{\UCP_1\UCP^{k-1}}\cap \M_{\UCP_2\UCP^{k-1}}\cap \{x\in \M_d|\UCP_1\UCP^{k-1}(x)=\UCP_2\UCP^{k-1}(x)\}.\]
	Note that by Lemma \ref{lem:md_composition} the first two sets are subsets of $\M_{\UCP^{k-1}}$, so we can assume any $x$ in the last set is in $\M_{\UCP^{k-1}}$. Thus, we can rewrite it as:
	\[\M_{\UCP^k}=\M_{\UCP_1\UCP^{k-1}}\cap \M_{\UCP_2\UCP^{k-1}}\cap \{x\in \M_{\UCP^{k-1}}|\UCP_1\UCP^{k-1}(x)=\UCP_2\UCP^{k-1}(x)\}.\]
	We can write $\UCP^{k-1}=\sum_{i=1}^{2^{k-1}}c_i\UCP_{i_1}\cdots\UCP_{i_{k-1}}$ (where $i_j\in\{1,2\}$ and $\sum_ic_i=1$). If $x\in \M_{\UCP^{k-1}}$ then by the inductive hypothesis this means $\UCP_1^n(x)=\UCP_2^n(x)$, for all $n\leq k-1$. Thus,
	\begin{align*}
	\UCP_{i_1}\cdots\UCP_{i_{k-1}}(x)=&\UCP_{i_1}\cdots\UCP_{i_{k-2}}\UCP_{i_{k-2}}(x)\\
	=&\UCP_{i_1}\cdots\UCP_{i_{k-3}}\UCP_{i_{k-3}}^2(x)\\
	\cdots&\\
	=&\UCP_{i_1}\UCP_{i_1}^{k-2}(x)\\
	=&\UCP_1^{k-1}(x)=\UCP_2^{k-1}(x)
	\end{align*}
	Thus, $\UCP^{k-1}(x)=\sum_{i=1}^{2^{k-1}}c_i\UCP_1^{k-1}(x)=\UCP_1^{k-1}(x)=\UCP_2^{k-1}(x)$. Hence:
	\begin{align*}
	\M_{\UCP^k}=&\M_{\UCP_1\UCP^{k-1}}\cap \M_{\UCP_2\UCP^{k-1}}\cap \{x\in \M_{\UCP^{k-1}}|\UCP_1\UCP_1^{k-1}(x)=\UCP_2\UCP_2^{k-1}(x)\}\\
	=&\M_{\UCP_1\UCP^{k-1}}\cap \M_{\UCP_2\UCP^{k-1}}\cap \M_{\UCP^{k-1}}\cap \{x\in \M_d|\UCP_1^k(x)=\UCP_2^k(x)\}
	\end{align*}
	For the first term, Lemma \ref{lem:md_composition} gives
	\[\M_{\UCP_1\UCP^{k-1}}=\{x\in \M_{\UCP^{k-1}}|\UCP_1(x)\in \M_{\UCP_ 1}\}=\M_{\UCP^{k-1}}\cap \{x\in \M_d|\UCP_1(x)\in \M_{\UCP_1}\}.\]
	Similarly for $\M_{\UCP_2\UCP^{k-1}}$. Combining these results and using the inductive hypothesis for the structure of $\M_{\UCP^{k-1}}$:
	\begin{align*}
	\M_{\UCP^k}=&\{x\in \M_d|\UCP_1(x)\in \M_{\UCP_1}\}\cap \{x\in \M_d|\UCP_2(x)\in \M_{\UCP_2}\}\\
	&\cap \M_{\UCP^{k-1}}\cap \{x\in \M_d|\UCP_1^k(x)=\UCP_2^k(x)\}\\
	=&\{x\in \M_d|\UCP_1(x)\in \M_{\UCP_1}\}\cap \{x\in \M_d|\UCP_2(x)\in \M_{\UCP_2}\}\cap \M_{\UCP_1^{k-1}}\cap \M_{\UCP_2^{k-1}}\\
	&\cap \{x\in \M_d|\UCP_1^n(x)=\UCP_2^n(x), 1\leq n\leq k-1\}\cap \{x\in \M_d|\UCP_1^k(x)=\UCP_2^k(x)\}\\
	=&\leftset x\in \M_{\UCP_1^{k-1}}\midsetl\UCP_1(x)\in \M_{\UCP_1}\rightset\cap \leftset x\in \M_{\UCP_2^{k-1}}\midsetl\UCP_2(x)\in \M_{\UCP_2}\rightset\\
	&\cap\{x\in \M_d|\UCP_1^n(x)=\UCP_2^n(x), 1\leq n\leq k\}\\
	=&\M_{\UCP_1^k}\cap \M_{\UCP_2^k}\cap \{x\in \M_d|\UCP_1^n(x)=\UCP_2^n(x), 1\leq n\leq k\}\numberthis\label{eq:MUCP_convex}
	\end{align*}
\end{proof}
The problem that allows $\kappa(\UCP)>\max\{\kappa(\UCP_1),\kappa(\UCP_1)\}$ is the final set: It might be that the two channels stabilize quickly, but they have different actions on their stabilized multiplicative domains. As an example, consider the two channels $\UCP_1,\UCP_2:\M_3\rightarrow \M_3$ given by:
\[\UCP_1(x)=\frac{1}{3}\begin{pmatrix}1&1&1\\\omega&1&\omega^2\\\omega^2&1&\omega\end{pmatrix}x\begin{pmatrix}1&\omega&\omega^2\\1&1&1\\1&\omega^2&\omega\end{pmatrix},\]
\[\UCP_2(x)=\begin{pmatrix}0&1&0\\0&0&1\\1&0&0\end{pmatrix}x\begin{pmatrix}0&0&1\\1&0&0\\0&0&1\end{pmatrix},\]
where $\omega=e^{i2\pi/3}$. Both of these channels have multiplicative index 1, since they are unitary and $\M_\UCP=\M_{\UCP^\infty}=\M_3$. Let $\UCP=\tfrac{1}{2}\UCP_1+\tfrac{1}{2}\UCP_2$. We calculated that
\[\M_\UCP=\leftset \begin{pmatrix} a&b&-c\\b&2b+a-c&b\\-c&b&a\end{pmatrix}\midsetl a,b,c\in\C\rightset,\, \M_{\UCP^2}=\C 1\]
and thus $\kappa(\UCP)=2$, greater than the multiplicative index of either channel in the convex combination.\par
Extending this idea, we let $\UCP:\M_9\rightarrow \M_9$ be defined as
\[\UCP=\tfrac{1}{2}\UCP_1\otimes \UCP_1 + \tfrac{1}{2}\UCP_2\otimes \UCP_2.\]
This is a separable channel, and both product channels in the convex combination have a multiplicative index of 1 (by Theorem \ref{thm:kappa_bound}) but using numerical methods, we found that $\kappa(\UCP)=2$. Thus, we do not have a direct analog of Theorem \ref{thm:index_factor} for separable channels.\par
However, the size of the multiplicative domain of a convex combination of channels is limited by the size of the multiplicative domain of each channel in the convex combination; thus, for the multiplicative index to be higher than any of the constituent channels, the last set in Equation \ref{eq:MUCP_convex} must be the limiting factor. As a specific case of this, we have the following proposition. For the proof, recall that for three algebras $\A,\B,\mathscr{C}$, if $1\in \B$, then:
\[(1\otimes \A)\cap (\B\otimes\mathscr{C})=1\otimes (\A\cap\mathscr{C}).\]
\begin{prop}
	Let $\{\UCP_1,\cdots,\UCP_n\}$ be unital quantum channels on $\M_d$ such that $\M_{\UCP_i^\infty}=\C 1$ for some $i$ and $\{\Psi_1,\cdots,\Psi_n\}$ be unital channels on $\M_c$. Then if 
	\[\UCP=\sum_{i=1}^n\lambda_i\UCP_i\otimes\Psi_i\]
	where $\lambda_i\geq 0,\sum_i\lambda_i=1$, then $\kappa(\UCP)\leq \max\{2d-2,2c-2\}$.
\end{prop}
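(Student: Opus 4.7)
The plan is to identify $\M_{\UCP^k}$, for $k$ sufficiently large, with $\C 1 \otimes \M_{\widetilde{\Psi}^k}$, where $\widetilde{\Psi} := \sum_i \lambda_i \Psi_i$ is a unital channel on $\M_c$, and then invoke Theorem \ref{thm:kappa_bound} applied to $\widetilde{\Psi}$.

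First, I would extend Lemma \ref{lem:convex_mult} to an $n$-term convex combination: the same inductive argument based on Choi's theorem (in its $n$-summand form) yields
\[\M_{\UCP^k}=\bigcap_{i=1}^n\M_{(\UCP_i\otimes\Psi_i)^k}\cap \left\{x : (\UCP_i\otimes\Psi_i)^\ell(x)=(\UCP_j\otimes\Psi_j)^\ell(x),\ 1\le \ell\le k,\ \forall i,j\right\}.\]
By Theorem \ref{thm:md_tensor}, $\M_{(\UCP_i\otimes\Psi_i)^k}=\M_{\UCP_i^k}\otimes\M_{\Psi_i^k}$. Fix an index $i_0$ with $\M_{\UCP_{i_0}^\infty}=\C 1$; Theorem \ref{thm:kappa_bound} gives $\kappa(\UCP_{i_0})\le 2d-2$, so $\M_{\UCP_{i_0}^k}=\C 1$ once $k\ge 2d-2$. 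Iterated application of the identity $(1\otimes \A)\cap(\B\otimes\mathscr{C})=1\otimes(\A\cap\mathscr{C})$, valid since each $\M_{\UCP_i^k}$ is unital, then collapses the intersection of tensor products to $\C 1\otimes\bigcap_{i}\M_{\Psi_i^k}$. In particular $\M_{\UCP^k}\subseteq \C 1\otimes \M_c$ for $k\ge 2d-2$.

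Second, every element of $\M_{\UCP^k}$ now has the form $1\otimes a$, and unitality of the $\UCP_i$'s gives $(\UCP_i\otimes\Psi_i)^\ell(1\otimes a)=1\otimes\Psi_i^\ell(a)$. The two defining conditions therefore reduce to $a\in\bigcap_i\M_{\Psi_i^k}$ together with $\Psi_i^\ell(a)=\Psi_j^\ell(a)$ for $1\le \ell\le k$ and all $i,j$. Applying the same generalized Lemma \ref{lem:convex_mult} to $\widetilde{\Psi}$, these are exactly the conditions characterizing $\M_{\widetilde{\Psi}^k}$. Hence, for $k\ge 2d-2$,
\[\M_{\UCP^k}=\C 1\otimes \M_{\widetilde{\Psi}^k}.\]

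Finally, Theorem \ref{thm:kappa_bound} applied to $\widetilde{\Psi}$ gives $\kappa(\widetilde{\Psi})\le 2c-2$, so $\M_{\widetilde{\Psi}^k}$ is constant for $k\ge 2c-2$. Combined with the requirement $k\ge 2d-2$ from the reduction above, $\M_{\UCP^k}=\C 1\otimes\M_{\widetilde{\Psi}^\infty}$ for every $k\ge K:=\max\{2d-2,2c-2\}$, giving $\kappa(\UCP)\le K$. The main obstacle is the matching step in the second paragraph: the same pair of conditions has to be recognized in two different applications of the generalized Lemma \ref{lem:convex_mult}, and unitality of the $\UCP_i$'s is the essential ingredient that erases the $\UCP_i$ iterates from $(\UCP_i\otimes\Psi_i)^\ell(1\otimes a)$. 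Spelling out the $n$-term extensions of Choi's theorem and of Lemma \ref{lem:convex_mult} is a secondary but entirely routine concern.
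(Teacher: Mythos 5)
Your proposal is correct and follows essentially the same route as the paper's own proof: the $n$-term extension of Lemma \ref{lem:convex_mult}, splitting each $\M_{(\UCP_i\otimes\Psi_i)^k}$ via Theorem \ref{thm:md_tensor}, collapsing the intersection through the trivial factor $\C 1$, identifying the result with $\C 1\otimes\M_{\Psi^k}$ for $\Psi=\sum_i\lambda_i\Psi_i$, and finishing with Theorem \ref{thm:kappa_bound}. The only cosmetic difference is that you stabilize just the distinguished factor $\M_{\UCP_{i_0}^k}$ while the paper passes to $\M_{\UCP_i^\infty}\otimes\M_{\Psi_i^\infty}$ for every $i$; both yield the same conclusion.
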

\begin{proof}
	Let $k\geq\max\{2d-2,2c-2\}$. Using Lemma \ref{lem:convex_mult} and Theorem \ref{thm:md_tensor}, we have that:
	\begin{align*}
	\M_{\UCP^k}=&\bigcap_{i=1}^n \M_{\UCP_i^k\otimes \Psi_i^k} \cap \leftset x\in \M_{dc}\midsetr \UCP_i^k\otimes\Psi_i^k(x)=\UCP_j^k\otimes \Psi_j^k(x),1\leq i,j\leq n\leq k\rightset\\
	=&\bigcap_{i=1}^n \M_{\UCP_i^\infty}\otimes \M_{\Psi_i^\infty} \cap \leftset x\in \M_{dc}\midsetr \UCP_i^k\otimes\Psi_i^k(x)=\UCP_j^k\otimes \Psi_j^k(x),1\leq i,j\leq n\leq k\rightset
	\end{align*}
	Since $\M_{\UCP_i^\infty}=\C 1$ for some $i$,
	\[\bigcap_{i=1}^n \M_{\UCP_i^\infty}\otimes \M_{\Psi_i^\infty}=\C 1\otimes \bigcap_{i=1}^n \M_{\Psi_i^\infty}.\]
	This gives us:
	\begin{align*}
	\M_{\UCP^k}=&\C 1\otimes \bigcap_{i=1}^n \M_{\Psi_i^\infty}\cap\leftset x\in \M_{dc}\midsetr \UCP_i^k\otimes\Psi_i^k(x)=\UCP_j^k\otimes \Psi_j^k(x),1\leq i,j\leq n\leq k\rightset\\
	=&\C 1\otimes \bigcap_{i=1}^n \M_{\Psi_i^\infty}\cap\leftset 1\otimes x\midsetr x\in \M_c, \Psi_i^k( x)= \Psi_j^k(x),1\leq i,j\leq n\leq k\rightset\\
	=&\C 1\otimes \M_{\Psi^k}
	\end{align*}
	where $\Psi=\sum_{i=1}^n\lambda_i\Psi_i$. Since $\Psi:\M_c\rightarrow \M_c$, then by Theorem \ref{thm:kappa_bound}, $\kappa(\Psi)\leq 2c-2\leq k$, so $\M_{\Psi^k}=\M_{\Psi^\infty}$. Thus:
	\[\M_{\UCP^k}=\C 1\otimes \M_{\Psi^\infty}\]
	for all $k\geq\max\{2d-2,2c-2\}$; hence the multiplicative domain is stable, so $\kappa(\UCP)\leq\max\{2d-2,2c-2\}$.
\end{proof}
The contrapositive states that if $\UCP:\M_d\rightarrow \M_d$ is separable and $\kappa(\UCP)\geq d-1$, then none of the channels in the decomposition of $\UCP$ can have a trivial stabilized multiplicative domain.


\section{Products of Strictly Contractive Channels}\label{Sec:strictly-contractive}
Let $\mathfrak{d}$ be any distinguishability measure 
which is monotone with respect to a quantum channel.
More precisely let $\UCP:\Md\rightarrow\Md$ be a channel such that the following holds 
\[\mathfrak{d}(\UCP(\rho),\UCP(\sigma))\leq\mathfrak{d}(\rho,\sigma),\]
for all density matrices $\rho,\sigma$. A wide class of examples are the quantum f-divergence (\cite{f-div1}, \cite{f-div2}), quantum relative entropy (\cite{Watrous-book}, Theorem 5.38), quantum R{\'e}nyi entropy (\cite{Frank-monotonicity}), fidelity (\cite{Watrous-book}, Theorem 3.30), and trace metric (\cite{math-language-qit}, Proposition 4.37). A channel that does not preserve the measure for any pair of distinct density matrices is called a \textbf{strictly contractive channel}. Formally we define:
\begin{definition}
Given a distinguishability measure $\mathfrak{d}$, a channel $\UCP:\M_d\rightarrow\M_d$ is called a strictly contractive channel with respect to $\mathfrak{d}$ if for every pair of distinct density operators $\rho,\sigma$ in $\M_d$, we have 
\[\mathfrak{d}({\UCP(\rho),\UCP(\sigma)})<\mathfrak{d}(\rho,\sigma)\]
\end{definition}
 Note that in 
(\cite{Raginsky}  and \cite {Doug-Miza}) the authors studied maps with respect to the trace metric and the Bures metric respectively and showed that these maps are dense in the set of all quantum channels.
 Since such channels are found in abundance, we study contractive maps with respect to any distinguishability measure. For these maps we have the following observation:
\begin{lemma}\label{strictly-contractive}
For unital strictly contractive channels $\Ep$ as defined above, the multiplicative domain $\Me$ is trivial, that is, $\Me=\mathbb{C}1$.
\end{lemma}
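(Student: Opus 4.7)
The plan is to argue by contradiction. Suppose $\Me \neq \mathbb{C}1$. Since by Theorem \ref{choi} the multiplicative domain is a unital $C^*$-subalgebra of $\M_d$, the non-triviality forces $\Me$ to contain a projection $p$ with $0 \neq p \neq 1$ (as noted in the discussion preceding Theorem \ref{choi}). Let $k = \Tr(p)$, so $1 \leq k \leq d-1$. The strategy is then to exhibit two distinct density matrices built from $p$ and $1-p$ on which $\Ep$ acts in a trace-distance preserving manner, directly contradicting strict contractivity.

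The key input from the multiplicative domain is that $\Ep$ behaves as a $\ast$-homomorphism on $\Me$. In particular, $\Ep(p) = \Ep(p^2) = \Ep(p)^2$ and $\Ep(p)^* = \Ep(p^*) = \Ep(p)$, so $\Ep(p)$ is itself a projection. Since $\Ep$ is trace-preserving, $\Tr(\Ep(p)) = \Tr(p) = k$, so $\Ep(p)$ has the same rank as $p$. Therefore there exists a unitary $u \in \M_d$ with $\Ep(p) = u p u^*$, and consequently $\Ep(1-p) = 1 - \Ep(p) = u(1-p)u^*$.

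Now define the density matrices
\[
\rho = \tfrac{1}{k}\,p, \qquad \sigma = \tfrac{1}{d-k}\,(1-p),
\]
which are distinct (their supports are orthogonal, non-zero projections). The preceding paragraph gives $\Ep(\rho) = u\rho u^*$ and $\Ep(\sigma) = u\sigma u^*$. Since $\mathfrak{d}$ is monotone under every quantum channel and both $\mathrm{Ad}_u$ and $\mathrm{Ad}_{u^*}$ are channels, a standard two-sided monotonicity argument yields unitary invariance: $\mathfrak{d}(u\alpha u^*, u\beta u^*) = \mathfrak{d}(\alpha, \beta)$ for all density matrices $\alpha, \beta$. Hence
\[
\mathfrak{d}(\Ep(\rho),\Ep(\sigma)) = \mathfrak{d}(u\rho u^*, u\sigma u^*) = \mathfrak{d}(\rho,\sigma),
\]
contradicting the assumption that $\Ep$ is strictly contractive on the pair $\rho \neq \sigma$.

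The only delicate point is justifying the unitary invariance of $\mathfrak{d}$ purely from the monotonicity hypothesis, and implicitly that $\mathfrak{d}(\rho,\sigma)>0$ for distinct density matrices (the latter being a standing assumption for any reasonable distinguishability measure, and satisfied by all of the examples listed in the paper). Beyond this, the argument is essentially forced: the multiplicative domain structure guarantees that $\Ep$ acts as a unitary rotation on the orthogonal pair $(p, 1-p)$, which any unitarily invariant measure cannot distinguish from the identity action.
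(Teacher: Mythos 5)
Your proof is correct, but it takes a genuinely different route from the paper's. The paper argues in one line using the identity $\Me=\mathrm{Fix}_{\Ep^*\circ\Ep}$ (Theorem \ref{kribs-spekkens}): for any two distinct density matrices $\rho,\sigma\in\Me$ one gets $\mathfrak{d}(\rho,\sigma)=\mathfrak{d}(\Ep^*\circ\Ep(\rho),\Ep^*\circ\Ep(\sigma))\leq\mathfrak{d}(\Ep(\rho),\Ep(\sigma))<\mathfrak{d}(\rho,\sigma)$, where the middle step is monotonicity under the adjoint channel $\Ep^*$ (a channel because $\Ep$ is both unital and trace-preserving). You instead use only the homomorphism property of $\Ep$ on $\Me$ to show that $\Ep$ acts as a single unitary conjugation on the orthogonal pair $p/k$, $(1-p)/(d-k)$ (the unitality of $\Ep$ is what lets one $u$ handle both $p$ and $1-p$), and then derive unitary invariance of $\mathfrak{d}$ from two-sided monotonicity. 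Both arguments are sound; the paper's is shorter and shows the stronger statement that $\mathfrak{d}$ is preserved on \emph{every} pair in $\Me$, while yours avoids invoking the adjoint and the fixed-point characterization entirely, making it more self-contained given just the definition of the multiplicative domain. Two small remarks: your worry about needing $\mathfrak{d}(\rho,\sigma)>0$ is unnecessary, since equality already contradicts the strict inequality demanded by strict contractivity regardless of the value; and the paper, like you, implicitly uses that a nontrivial $\Me$ contains a nontrivial projection to produce the two distinct density matrices.
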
 
\begin{proof}
Suppose $\Me$ is not trivial. Because $1$ is always in $\Me$, we assume that there exists at least one non-trivial element in $\Me$.
Since for unital channels we have the relation  $\Me=\rm{Fix}_{\Ep^*\circ\Ep}$, using the monotonicity property of $\mathfrak{d}$ under $\Ep$ we have that for any $\rho,\sigma\in \Me$, 
\[\mathfrak{d}(\rho,\sigma)=\mathfrak{d}(\Ep^*\circ\Ep(\rho),\Ep^*\circ\Ep(\sigma))\leq\mathfrak{d}(\Ep(\rho),\Ep(\sigma))<\mathfrak{d}(\rho,\sigma).\]
So we arrive at a contradiction and hence $\Me=\mathbb{C}1$.
\end{proof}
As pointed out in (\cite{Raginsky}), it is a hard problem to decide whether the tensor product of strictly contractive channels is strictly contractive or not.
We provide an answer in affirmative if the measure allows the recovery map.

Among all the distinguishability measures, there are certain measures that allow reversibility of a channel. Given a distinguishable measure $\mu$, a channel $\UCP$ is called reversible on a certain subset $\mathcal{S}$ of density operators if there exists a recovery map $\mathcal{R}$ such that $\mathcal{R}\circ\Ep(x)=x$, for all $x\in \mathcal{S}$. This property is often referred to as sufficiency (\cite{Jencova1}, \cite{Jencova2}). There are certain measures which allow reversible maps if the measure of a pair of density matrices is preserved by the channel (for example Renyi entropy, quantum entropy etc. (\cite{f-div2}, \cite{Jencova1}, \cite{Jencova2})).
However there are some measures (trace, fidelity (\cite{Mosonyi}, \cite{Jencova1})) that do not allow the recovery maps even if the channel preserves these measure for some density operators. We denote $\mathfrak{D}_{R}$ to be the measures that allow unital recovery maps if the channel preserves the measure of two density operators. For these measures we have the following theorem:
\begin{theorem}
Let $\mathfrak{d}\in \mathfrak{D}_{R}$. With respect to the measure $\mathfrak{d}$, two unital channels $\UCP_1,\UCP_2$ are strictly contractive if and only if $\UCP_1\otimes\UCP_2$ is also strictly contractive.
\end{theorem}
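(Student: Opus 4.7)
The plan is to treat the two implications separately. The forward direction is a short reduction via monotonicity, while the backward direction combines Lemma \ref{strictly-contractive}, Theorem \ref{thm:md_tensor}, and the hypothesis that the measure sits in $\mathfrak{D}_{R}$, together with a Kadison--Schwarz argument that upgrades recoverability to membership in the multiplicative domain.

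For the forward direction, I would first establish the auxiliary equality $\mathfrak{d}(\rho\otimes\tau,\sigma\otimes\tau)=\mathfrak{d}(\rho,\sigma)$ for any density matrices $\rho,\sigma\in\Md$ and any density matrix $\tau\in\M_{d'}$. This follows by applying monotonicity twice: once to the channel $x\mapsto x\otimes\tau$ (which is CP and trace preserving since $\mathrm{tr}\,\tau=1$) to get $\leq$, and once to the partial trace $\mathrm{tr}_{2}$ (a channel) to get $\geq$. With this in hand, if $\UCP_{1}\otimes\UCP_{2}$ is strictly contractive, then given $\rho_{1}\neq\sigma_{1}$ in $\Md$ I pick any $\tau$ in $\M_{d'}$, note $\rho_{1}\otimes\tau\neq\sigma_{1}\otimes\tau$, and obtain
\[
\mathfrak{d}(\UCP_{1}(\rho_{1}),\UCP_{1}(\sigma_{1}))=\mathfrak{d}\bigl(\UCP_{1}(\rho_{1})\otimes\UCP_{2}(\tau),\UCP_{1}(\sigma_{1})\otimes\UCP_{2}(\tau)\bigr)<\mathfrak{d}(\rho_{1}\otimes\tau,\sigma_{1}\otimes\tau)=\mathfrak{d}(\rho_{1},\sigma_{1}),
\]
giving strict contractivity of $\UCP_{1}$, and symmetrically for $\UCP_{2}$.

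For the backward direction, assume both $\UCP_{1}$ and $\UCP_{2}$ are strictly contractive. By Lemma \ref{strictly-contractive}, $\M_{\UCP_{1}}=\M_{\UCP_{2}}=\mathbb{C}1$, and then Theorem \ref{thm:md_tensor} gives $\M_{\UCP_{1}\otimes\UCP_{2}}=\mathbb{C}1$. Suppose for contradiction that $\UCP_{1}\otimes\UCP_{2}$ fails to be strictly contractive: there exist distinct density matrices $\rho,\sigma\in\Md\otimes\M_{d'}$ with $\mathfrak{d}((\UCP_{1}\otimes\UCP_{2})(\rho),(\UCP_{1}\otimes\UCP_{2})(\sigma))=\mathfrak{d}(\rho,\sigma)$. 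Because $\mathfrak{d}\in\mathfrak{D}_{R}$, there is a unital channel $\mathcal{R}$ with $\mathcal{R}\circ(\UCP_{1}\otimes\UCP_{2})(\rho)=\rho$ and $\mathcal{R}\circ(\UCP_{1}\otimes\UCP_{2})(\sigma)=\sigma$. The goal is to show $\rho,\sigma\in\M_{\UCP_{1}\otimes\UCP_{2}}=\mathbb{C}1$, which would force $\rho=\sigma=\frac{1}{dd'}1$, a contradiction.

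The key technical step, which I expect to be the main obstacle, is the inclusion $\mathrm{Fix}(\mathcal{R}\circ(\UCP_{1}\otimes\UCP_{2}))\subseteq\M_{\UCP_{1}\otimes\UCP_{2}}$. Write $\Phi=\UCP_{1}\otimes\UCP_{2}$ and $T=\mathcal{R}\circ\Phi$, both unital and trace preserving. First I would verify that $\mathrm{Fix}(T)$ is a $*$-subalgebra: for $x\in\mathrm{Fix}(T)$, Kadison--Schwarz gives $T(x^{*}x)\geq T(x)^{*}T(x)=x^{*}x$, and the trace preservation of $T$ forces $T(x^{*}x)=x^{*}x$, so $x^{*}x\in\mathrm{Fix}(T)$; polarization then gives closure under products. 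Next, for any $x\in\mathrm{Fix}(T)$ the chain
\[
x^{*}x=T(x^{*}x)=\mathcal{R}(\Phi(x^{*}x))\geq\mathcal{R}(\Phi(x)^{*}\Phi(x))\geq\mathcal{R}(\Phi(x))^{*}\mathcal{R}(\Phi(x))=x^{*}x
\]
collapses to equalities, so $\mathcal{R}(D)=0$ for the positive element $D=\Phi(x^{*}x)-\Phi(x)^{*}\Phi(x)\geq 0$; trace preservation of $\mathcal{R}$ then yields $\operatorname{tr}(D)=0$ and hence $D=0$. The same argument applied to $x^{*}$ (which is also in $\mathrm{Fix}(T)$) delivers $\Phi(xx^{*})=\Phi(x)\Phi(x^{*})$, so by Theorem \ref{choi} we conclude $x\in\M_{\Phi}$. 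Applying this to $\rho$ and $\sigma$ completes the proof.
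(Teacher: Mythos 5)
Your proof is correct, and for the main direction (both factors strictly contractive $\Rightarrow$ the tensor product is) it follows the paper's skeleton exactly: Lemma \ref{strictly-contractive} and Theorem \ref{thm:md_tensor} force $\M_{\UCP_1\otimes\UCP_2}=\C 1$, and a recovery map for a non-contracted pair $\rho\neq\sigma$ places both in that trivial algebra. You deviate in two sub-steps, both legitimately. First, where the paper simply cites Proposition 5.2 of \cite{miza} for the implication ``recoverable $\Rightarrow$ in the multiplicative domain,'' you reprove it from scratch with the Kadison--Schwarz chain $x^*x=\mathcal{R}(\Phi(x^*x))\geq\mathcal{R}(\Phi(x)^*\Phi(x))\geq\mathcal{R}(\Phi(x))^*\mathcal{R}(\Phi(x))=x^*x$ plus trace preservation to kill the positive defect $D$; this is a correct, self-contained substitute (and your preliminary observation that $\mathrm{Fix}(T)$ is a $*$-algebra is more than you actually need --- only $T(x^*x)=x^*x$ and $x^*\in\mathrm{Fix}(T)$ enter). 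Second, for the easy direction you use pure monotonicity --- the ancilla identity $\mathfrak{d}(\rho\otimes\tau,\sigma\otimes\tau)=\mathfrak{d}(\rho,\sigma)$ obtained from the embedding $x\mapsto x\otimes\tau$ and the partial trace --- whereas the paper again invokes the recovery map, reversing $\UCP_1\otimes\UCP_2$ on $\rho\otimes 1,\sigma\otimes 1$ via $\mathcal{R}_1\otimes\mathrm{id}$. Your route has the advantage of not needing $\mathfrak{d}\in\mathfrak{D}_R$ or unitality of $\UCP_2$ for that direction at all; its only (mild) cost is that it uses monotonicity under channels between spaces of different dimensions, which is standard for all the measures the paper considers but is slightly beyond the monotonicity statement as literally written in Section \ref{Sec:strictly-contractive}.
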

\begin{proof}
Suppose both $\UCP_1$ and $\UCP_2$ are strictly contractive but $\UCP_1\otimes\UCP_2$ is not. Then there exists a pair of distnct $\rho,\sigma\in \Md\otimes\Md$ such that 
\[\mathfrak{d}(\rho,\sigma)=\mathfrak{d}(\UCP_1\otimes\UCP_2(\rho),\UCP_1\otimes\UCP_2(\sigma)).\] 
Now since $\mathfrak{d}$ allows a recovery map, we get a (unital) channel $\mathcal{R}$ such that 
\[\mathcal{R}\circ(\UCP_1\otimes\UCP_2)(\rho)=\rho \ \text{and} \ \mathcal{R}\circ(\UCP_1\otimes\UCP_2)(\sigma)=\sigma.\]
Now this means (\cite{miza}, Proposition 5.2) we get $\rho,\sigma\in \mathcal{M}_{\UCP_1\otimes\UCP_2}$. Since we get from Theorem \ref{thm:md_tensor} that $\mathcal{M}_{\UCP_1\otimes\UCP_2}=\M_{\UCP_1}\otimes\mathcal{M}_
{\UCP_2}$ and we know by Lemma \ref{strictly-contractive} that $\M_{\UCP_1}=\mathbb{C}1=\mathcal{M}_{\UCP_2}$, we obtain $\mathcal{M}_{\UCP_1\otimes\UCP_2}=\mathbb{C}1$, contradicting the existence of the pair $\rho,\sigma\in \Md\otimes\Md$. 
Hence $\UCP_1\otimes\UCP_2$ is strictly contractive.\par
For the converse, if, without loss of generality, $\UCP_1$ is not strictly contractive, then it admits a recovery map $\mathcal{R}_1$ such that there are two linearly independent matrices $\rho,\sigma$ with $\mathcal{R}_1\circ\UCP_1(\rho)=\rho$ and similarly for $\sigma$. But then the map $\mathcal{R}_1\otimes 1$ will reverse the action of $\UCP_1\otimes\UCP_2$ on the matrices $\rho\otimes 1$ and $\sigma\otimes 1$. This would imply that $\UCP_1\otimes\UCP_2$  cannot be strictly contractive.
\end{proof}


\section{Applications}\label{sec:error_correction}
In the context of quantum information theory, the multiplicative domain of a channel was studied in connection to the scheme of error correction (\cite{choi2}, \cite{johnston}). Specifically, Theorem 11 in \cite{choi2} states that the multiplicative domain is precisely the algebra over the largest unitarily correctable code (UCC). Now suppose we take two channels $\UCP_1:\M_d\rightarrow \M_d$ and $\UCP_2:\M_c\rightarrow \M_c$ and consider the correctable codes of $\UCP_1\otimes \UCP_2$. Clearly, if $\style{C}_1$ and $\style{C}_2$ are unitarily correctable codes for $\UCP_1$ and $\UCP_2$, respectively, then $\style{C}_1\otimes\style{C}_2$ is a correctable code for $\UCP_1\otimes \UCP_2$; however, we would like to have a larger code. Is there a code $\style{C}$ that cannot be decomposed into $\style{C}=\style{C}_1\otimes\style{C}_2$? Using Theorem \ref{thm:md_tensor}, we can show that for unital channels, there are no larger unitarily correctable codes.
\begin{prop}
	Let $\UCP_1:\M_d\rightarrow \M_d$ and $\UCP_2:\M_c\rightarrow \M_c$ be two unital quantum channels. Then $UCC(\UCP_1\otimes\UCP_2)$ is precisely $UCC(\UCP_1)\otimes UCC(\UCP_2)$.
\end{prop}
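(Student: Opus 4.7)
The plan is to combine the two key facts we already have in hand: the characterization from \cite{choi2} that says the multiplicative domain of a unital channel is exactly the algebra generated by its largest unitarily correctable code, together with Theorem \ref{thm:md_tensor}, which tells us that $\M_{\UCP_1\otimes\UCP_2}=\M_{\UCP_1}\otimes\M_{\UCP_2}$. The inclusion $UCC(\UCP_1)\otimes UCC(\UCP_2)\subseteq UCC(\UCP_1\otimes\UCP_2)$ is the easy direction, essentially because product recovery unitaries correct the product channel; so the content is the reverse inclusion.

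First I would fix notation: let $\A_i$ denote the algebra over $UCC(\UCP_i)$, so by Theorem 11 of \cite{choi2} we have $\A_i=\M_{\UCP_i}$ for $i=1,2$, and similarly let $\A$ be the algebra over $UCC(\UCP_1\otimes\UCP_2)$, so $\A=\M_{\UCP_1\otimes\UCP_2}$. Then I would simply apply Theorem \ref{thm:md_tensor} to obtain
\[
\A=\M_{\UCP_1\otimes\UCP_2}=\M_{\UCP_1}\otimes\M_{\UCP_2}=\A_1\otimes\A_2,
\]
which, translated back through the UCC/algebra correspondence, says precisely that $UCC(\UCP_1\otimes\UCP_2)=UCC(\UCP_1)\otimes UCC(\UCP_2)$.

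The main subtle point, and the only place one has to be a bit careful, is the translation between a correctable code (a subspace or a subsystem of the underlying Hilbert space) and the associated $*$-subalgebra of $\M_d\otimes\M_c$: one must confirm that the bijection from \cite{choi2} between UCCs and their generated subalgebras is compatible with the tensor product, i.e.\ that the algebra generated by $\style{C}_1\otimes\style{C}_2$ is $\A_1\otimes\A_2$, and conversely that a product algebra $\A_1\otimes\A_2$ of the form arising from UCCs corresponds to a product UCC. Both are standard from the Wedderburn structure of finite-dimensional $*$-algebras, but this is the step that carries any real content beyond the direct quotation of Theorem \ref{thm:md_tensor}. Once that bookkeeping is done, the proposition is immediate.
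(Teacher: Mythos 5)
Your proposal is correct and follows the same route as the paper: identify $UCC(\UCP_1\otimes\UCP_2)$ with $\M_{\UCP_1\otimes\UCP_2}$ via Theorem 11 of \cite{choi2}, then apply Theorem \ref{thm:md_tensor} to split the multiplicative domain. The extra bookkeeping you flag about the code/algebra correspondence is a reasonable caution but the paper treats it as implicit in the same identification.
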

\begin{proof}
	Using Theorem 11 from \cite{choi2}, $UCC(\UCP_1\otimes\UCP_2)=\M_{\UCP_1\otimes\UCP_2}$, and using Theorem \ref{thm:md_tensor}, this is equal to $\M_{\UCP_1}\otimes \M_{\UCP_2}=UCC(\UCP_1)\otimes UCC(\UCP_2)$.
\end{proof}
For separable channels, the situation does not improve either. If $\UCP=\sum_{i=1}^n\lambda_i\UCP_i\otimes\Psi_i$, then by Choi's theorem (\cite{choi1}, Theorem 3.3), $\M_\UCP\subseteq \M_{\UCP_i}\otimes \M_{\Psi_i}$ for all $i$. Thus, the largest unitarily correctable code of a separable unital channel is at most as large as the tensor product of the largest unitarily correctable codes for any of its constituent product channels.\par
As a natural converse question, if we start with an error correcting code for a channel $\UCP:\M_d\otimes \M_c\rightarrow \M_d\otimes \M_c$, can we correct the code with a unital product channel $\style{R}_1\otimes\style{R}_2$? First, a lemma:
\begin{lemma}\label{lem:ucc_kappa}
If $\UCP$ is a unital channel, $\style{C}$ is the largest $UCC$ (i.e., the multiplicative domain), and $\style{R}$ corrects $\style{C}$, then $\kappa(\style{R})\geq\kappa(\UCP)$.
\end{lemma}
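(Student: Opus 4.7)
The plan is to lift the shrinking chain of multiplicative domains of $\UCP$ into the chain for $\style{R}$, using the correction identity $\style{R}\circ\UCP(x)=x$ for every $x\in\style{C}=\M_\UCP$. First I would record the base-case consequences: since $\M_\UCP\subseteq\rm{Fix}_{\style{R}\circ\UCP}$ and fixed points of a unital trace-preserving channel always lie in its multiplicative domain, we get $\M_\UCP\subseteq\M_{\style{R}\circ\UCP}$. Lemma \ref{lem:md_composition} describes $\M_{\style{R}\circ\UCP}=\{x\in\M_\UCP:\UCP(x)\in\M_\style{R}\}$, so this forces $\UCP(\M_\UCP)\subseteq\M_\style{R}$. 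The restriction $\UCP|_{\M_\UCP}$ is a unital $*$-homomorphism by definition of multiplicative domain, and is injective because $\style{R}$ inverts it, so $\UCP|_{\M_\UCP}$ is a $*$-isomorphism onto $\UCP(\M_\UCP)$ with inverse $\style{R}|_{\UCP(\M_\UCP)}$.

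Next I would bootstrap this to higher iterates. For any $x\in\M_{\UCP^k}$ and any $0\leq j\leq k-1$, the iterate $\UCP^j(x)$ belongs to $\M_{\UCP^{k-j}}\subseteq\M_\UCP$, so the correction identity telescopes into $\style{R}(\UCP^j(x))=\UCP^{j-1}(x)$ and hence $\style{R}^k\circ\UCP^k(x)=x$. Thus $\M_{\UCP^k}\subseteq\rm{Fix}_{\style{R}^k\circ\UCP^k}\subseteq\M_{\style{R}^k\circ\UCP^k}$, and iterating Lemma \ref{lem:md_composition} extracts $\UCP^k(\M_{\UCP^k})\subseteq\M_{\style{R}^k}$ for every $k$. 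Each $\UCP^k|_{\M_{\UCP^k}}$ is a $*$-isomorphism onto its image, inverted by $\style{R}^k$, so every $\M_{\style{R}^k}$ contains a $*$-isomorphic copy of $\M_{\UCP^k}$.

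To conclude I would argue by contradiction: suppose $\kappa(\style{R})<\kappa(\UCP)$ and pick $k_0<\kappa(\UCP)$ with $\M_{\style{R}^{k_0}}=\M_{\style{R}^{k_0+1}}=\M_{\style{R}^\infty}$. Both $\UCP^{k_0}(\M_{\UCP^{k_0}})$ and $\UCP^{k_0+1}(\M_{\UCP^{k_0+1}})$ then sit inside $\M_{\style{R}^\infty}$, on which $\style{R}$ acts as a $*$-automorphism. Combining this automorphism property with the telescoped recovery identity $\style{R}^{k_0}\circ\UCP^{k_0}|_{\M_{\UCP^{k_0}}}=\mathrm{id}$, I would derive $\M_{\UCP^{k_0+1}}=\M_{\UCP^{k_0}}$, contradicting the strict decrease at $k_0<\kappa(\UCP)$.

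The main obstacle is exactly this last step: the strictly shrinking chain of embedded $*$-isomorphic copies $\UCP^k(\M_{\UCP^k})$ does not immediately force the ambient chain $\M_{\style{R}^k}$ to strictly shrink, since $\dim\M_{\style{R}^k}$ can strictly exceed $\dim\M_{\UCP^k}$. The cleanest route I see is to leverage the fact that once the $\style{R}$ chain stabilizes, $\style{R}$ becomes a $*$-automorphism of $\M_{\style{R}^\infty}$, under which any finite descending chain of $*$-subalgebras invariant in the appropriate sense must itself stabilize, forcing the $\UCP$ chain of copies inside $\M_{\style{R}^\infty}$ to stabilize as well.
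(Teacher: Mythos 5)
Your first two steps are sound and essentially match the paper's: the telescoped identity $\style{R}^k\circ\UCP^k(x)=x$ on $\M_{\UCP^k}$ and the containment $\UCP^k(\M_{\UCP^k})\subseteq\M_{\style{R}^k}$ are exactly what the paper establishes (your route to the latter via $\mathrm{Fix}_{\style{R}^k\circ\UCP^k}\subseteq\M_{\style{R}^k\circ\UCP^k}$ and Lemma \ref{lem:md_composition} is if anything cleaner than the paper's rank-counting on projections). But the final step is a genuine gap, and you have diagnosed it correctly yourself: knowing that each $\M_{\style{R}^k}$ contains a $*$-isomorphic copy of $\M_{\UCP^k}$ does not force the ambient chain $\M_{\style{R}^k}$ to keep strictly shrinking, and the copies $\UCP^k(\M_{\UCP^k})$ are not nested in any useful way, so the ``invariant descending chain'' you gesture at in your last sentence does not exist as stated. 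Concretely, from $\M_{\style{R}^{k_0}}=\M_{\style{R}^\infty}$ you only get $\UCP^{k_0}(\M_{\UCP^{k_0}})\subseteq\M_{\style{R}^\infty}$, and there is no a priori relation between $\M_{\style{R}^\infty}$ and $\M_\UCP$ that would let you conclude $\UCP^{k_0}(y)\in\M_\UCP$ for $y\in\M_{\UCP^{k_0}}$, which is what $\M_{\UCP^{k_0+1}}=\M_{\UCP^{k_0}}$ would require.

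The missing ingredient, which the paper supplies, is spectral: by Theorem 2.5 of \cite{miza}, $\M_{\style{R}^\infty}$ has a basis of peripheral eigenoperators of $\style{R}$, and the paper shows no such eigenoperator can be of the form $\UCP^k(y)$ with $y\notin\M_{\UCP^\infty}$. Indeed, if $\style{R}(\UCP^k(y))=\lambda\,\UCP^k(y)$ with $|\lambda|=1$ and $y\in\M_{\UCP^k}$, the recovery identity gives $y=\style{R}^k(\UCP^k(y))=\lambda^k\UCP^k(y)$, hence $\UCP^k(y)=\lambda^{-k}y$, so $y$ is itself a peripheral eigenoperator of $\UCP^k$ and therefore lies in $\M_{\UCP^\infty}$. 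Taking $\kappa=\kappa(\UCP)\geq 2$ and $x'\in\M_{\UCP^{\kappa-1}}\setminus\M_{\UCP^\infty}$, one concludes that $\UCP^{\kappa-1}(x')$ lies in $\M_{\style{R}^{\kappa-1}}$ (your step two) but not in $\M_{\style{R}^\infty}$, which is what forces $\kappa(\style{R})\geq\kappa(\UCP)$. It is this eigenvalue bookkeeping, not the algebra-embedding picture, that closes the argument; without it your proof is incomplete.
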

\begin{proof}
Let $k\leq\kappa(\UCP)$. As a first result, $\style{R}^k$ inverts $\UCP^k$ on $\M_{\UCP^k}$. To see this, let $x$ be in $\M_{\UCP^k}$. Then $\UCP^k(x)$ is in the image $\UCP(\M_\UCP)$. Since $\style{R}\circ\UCP$ is the identity on $\M_\UCP$, then $\style{R}\UCP^k(x)=\UCP^{k-1}(x)$. Repeating this argument with each $\style{R}$ gives $\style{R}^k\UCP^k(x)=x$. \par
If $p$ is any projection in $\M_{\UCP^k}$ of rank $r$, then $\UCP^k(p)$ is also a projection of rank $r$. Since $\style{R}^k(\UCP^k(p))=p$, then $\style{R}^k$ takes a rank-$r$ projection to a rank-$r$ projection - in other words, $\UCP^k(p)$ is in $\M_{ \style{R}^k}$. \par
By Theorem 2.5 from \cite{miza}, there is a basis of $\M_{\style{R}^\infty}$ consisting of peripheral eigenoperators of $\style{R}$. Let $x$ be a such a basis element for some eigenvalue $\lambda$. Suppose $x$ is the image $\UCP^k(y)$ for some $k<\kappa(\UCP)$ and some $y\in \M_{\UCP}\setminus \M_{\UCP^\infty}$. Then we would have that $\style{R}^k(x)=\style{R}^k(\UCP^k(y))=y$, but we also have that $\style{R}^k(x)=\lambda^kx$. Thus $\UCP^k(y)=\lambda^{-k}y$ meaning $y\in \M_{\UCP^\infty}$, a contradiction. Thus, $x$ is not in the image of $\UCP^k(\M_\UCP\setminus \M_{\UCP^\infty})$. This means that if we take $x'\in \M_{\UCP^{\kappa-1}}\setminus \M_{\UCP^\infty}$, then $\UCP^{\kappa-1}(x')$ is in $\M_{\style{R}^{\kappa-1}}$ but is not in $\M_{\style{R}^\infty}$. Thus, $\kappa(\style{R})\geq\kappa(\UCP)$.
\end{proof}
As an immediate corollary, we have an extension of Theorem \ref{thm:index_factor} for unital error corrections:
\begin{cor}
If $\style{C}$ is the largest $UCC$ for a unital channel $\UCP:\M_d\otimes \M_c\rightarrow \M_d\otimes \M_c$ and $\kappa(\UCP)\geq \max\{2(d-1),2(c-1)\}$, then $\style{C}$ cannot be corrected by any product channel $\style{R}_1\otimes\style{R}_2$. So if the recovery operator is chosen to be a unitary, then it must be a global unitary channel.
\end{cor}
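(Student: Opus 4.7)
The plan is to derive a contradiction using the three key ingredients already in place: Lemma~\ref{lem:ucc_kappa} (which forces $\kappa$ of a recovery to dominate $\kappa$ of the noise channel), Proposition~\ref{prop:kappa_tensor_bound} (which says the multiplicative index of a tensor product of unital channels is the maximum of the two factor indices), and Theorem~\ref{thm:kappa_bound} (which caps the multiplicative index by $2(d-1)$ on $\M_d$). Assume, for contradiction, that the largest UCC of $\UCP$ is corrected by a product channel $\style{R}_1\otimes\style{R}_2$, where $\style{R}_1:\M_d\rightarrow \M_d$ and $\style{R}_2:\M_c\rightarrow \M_c$ are unital channels (they must be unital since $\UCP$ is unital and they invert its action on $\M_\UCP$, which contains $1$). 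This is the assumption to contradict.

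From here the plan is just to chain the three estimates. First, Lemma~\ref{lem:ucc_kappa} gives
\[\kappa(\style{R}_1\otimes\style{R}_2)\geq \kappa(\UCP)\geq \max\{2(d-1),2(c-1)\}.\]
Second, Proposition~\ref{prop:kappa_tensor_bound} rewrites the left-hand side as $\max\{\kappa(\style{R}_1),\kappa(\style{R}_2)\}$, and Theorem~\ref{thm:kappa_bound} applied separately to the two factors gives $\kappa(\style{R}_1)\leq 2(d-1)$ and $\kappa(\style{R}_2)\leq 2(c-1)$, hence
\[\kappa(\style{R}_1\otimes\style{R}_2)=\max\{\kappa(\style{R}_1),\kappa(\style{R}_2)\}\leq \max\{2(d-1),2(c-1)\}.\]
Combining these two inequalities forces the chain to collapse to equality, and a strict-inequality hypothesis on $\kappa(\UCP)$ (the intended reading of the statement for non-trivial $d,c\geq 2$) yields the desired contradiction, ruling out any unital product recovery.

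For the concluding sentence about unitary recoveries, the argument specializes cleanly: any unitary channel has multiplicative domain equal to the whole matrix algebra, so its multiplicative index is $1$. If $\style{R}$ were a product of two unitary channels on $\M_d$ and $\M_c$, then $\kappa(\style{R})=\max\{1,1\}=1$ by Proposition~\ref{prop:kappa_tensor_bound}, while Lemma~\ref{lem:ucc_kappa} forces $\kappa(\style{R})\geq \kappa(\UCP)\geq \max\{2(d-1),2(c-1)\}\geq 2$, an immediate contradiction. Hence any unitary recovery must be a single global unitary on $\M_d\otimes \M_c$ that does not factor through the bipartition.

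The main obstacle, and the only place calling for care, is that the first part of the chain only yields strict contradiction when the hypothesis on $\kappa(\UCP)$ is read strictly; if one reads $\geq$ literally, the inequalities merely collapse to equality $\kappa(\UCP)=\kappa(\style{R}_1\otimes \style{R}_2)=\max\{2(d-1),2(c-1)\}$, which would not by itself contradict the existence of a product recovery. Under the strict reading (which is consistent with the proof of Theorem~\ref{thm:index_factor}, where the analogous ``$\geq d-1$'' assumption gives the factoring obstruction with the needed slack), the corollary follows immediately; otherwise one would need to rule out equality by exhibiting incompatibility between a factor $\style{R}_i$ saturating Theorem~\ref{thm:kappa_bound} and its role as the inverse of $\UCP$ on $\M_\UCP$.
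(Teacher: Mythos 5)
Your argument is essentially the paper's own: the paper's proof is one line --- Lemma~\ref{lem:ucc_kappa} gives $\kappa(\style{R})\geq\kappa(\UCP)\geq\max\{2(d-1),2(c-1)\}$ and Theorem~\ref{thm:index_factor} then rules out a product recovery --- and your chain through Proposition~\ref{prop:kappa_tensor_bound} and Theorem~\ref{thm:kappa_bound} is exactly how Theorem~\ref{thm:index_factor} is itself proved, so you have merely unfolded the citation (your observation that $\style{R}_1,\style{R}_2$ are forced to be unital is a worthwhile point the paper leaves implicit). The boundary issue you flag is genuine, but it is not a defect you introduced: read literally with ``$\geq$'', both your chain and the paper's collapse only to the equality $\kappa(\style{R}_1\otimes\style{R}_2)=\max\{2(d-1),2(c-1)\}$, which is not by itself a contradiction, and the paper never shows the bound of Theorem~\ref{thm:kappa_bound} is unattainable (its largest constructed index is $d$), so the equality case cannot be dismissed by impossibility. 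The corollary therefore needs either a strict inequality in its hypothesis or one extra unit of slack; your unitary-recovery paragraph, by contrast, is airtight, since there $\kappa(\style{R}_1\otimes\style{R}_2)=1<2\leq\max\{2(d-1),2(c-1)\}$ gives a strict gap.
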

\begin{proof}
Let $\style{R}$ be any channel that corrects $\style{C}$. From Lemma \ref{lem:ucc_kappa}, $\kappa(\style{R})\geq\max\{2(d-1),2(c-1)\}$, and so from Theorem \ref{thm:index_factor}, $\style{R}$ cannot be written as a product channel. 
\end{proof}
Note that $\UCP^*$ will also correct $\style{C}$ for $\UCP$, and $\UCP$ will correct $\UCP(\style{C})$ for $\UCP^*$, so we have a final corollary:
\begin{cor}
For a unital channel $\UCP$, $\kappa(\UCP)=\kappa(\UCP^*)$.
\end{cor}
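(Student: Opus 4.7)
The plan is to apply Lemma \ref{lem:ucc_kappa} symmetrically, invoking it once with $\UCP$ as the channel and $\UCP^*$ as the recovery map, then once with the roles swapped.

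Before doing so, I would first check that $\UCP^*$ is itself a unital channel whenever $\UCP$ is, so that Theorem \ref{kribs-spekkens} and Lemma \ref{lem:ucc_kappa} actually apply to it. Trace-preservation of $\UCP$ gives $\Tr(x\UCP^*(1))=\Tr(\UCP(x))=\Tr(x)$ for every $x$, hence $\UCP^*(1)=1$; unitality of $\UCP$ gives $\Tr(\UCP^*(y))=\Tr(\UCP(1)y)=\Tr(y)$, hence $\UCP^*$ is trace-preserving. Complete positivity of $\UCP^*$ is standard. With $\UCP^*$ a unital channel whose adjoint is $\UCP$, Theorem \ref{kribs-spekkens} yields $\M_{\UCP^*}=\rm{Fix}_{\UCP\circ\UCP^*}$, the exact counterpart to $\M_{\UCP}=\rm{Fix}_{\UCP^*\circ\UCP}$.

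For the first inequality, the largest UCC of $\UCP$ is $\M_\UCP=\rm{Fix}_{\UCP^*\circ\UCP}$, so by definition $\UCP^*\circ\UCP$ acts as the identity on $\M_\UCP$; that is, $\UCP^*$ corrects the largest UCC of $\UCP$. Lemma \ref{lem:ucc_kappa} then yields $\kappa(\UCP^*)\geq\kappa(\UCP)$. For the reverse inequality, the largest UCC of $\UCP^*$ is $\M_{\UCP^*}=\rm{Fix}_{\UCP\circ\UCP^*}$, so $\UCP\circ\UCP^*$ is the identity on it; thus $\UCP$ is a recovery map for the largest UCC of $\UCP^*$, and Lemma \ref{lem:ucc_kappa} (applied to $\UCP^*$ with recovery $\style{R}=\UCP$) gives $\kappa(\UCP)\geq\kappa(\UCP^*)$. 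Combining the two inequalities yields the equality.

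I do not anticipate a real obstacle here: the entire argument is driven by the symmetric roles of $\UCP$ and $\UCP^*$ in the identity $\M_\UCP=\rm{Fix}_{\UCP^*\circ\UCP}$, together with the observation that the adjoint of a unital quantum channel is again a unital quantum channel. The only subtle point is cosmetic: the prose preceding the corollary hints that $\UCP$ corrects the specific code $\UCP(\M_\UCP)$, and one could verify separately that $\UCP(\M_\UCP)=\M_{\UCP^*}$ (the inclusion $\subseteq$ is immediate from $\UCP^*\UCP|_{\M_\UCP}=\mathrm{id}$, and the reverse follows from writing $y=\UCP(\UCP^*(y))$ for $y\in\rm{Fix}_{\UCP\circ\UCP^*}$). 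This identification is not strictly needed — what the proof requires is only that $\UCP$ invert $\UCP^*$ on its multiplicative domain, which is exactly the content of $\M_{\UCP^*}=\rm{Fix}_{\UCP\circ\UCP^*}$.
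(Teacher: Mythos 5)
Your proposal is correct and follows the same route as the paper, which proves the corollary only by the remark preceding it (that $\UCP^*$ corrects the largest UCC of $\UCP$ and vice versa) combined with Lemma \ref{lem:ucc_kappa}. Your write-up merely makes explicit the details the paper leaves implicit — that $\UCP^*$ is again a unital channel and that $\M_{\UCP^*}=\rm{Fix}_{\UCP\circ\UCP^*}$ supplies the second application of the lemma — so there is no substantive difference in approach.
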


\section*{Acknowledgements} 
This work was initiated at the University of Regina under the supervision of Dr. D. Farenick and supported in part by an NSERC Discovery Grant (Farenick) and a University of Regina Graduate Student Research Fellowship.\par
The first author acknowledges the NSERC Canadian Graduate Scholarship and funding from the Department of Combinatorics and Optimization at the University of Waterloo, and the second author acknowledges the Post-Doctoral Fellowship of the Department of Pure Mathematics at the University of Waterloo.\par
We also want to thank our referees, whose suggestions made the proofs of Theorem \ref{thm:any_MUPSA} and Proposition \ref{prop:etb_md_struct} much shorter.
\bibliography{jaques-rahaman}
\bibliographystyle{amsplain}

\end{document}